\theoremstyle{definition}
\newcommand{\blind}{0}
\begin{document}

\if1\blind
{\title{Prediction Sets and Conformal Inference with \\ Interval Outcomes}
    \author{}
\maketitle}
\fi
\if0\blind{
    \title{\Large Prediction Sets and Conformal Inference with Interval Outcomes
    \thanks{\fontsize{9}{11} \selectfont 
    \textbf{Liu}: \href{mailto:weiguang.liu@ucl.ac.uk}{weiguang.liu@ucl.ac.uk}; \textbf{de Paula}: \href{mailto:a.paula@ucl.ac.uk}{a.paula@ucl.ac.uk}; \textbf{Tamer}: \href{mailto:elietamer@fas.harvard.edu}{elietamer@fas.harvard.edu}. We thank participants at the 7th International Conference on Econometrics and Statistics, the CeMMAP Econometrics Research Day at UCL, the 2024 INFORMS Annual Meeting, the CUHK Workshop on Recent Development in Econometrics, Stochastic Dominance and Quantile-based Methods in Financial Econometrics Workshop, the 2025 IAAE annual meeting, the 2025 World Congress of the Econometric Society, as well as the seminars at University College London, the University of Cambridge, the University of Manchester and the London School of Economics for their valuable feedback. We thank the generous funding from the UK Research and Innovation (UKRI) under the UK government's Horizon Europe funding guarantee (Grant Ref: EP/X02931X/1) and the Economic and Social Research Council (ESRC) funding through the ESRC Institute for the Microeconomic Analysis of Public Policy (Grant Ref: ES/T014334/1). The Python code for implementing the proposed conformal prediction procedure and replicating the numerical experiments and empirical study is available at \url{https://github.com/lwg342/prediction_interval_outcome}. {We thank Xiaowei Xu for providing the Adzuna job advert data used in \cite{dias2021WorkerMobility}. }} }
    \author{{Weiguang Liu} \\ \small{UCL} \and \'{A}ureo de Paula \\ \small{UCL, IFS and CeMMAP}\and Elie Tamer \\ \small{Harvard University } }
    \date{\today}
    \maketitle
}\fi

\begin{abstract}
\fontsize{11}{12}\selectfont
Given data on a random variable \(Y\), a prediction set with miscoverage level \(\alpha \in (0,1)\) is a set that contains a new draw of \(Y\) with probability \(1-\alpha\). Among all prediction sets satisfying this coverage property, the oracle prediction set is the one with minimal volume. The oracle prediction set offers a complementary view of the distribution of \(Y\), beyond point estimators such as the mean and quantiles, and has attracted considerable interest recently.
This paper develops methods for estimating such prediction sets conditional on observed covariates when \(Y\) is \textit{censored} or \textit{interval-valued}. We characterise the oracle prediction set under partial identification induced by interval censoring and propose consistent estimators for both oracle prediction intervals and more general oracle prediction sets consisting of multiple disjoint intervals. In addition, we apply conformal inference to construct finite-sample valid prediction sets for interval outcomes that remain consistent as the sample size grows, using a conformity score tailored to interval data.
The proposed procedure accounts for irreducible prediction uncertainty due to the stochastic nature of outcomes, modelling uncertainty arising from partial identification, and sampling uncertainty that vanishes as sample size increases.
We conduct Monte Carlo simulations and two empirical applications using UK job postings data and the US Current Population Survey. The results demonstrate the robustness and efficiency of the proposed methods.

    {\bf Keywords:} {Prediction sets; Partial identification; Nonparametric estimation; Conformal Prediction.}
\end{abstract}

\noindent%

\vfill

\onehalfspacing

\section{Introduction}

Interval data are pervasive. Surveys often resort to ``bracketing'' strategies to address item non-response. For example, respondents are often asked whether their money in savings accounts, in investments, or their income falls into a sequence of brackets such as $[\$10K, \$20K]$, $[\$20K, \$50K]$, etc, while also being given the option to provide a point answer to such questions (see \cite{heeringa1995unfolding} and \cite{moore2001using}).
Interval-valued data also accommodate common patterns such as right censoring, where an outcome variable is greater than a specified value (i.e., $Y \in [Y^L, \infty)$); left-censoring, where an outcome variable is less than a certain value (i.e., $Y \in (-\infty, Y^U]$); and data from competing risks, where one typically observes the $\max$ or $\min$ of variables of interest.
Such data typically involves outcomes of interest such as income, wealth, unemployment duration, and so on. This paper provides a new method for predicting an outcome variable of interest $Y$ when we observe a sample of intervals $[Y^L, Y^U]$ such that $Y \in [Y^L, Y^U]$ rather than direct observations of $Y$.

More specifically, suppose we have a random variable $Y$ defined on $\mathcal Y \subset \mathbb{R}$ with distribution $P_Y$. One parameter of interest is a {\it prediction set} $C\subset \mathcal{Y}$ with miscoverage level $\alpha  \in (0,1)$, which is any measurable subset of $\mathcal{Y}$ such that $P_Y( Y \in C) \geq 1-\alpha$. The oracle prediction set is the set with the minimal volume that has this coverage property, which is related to the level set of the density of $Y$ when this density exists. With covariates, we have access to a sample $(X_{i}, Y_{i}) \in \mathcal{X} \times \mathcal{Y}$, $i = 1,\dots,n$, with supports $\mathcal{X} \subset \mathbb{R}^{d}$ and $\mathcal{Y}\subset \mathbb{R}$. We use these data to construct prediction sets $C_n(x)\equiv C(x; X_{1}, Y_{1}, \dots, X_{n},Y_{n}) \subset \mathcal{Y}$ for a given miscoverage level $\alpha$ at $x \in \mathcal{X}$.
It is desirable that such a prediction set $C_n$ is both consistent for the oracle prediction set and has a {\it finite-sample coverage property:}
\begin{equation}\label{sample prediction set}
\Prob(Y_{n+1} \in C_n(X_{n+1})) \geq 1 - \alpha,
\end{equation}
where $\Prob$ is the distribution over $(X_{i}, Y_i)$ for $i= 1,\dots, n+1$.
This paper constructs such a set when $Y$ is interval-valued. The exact finite-sample coverage property that our construction provides is given in more detail in \autoref{sec:conformal}.

Such an ``oracle prediction'' exercise is familiar in the literature, especially in the context of forecasting in linear regressions (see \cite{diebold2015ForecastingEconomics}). For example, assuming that the outcome $Y$ satisfies $Y_{i}\mid X_{i} \sim \mathcal N(X_{i}'\beta, \sigma^2)$ with known $\sigma^{2}$, the population oracle prediction set is $X_{n+1}'\beta \pm 1.96\sigma$.
Given data $\mathcal{D} = \cqty{(X_1, Y_1), \ldots, (X_n, Y_n)}$, a consistent estimator for such a set is $X_{n+1}'\hat\beta \, \pm \,  1.96\sigma$, where $\hat \beta$ is the least squares estimator of $\beta$. On the other hand, ``an operational density forecast that accounts for parameter uncertainty'' is (\cite{diebold2015ForecastingEconomics}): 
\[
N\left(X_{n+1}' \hat{\beta}, \ {\sigma}^2 \left(1 + X_{n+1}' (\mathbf{X}'\mathbf{X})^{-1} X_{n+1} \right)\right).
\]
This ``operational density'' summarises both the sampling uncertainty in estimating $\hat \beta$ and the disturbance uncertainty arising from the intrinsic randomness in the conditional distribution $Y_{n+1}\mid X_{n+1}$. One can use it to derive a forecast density with a particular coverage property.

The prediction set is closely related to quantile methods if we have direct observations of the outcome variable $Y$. If the oracle prediction set is a single interval, it is given by $C(X_{n+1}) \equiv [\tau_{0}(X_{n+1}), \tau_{1}(X_{n+1})]$, where $\tau_0(X_{n+1})$ and $\tau_1(X_{n+1})$ are appropriately chosen quantiles for $Y$ given $X$ so that one can obtain a coverage rate $(1-\alpha)$ with minimal volume. Taking the Gaussian specification above, with $\alpha=5\%$ and assuming $\beta$ is known, $\tau_0(X_{n+1}) = X_{n+1}'\beta - 1.96\sigma$ and $\tau_1(X_{n+1}) = X_{n+1}'\beta + 1.96\sigma$, where $1.96\sigma$ is the $95\%$-quantile of $|\varepsilon_{n+1}| \equiv |Y_{n+1}-X_{n+1}'\beta|$. When $\sigma$ is known as presumed above, but $\beta$ is not, estimates for these can be obtained as $\hat \tau_0(X_{n+1}) = X_{n+1}'\hat \beta - 1.96\sigma$ and $\hat \tau_1(X_{n+1}) = X_{n+1}'\hat \beta + 1.96\sigma$, where $\hat \beta$ is the OLS or ML estimate of $\beta$. This nonetheless ignores the parameter uncertainty in estimating $\beta$.
This is not a problem if we are just interested in interval consistency, but  as noted in \cite{romano2019ConformalizedQuantile}, replacing $\beta$ with $\hat \beta$ ``is not guaranteed to satisfy the coverage statement (\dots) when $C(X_{n+1})$ is replaced by the estimated interval $\hat C(X_{n+1})$.''  To address this shortcoming, \cite{romano2019ConformalizedQuantile} employ conformal inference and modify the interval $\hat C(X_{n+1}) = [\hat \tau_0(X_{n+1}),\hat \tau_1(X_{n+1})]$ by adjusting the endpoints to $\hat \tau_0(X_{n+1})-\vartheta_{1-\alpha}$ and $\hat \tau_1(X_{n+1})+\vartheta_{1-\alpha}$.  Here, $\vartheta_{1-\alpha}$ is the $(1-\alpha)(1+1/n_2)$-th empirical quantile of the set of conformity scores $\cqty{\max\{\hat \tau_0(X_i)-Y_i,Y_i-\hat \tau_1(X_i)\}: i\in \mathcal{I}_2}$ estimated in a calibration sub-sample $\mathcal{I}_2$ with sample size $n_2$.  This corrects for under- or over-coverage and allows for the construction of prediction intervals satisfying desired coverage levels in finite samples using quantile estimates.  

However, several questions and limitations surface when the outcome variable $Y$ is interval-censored. First, the definition of the oracle prediction set needs to be extended in partially identified models. We also note that although we can construct a valid prediction interval using quantile estimates, such intervals are not necessarily minimal in volume without assumptions on the shape of the distribution (such as the Gaussian assumption in the example above), and this approach is not applicable when the oracle prediction set consists of multiple disjoint intervals. Furthermore, a new conformity score is needed to accommodate interval-valued observations. Our method addresses these limitations and offers a complementary perspective on the distribution of outcomes beyond existing quantile-based methods.

We first define the oracle prediction sets under partial identification. We then characterise the oracle prediction sets under interval censoring via a feasible optimisation problem that relates the oracle prediction sets to the conditional distribution of the observed upper and lower brackets, $(Y^L, Y^U)$. We use a kernel-based conditional distribution function estimator as a basis for the construction of a consistent set estimator for the oracle prediction set. The oracle prediction set may be the union of disjoint intervals, for instance, when the prediction density is multimodal. We provide an approach that allows for the prediction set estimator to be a union of disjoint intervals with the prescribed coverage. In cases with a multimodal density, a union of intervals can have a smaller volume than one interval.
We also allow for both random censoring with continuously distributed brackets and fixed censoring with discretely distributed brackets, such as those arising from ``unfolding brackets'' or specific survey schemes, see \cite{heeringa1995unfolding} and \cite{moore2001using}. See also the bracketing approach to outcomes using the Health and Retirement Survey in \cite{manski2002InferenceRegressions}.

We use recent developments on conformal inference to propose a conformal prediction set $\tilde{C}_{n}$ using a sample of exchangeable data, which has finite-sample validity while also enjoying the asymptotic property of being consistent for the oracle prediction set under partial identification. Our conformal procedure is designed for interval-valued outcomes, which requires the design of a particular conformity score function to handle interval data. In particular, the score function is non-positive if and only if the constructed prediction set contains the interval outcome; this conformity score function therefore accounts for both over- and under-coverage, as in \cite{romano2019ConformalizedQuantile}. 
As a competing alternative prediction set estimator, we show in the appendix that a conformal quantile regression method provides a valid prediction interval for interval-valued outcomes, which can be easier to implement in practice, but in general will not be efficient in terms of the volume of the prediction set.
In addition, quantile regression-based methods are not suitable when the prediction set is not a single interval.

We illustrate our construction using a set of Monte Carlo experiments, where we consider both fixed and random censoring. Our proposed prediction set is shown to have robust finite sample coverage properties and to have a smaller volume compared with the alternative prediction sets constructed with quantile regression methods.

We apply our methods to two real-world datasets. The first is the UK job advert dataset from Adzuna, an online UK job advert aggregator. Adzuna data are widely used as an indicator of UK economic activity. One feature of these data is that many job ads report wages as ranges. We use these data to construct predictive wage intervals as a function of UK location and job categories. The second application uses the Current Population Survey (CPS) data. Among people who report their income, a significant fraction of the observations are interval-valued. Interval-valued observations are sometimes discarded or handled by imputation methods that rely on assumptions such as missing at random and may cause bias if misspecification happens. Our proposed prediction set is robust in that we do not require a specification of the censoring mechanism.

\subsection*{Literature review}

The paper combines insights from the partial identification literature and conformal inference, which has recently been gaining significant interest. The partial identification literature (see \cite{manski2003PartialIdentification}) takes the view that one should try to make inferences under minimal assumptions, even if these assumptions do not allow for point identification of the parameter of interest. The motivation for such a view is that inferences are less credible and conclusions are suspect when they rely on non-testable or non-credible assumptions. In our context for example, if we are interested in $\beta = E[Y]$ and we observe $[Y^L, Y^U]$ rather than $Y$, then a partial identification approach would just say that, given the data, the identified set for $\beta$  (or all we can learn about $\beta$ given the data) is the interval $\left[E[Y^L], E[Y^U]\right].$ Imputation based inference uses models to predict a $Y^P$ within $[Y^L, Y^U]$ and takes $E[Y^P]$ as a proxy for $\beta = E[Y]$. The credibility of this exercise depends on the imputation model, i.e. the set of assumptions it relies on. In the context of interval data, \cite{manski2002InferenceRegressions} study identification of linear regression coefficients when the data (on regressors or outcomes) is interval valued using a partial identification approach. See \cite{tamer2010PartialIdentification}, \cite{molinari2020Chapter5}, and \cite{kline2023RecentDevelopments} for reviews on the developments in the partial identification literature.

In order to obtain a prediction interval that has finite-sample coverage guarantees, we use conformal inference, which is a distribution-free method that builds upon the idea of exchangeability and permutation tests (\cite{vovk2005AlgorithmicLearning}, \cite{vovk2009OnlinePredictive}).
The idea of conformal inference has become increasingly popular in the machine learning literature since it requires no assumptions on the underlying distribution other than the observations being exchangeable, and it also allows the use of any prediction protocol that treats the data symmetrically and hence preserves exchangeability.
\cite{barber2023ConformalPrediction} discusses the validity of conformal inference when the data is not i.i.d. and the predictor is not symmetric.
Conformal inference has been applied to obtain valid distribution-free prediction sets \citep{lei2013DistributionfreePrediction}, in non-parametric regression \citep{lei2014DistributionfreePrediction} and in high-dimensional settings \citep{lei2018DistributionfreePredictive}.
\cite{romano2019ConformalizedQuantile} considers conformal quantile regression to construct prediction intervals instead of conditional mean regression.
\cite{sadinle2019LeastAmbiguous} proposes a way to construct set-valued classifiers.
\cite{chernozhukov2021DistributionalConformal} considers the construction of valid prediction intervals with the distributional regression method. We add to this literature by allowing the outcome observations to be interval-censored or interval-valued.  Relatedly,\cite{candes2023ConformalizedSurvival} considers conformal inference for survival analysis. Using our notation, this amounts to observing either $Y_i$ or the interval $(Y_i^L, \infty)$ for each $i$. They propose a weighted conformal inference procedure \citep{tibshirani2019ConformalPrediction} under the assumption that censoring is conditionally independent of her outcome given predictors $X_i$. We take a different approach and ask how much information we can extract from a sample with interval-valued outcomes. For example, our procedure allows for the case when all outcomes are interval censored (i.e., $Y_i$ is never observed) and the procedure uses the information from the upper bracket $Y^U_i$ when it is not $\infty$. Also, our procedure does not require the conditional independence assumption, which can be violated in many empirical contexts.

There are several related areas connected to our theoretical development.
The classical way to construct estimators for the oracle prediction set relies on the level set estimation (see, for example, \cite{wilks1941DeterminationSample}, \cite{polonik1995MeasuringMass} and \cite{samworth2010AsymptoticsOptimal}).
The uniform consistency of the nonparametric estimator of the conditional distribution is an extension of classical results on the uniform convergence rate of the kernel-based density and regression estimators, such as \cite{einmahl2000EmpiricalProcess}, \cite{gine2002RatesStrong} and \cite{hansen2008UniformConvergence}. This paper is also related to the literature on quantile regression (see \cite{koenker2017QuantileRegression} for a review).

\subsection*{Outline}
This paper is organised as follows.
In \autoref{sec:setup}, we provide the setup and definitions of optimal sets and approaches to obtaining estimators of such sets with censoring. \autoref{sec:consistency} provides a study of the consistency of such sets using kernel estimators of the conditional distribution. \autoref{sec:conformal} provides results on finite sample coverage using a conformal inference procedure under interval censoring. \autoref{sec:simulation} provides some Monte Carlo evidence, and \autoref{sec:empirical} applies our methods to UK job postings data as well as the US CPS (Current Population Survey) Data. \autoref{sec:conclusion} concludes. The proofs for the theorems in the main text, as well as some further results and comments, are collected in the Appendix.

\section{Setup and definitions }\label{sec:setup}
Suppose there is a random sample \(\bar{\mathcal{D}} = \{\bar{Z}_{i} = (X_{i}, Y_{i}, Y_{i}^{L}, Y_{i}^{U}) \in \mathbb{R}^{d+3}: i \in \mathcal{I}\}\) with sample size \(\abs{\mathcal{I}} = n\) drawn independently from a joint distribution \(P\). We only observe the sub-vector \(Z_{i} = (X_{i}, Y_{i}^{L}, Y_{i}^{U})\), and \(Y_{i}\) is not directly observed but assumed to satisfy the condition \(Y_{i}^{L} \leq Y_{i} \leq Y_{i}^{U}\).
For each $i\in \mathcal{I}$, the predictor \(X_i\) is a \(d\)-dimensional random vector with support \(\mathcal{X} \subset \mathbb{R}^{d}\). The latent outcome \(Y_{i}\) and the observed lower and upper bounds \(Y_{i}^{L}\) and \(Y_{i}^{U}\) are one-dimensional.
Let \(\bar{Z} = (X, Y, Y^L, Y^{U})\) denote a generic sample from the joint distribution \(P\).
We will use notation such as $P_Y$ to denote the marginal distribution of $Y$, and $P_{Y\mid X}$ to denote the distribution of $Y$ conditional on $X$.
Note here that the joint distribution \(P_{X, Y^L, Y^U}\) of $(X,Y^{L}, Y^{U})$ is identified given the observed sample \(\mathcal{D} = \cqty{Z_ i = (X_i, Y_i^L, Y_i^U) \in \mathbb{R}^{d+2}: i \in \mathcal{I}}\), but the joint distribution \(P_{X, Y}\) of predictors $X$ and latent outcome $Y$ is not point identified given the observed sample $\mathcal D$ without restrictions on the censoring mechanism.

A prediction set is defined as a set $C\subset \mathcal{X} \times \mathcal{Y}$, with the interpretation that, given a realisation of the predictor $X = x$, we predict that $Y$ is likely to fall inside the section $C(x) = \cqty{y : (x, y) \in C}$. In the absence of interval censoring, estimation of a prediction set that satisfies certain optimality criteria (i.e. the minimal prediction set with coverage guarantees) has been studied in the literature; see \cite{lei2014DistributionfreePrediction}, \cite*{barber2021LimitsDistributionfree} among others. As we discuss later in this section, the oracle prediction set without interval censoring is related to the upper level set of the density function of $Y$ given $X$ under point identification.

The situation is significantly different when we are faced with interval censoring and $P_{X,Y}$ is partially identified. In the rest of this section, we first define an optimality criterion for a prediction set under censoring, and analyse why the classical oracle prediction set (which corresponds to an upper level set) is no longer applicable under censoring. We then propose a feasible estimation procedure for the oracle prediction set. We discuss the potential reasons why the procedure can result in conservative prediction sets and show that the proposed procedure does not incur any unnecessary conservativeness.

There are two criteria for an oracle prediction set: validity and efficiency \citep[see][]{vovk2009OnlinePredictive,lei2013DistributionfreePrediction}.
Let $\mathcal{P}_{I}$ be the identified set of the joint distribution of $(X,Y)$ under censoring. That is, $\mathcal{P}_{I}$ is the set of joint distributions of $(X,Y)$ that are compatible with the joint distribution $P_{X,Y^{L},Y^U}$ and some censoring mechanism.
A prediction set $C$ is {valid} if it satisfies the following coverage properties for a new sample $(X_{n+1}, Y_{n+1})$ drawn from the same distribution $P$ at a given miscoverage level $\alpha$.
\begin{definition}[Validity under Partial Identification]\label{def:validity}
We say a prediction set \(C\) is \textit{marginally valid under partial identification} with miscoverage level $\alpha$, if,
\begin{equation}\label{eq:marginal validity under partial identification}
\inf_{P\in \mathcal{P}_I} P \pqty{Y_{n+1} \in C(X_{n+1})} \geq 1 - \alpha,
\end{equation}
and \(C\) is \textit{conditionally valid under partial identification} with miscoverage level $\alpha$ if for almost everywhere \(x\in \mathcal{X}\),
\begin{equation}\label{eq:conditional validity under partial identification}
\inf_{P\in \mathcal{P}_I} P \pqty{Y_{n+1} \in C(x) \mid X_{n+1} = x} \geq 1 - \alpha.
\end{equation}
\end{definition}
When the model is point identified, \(\mathcal{P}_{I}=\cqty{P_{X,Y}}\) is a singleton set, the conditions in Equations \eqref{eq:marginal validity under partial identification} and \eqref{eq:conditional validity under partial identification} reduce to the standard conditions of validity, see for example, \cite{lei2014DistributionfreePrediction}. We will simply refer to these two conditions as {marginal validity} and {conditional validity} henceforth.

It is natural to require a prediction set to be valid, so that we have confidence it will contain the true value of the outcome with a certain probability. However, the requirement of validity by itself is vacuous.
For example, the prediction set $C$ such that $C(x) =(-\infty, \infty)$, for all $x \in \mathcal{X}$, is technically valid at any miscoverage level, but it is also uninformative. Therefore, our target parameter of interest is an \textit{oracle prediction set}, denoted by $C^{*}_{\mathcal{P}_I}$, which is defined as the minimal-volume prediction set that satisfies the validity conditions in Definition~\ref{def:validity}. For simplicity, we will omit the subscript and write $C^* =C^{*}_{\mathcal{P}_I}$ whenever the dependence on the identified set ${\mathcal{P}_I}$ is clear. Since the minimal-volume prediction set is intuitively the most informative among all valid prediction sets, this property is also called \textit{efficiency}.\footnote{This notion of efficiency, \citep[see][]{vovk2009OnlinePredictive,lei2013DistributionfreePrediction,lei2014DistributionfreePrediction} differs from the usual definition, where an estimator is considered efficient if it has the smallest variance within a class of estimators. However, the concepts are related, as an estimator with a smaller variance generally leads to the construction of tighter and more informative confidence intervals.}

For interpretability, we will focus on the prediction sets such that $C(x)$ takes the form of a union of a (potentially unknown) number of disjoint intervals for each $x\in \mathcal{X}$.  Specifically, we assume that $C(x) \in \mathcal{C}$, $P_X$-almost surely, where
\begin{equation*}
\mathcal{C} = \cqty{\bigsqcup_{m=1}^{M}[a_{m}, b_{m}]: a_{m} < b_{m} < a_{m+1}, M < \infty}
\end{equation*}
denotes the collection of all finite disjoint unions of closed intervals, and the symbol $\sqcup$ indicates that the union is disjoint. The following oracle prediction set under partial identification will be our estimation target.
\begin{definition}[Oracle prediction Set under Partial Identification]\label{def: optimal}
The \textit{oracle prediction set under partial identification} for a given miscoverage level $\alpha$ is defined as \(C^{*} = C^{*}_{\mathcal{P}_{I}}\subset \mathcal{X}\times \mathbb{R}\) which solves the following problem for $P_{X}$-almost surely \(x\in \mathcal{X}\),
\begin{equation}\label{eq:level set union}
C^*(x) = \arg \min_{C(x)\in \mathcal{C}} \mu(C(x)) \mathtext{s.t.} \inf_{P\in \mathcal{P}_I} P\pqty{Y\in C(x)\mid X = x} \geq 1 - \alpha,
\end{equation}
where  \(\mu\) denotes the Lebesgue measure.
Since conditional validity implies marginal validity, the oracle prediction set will also be marginally valid under partial identification.
\end{definition}

Before proposing a feasible estimation procedure for the oracle prediction set $C^{*}$, we explain why the existing estimation methods are not directly applicable for prediction sets in the case of interval outcome observations.
This analysis highlights the challenges posed by partial identification and the difference between our approach and the classical level set method studied in the literature.

When the joint distribution of $(X,Y)$ is point identified, and $\mathcal{P}_{I} =\cqty{P_{X,Y}}$, it is well known in the literature that the oracle prediction set is related to the upper level set of the conditional density of $Y$ assuming this density is well defined.
Let $p(y| x)$ denote the conditional density of $Y$ at $y$ given $X = x$. An upper level set for $p(y\mid x)$ with threshold $\lambda$ at $x \in \mathcal{X}$ is defined as,
\begin{equation*}
L(x, \lambda) = \cqty{ y: p(y\mid x) \geq \lambda}.
\end{equation*}
Let $\lambda_{x}^{\alpha}$ be chosen such that  $$\int \indicator{p(y\mid x) \geq \lambda_{x}^{\alpha}} p(y\mid x) \dd y = 1 -\alpha. $$
The oracle prediction set under point identification is $C^{*}_{P_{Y|X}}=\cqty{(x, y): x\in \mathcal{X}, y\in L(x, \lambda_{x}^{\alpha})}$, which is both conditionally and marginally valid (see \cite{lei2014DistributionfreePrediction}). For a given $x$, it collects the values of $y$ with sufficiently high density and intuitively assembles the set of $y$'s with the highest probability for a given volume.  Once this is calibrated to yield the desired miscoverage level, one obtains a minimal-volume set.
Estimating the oracle prediction set $C^{*}_{P_{Y|X}}$ therefore reduces to the problem of estimating the upper level sets $L(x, \lambda)$ of the conditional density $p(y \mid x)$ and selecting an appropriate threshold parameter.

However, this characterisation of the oracle prediction set in terms of level sets is no longer applicable under interval censoring, as $P_{Y\mid X}$ is not identified. In fact, the next proposition states that no meaningful bounds can be obtained for the level set of the conditional density of $Y$ given $X$ under interval censoring.
For brevity, we temporarily drop the conditioning predictors $X$ in the following proposition. (It can be interpreted as conditional on $X =x$ for $x \in \mathcal{X}$.)
\begin{proposition}\label{prop: level set not identified}
Suppose $Y$ is interval valued, for any $\lambda <\infty$  and any $y\in [a,b]$, where $a<b$ and $(a,b) \in \mathbb{R}^{2}$ is in the support of $(Y^{L}, Y^{U})$, there exists some $P'\in \mathcal{P}_{I}$, such that $y\in L'(\lambda)$, where $L'(\lambda)$ is the upper level set for the density $p'(y)$ of $P'$.
\end{proposition}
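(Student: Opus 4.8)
The plan is to exploit the fact that interval censoring leaves the conditional density of $Y$ essentially unconstrained: whatever finite threshold $\lambda$ is prescribed, one can redistribute the latent mass compatible with brackets near $(a,b)$ so as to build a spike of height exceeding $\lambda$ at any prescribed $y\in[a,b]$. Since $(a,b)$ lies in the support of $(Y^L,Y^U)$ and, as in the statement, we work with the law of $(Y,Y^L,Y^U)$ only, the open box $B_\varepsilon=(a-\varepsilon,a+\varepsilon)\times(b-\varepsilon,b+\varepsilon)$ carries positive mass $q_\varepsilon:=\Prob((Y^L,Y^U)\in B_\varepsilon)>0$ for every $\varepsilon>0$. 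Because $a<b$, for $y$ in the interior of $[a,b]$ we may fix $\varepsilon$ small enough that every bracket realisation in $B_\varepsilon$ satisfies $Y^L<y<Y^U$ with room to spare; the endpoint cases $y\in\{a,b\}$ are treated by the same device with a one-sided interval abutting the endpoint, which I comment on below.

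Now define $P'$ by keeping the observed marginal law of $(Y^L,Y^U)$ unchanged and prescribing the conditional law of $Y$ given the bracket as follows: if $(Y^L,Y^U)\in B_\varepsilon$, draw $Y$ uniformly on $[y-\delta,y+\delta]$; otherwise draw $Y$ uniformly on $[Y^L,Y^U]$. Choose $\delta>0$ small enough that $[y-\delta,y+\delta]\subseteq[Y^L,Y^U]$ whenever $(Y^L,Y^U)\in B_\varepsilon$ (possible by the previous paragraph) and also $\delta<q_\varepsilon/(2\lambda)$. By construction $P'$ reproduces the observed joint distribution of $(Y^L,Y^U)$ and places $Y$ inside $[Y^L,Y^U]$ almost surely, so $P'\in\mathcal{P}_I$; it admits a density $p'$ (assuming, as is implicit throughout, that the observed brackets are almost surely non-degenerate so the uniform fills are well defined; any degenerate brackets do not occur near $(a,b)$ as $a<b$ and so play no role). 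The first mixture component deposits total mass $q_\varepsilon$ uniformly on the length-$2\delta$ interval $[y-\delta,y+\delta]$, hence contributes exactly $q_\varepsilon/(2\delta)$ to the density at $y$, while the second component contributes a nonnegative amount there; therefore
\begin{equation*}
    p'(y)\ \ge\ \frac{q_\varepsilon}{2\delta}\ >\ \lambda ,
\end{equation*}
so $y\in L'(\lambda)$, which is the assertion.

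The one point requiring care is the containment $[y-\delta,y+\delta]\subseteq[Y^L,Y^U]$ on the positive-probability event $B_\varepsilon$, which is what forces $y$ to sit strictly inside the typical bracket: for $y$ interior this is automatic once $\varepsilon$ is small, but for $y\in\{a,b\}$ one should instead place $Y$ uniformly on a one-sided interval flush with the endpoint (e.g.\ on $[Y^L,Y^L+2\delta]$ when $y=a$), and then either restrict to the sub-event on which the relevant bracket endpoint lies on the correct side of $y$, or approximate $y$ by interior points and invoke one-sided continuity of the constructed density. I expect this boundary bookkeeping to be the main---though entirely routine---obstacle; the core argument, spiking the density at $y$ with a positive-probability slab of straddling brackets, is a short soft argument that interacts with $\lambda$ only through the final choice of $\delta$.
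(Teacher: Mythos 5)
Your construction is correct and is essentially the paper's own argument: both proofs isolate a positive-mass set of bracket realisations near $(a,b)$ whose intervals contain a neighbourhood of $y$, redistribute that mass uniformly over an interval of length of order $\text{mass}/\lambda$ around $y$ to force $p'(y)>\lambda$, and complete the coupling on the remaining brackets with an arbitrary admissible fill (the paper sets $Y=Y^{L}$ via its $P_{\pi/2}$ device, you use the uniform fill). The only substantive difference is that the paper's shrinking set is the one-sided rectangle $\{Y^{L}\in[a-1/k,a],\,Y^{U}\in[b,b+1/k]\}$, all of whose brackets contain $[a,b]$ and hence every $y\in[a,b]$, which largely sidesteps the endpoint bookkeeping you flag for $y\in\{a,b\}$ (and which the paper itself treats no more carefully than you do).
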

Proposition~\ref{prop: level set not identified} states that for any point $y \in [a, b]$ such that $(Y^{L}, Y^{U})$ has positive density at $(a, b) \in \mathbb{R}^2$, we can find a random variable $Y'$ that is observationally equivalent to $Y$ under censoring, and this $Y'$ admits a level set that contains $y$ for any threshold $\lambda$.
This is possible because one can adjust the density of $Y'$ at any point $y$ satisfying the stated conditions to be arbitrarily large while ensuring that this density remains within the identified set $P_I$.

The following two lemmas characterise the identified set of the conditional distribution of $Y$ given $X$ under interval censoring. Lemma \ref{lemma: identified set} relates the sharp identified set of the conditional distribution of $Y$ given $X$ to the {conditional distribution} of $(Y^L, Y^U)$, which can be directly identified and estimated from the data, and provides a feasible lower bound for the restriction in Equation (\ref{eq:level set union}). Lemma \ref{lemma: inf P(y) = P(yl,yu)} shows that the lower bound is tight, and we are not unnecessarily conservative in constructing the oracle prediction set.

\begin{lemma}[Theorem SIR-2.3 in \cite{molinari2020Chapter5}]\label{lemma: identified set}
The sharp identified set for the conditional distribution of \(Y\) given \(X\), under interval censoring is given by
\begin{equation}\label{eq:identified set}
\cqty{P_{Y\mid X}: P_{Y\mid X}(Y \in [t_{0},t_{1}]\mid X = x) \geq P\pqty{[Y^{L},Y^U] \subset [t_{0},t_{1}] \mid X = x}, \forall t_{0} \leq t_{1}}.
\end{equation}
\end{lemma}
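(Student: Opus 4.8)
The plan is to prove the two inclusions that pin down the sharp set: that every conditional law $P_{Y\mid X}$ attainable under some censoring mechanism obeys the displayed inequalities (necessity, the outer direction), and that every $P_{Y\mid X}$ obeying them for $P_X$-a.e.\ $x$ is rationalised by some admissible joint law (sufficiency, the sharpness direction).

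Necessity is immediate from the constraint $Y^L\le Y\le Y^U$. If $P_{Y\mid X}$ lies in the identified set, there is a joint law of $(X,Y,Y^L,Y^U)$ with the observed marginal on $(X,Y^L,Y^U)$, with $P_{Y\mid X}$ as the conditional of $Y$ given $X$, and with $Y^L\le Y\le Y^U$ almost surely. On the event $\{[Y^L,Y^U]\subseteq[t_0,t_1]\}$ one has $t_0\le Y^L\le Y\le Y^U\le t_1$, hence $Y\in[t_0,t_1]$; conditioning on $X=x$ yields $P_{Y\mid X}(Y\in[t_0,t_1]\mid X=x)\ge P([Y^L,Y^U]\subseteq[t_0,t_1]\mid X=x)$ for a.e.\ $x$ and all $t_0\le t_1$.

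For sufficiency I would fix $x$, write $\nu_x$ for the observed conditional law of the random interval $\mathbf K=[Y^L,Y^U]$ and $\mu_x=P_{Y\mid X}(\cdot\mid X=x)$, and first upgrade the hypothesis from closed intervals to open sets. Monotone limits in the interval inequalities give $\mu_x((a,b))\ge\nu_x(\mathbf K\subseteq(a,b))$; since $\mathbf K$ is non-empty, compact and connected, $\{\mathbf K\subseteq G\}=\bigsqcup_k\{\mathbf K\subseteq(a_k,b_k)\}$ for any open $G=\bigsqcup_k(a_k,b_k)$, so $\mu_x(G)=\sum_k\mu_x((a_k,b_k))\ge\sum_k\nu_x(\mathbf K\subseteq(a_k,b_k))=\nu_x(\mathbf K\subseteq G)$. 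This is exactly Artstein's condition (a form of Strassen's theorem) for $\mu_x$ to be the distribution of a measurable selection of $\mathbf K$; such a selection provides a $Y$ with $Y^L\le Y\le Y^U$ a.s.\ and $Y\sim\mu_x$, which rationalises $P_{Y\mid X}(\cdot\mid X=x)$.

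The step I expect to be the main obstacle is gluing the per-$x$ constructions into a single joint law, i.e.\ choosing the conditional coupling kernel $x\mapsto\pi_x$ jointly measurably so that $(X,Y,Y^L,Y^U)$ carries the required observed marginal. I would handle this via a parametrised version of Artstein's/Strassen's theorem, or concretely by realising $(X,Y^L,Y^U)$ on a fixed probability space, invoking a measurable selection theorem to pick $\pi_x$ in a Borel way, and then disintegrating; a direct ``water-filling'' coupling built from the conditional c.d.f.\ of $Y\mid X$ is an alternative that sidesteps the random-set machinery, but measurability in $x$ is the delicate point either way. I would close with the routine remarks that every statement holds only up to $P_X$-null sets and that the set in \eqref{eq:identified set} is weakly closed — both immediate from the reduction to the connected random-interval structure.
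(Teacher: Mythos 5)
Your argument is correct in outline, but note that the paper does not prove this lemma at all: it is imported verbatim as Theorem SIR-2.3 of \cite{molinari2020ChapterMicroeconometrics}, so there is no in-paper proof to compare against. What you have written is essentially the standard random-set proof that underlies the cited theorem: necessity from $Y^L\le Y\le Y^U$, and sharpness by verifying Artstein's (Strassen-type) selection inequality, using the fact that $[Y^L,Y^U]$ is connected so that containment in an open set decomposes over its connected components and it suffices to check the dominance condition on intervals. Two caveats if you wanted to make this self-contained rather than a citation: (i) the gluing across $x$ that you flag is the real content in the conditional version --- one needs a conditional Artstein theorem or a jointly measurable selection of the couplings, which is exactly what the random-set literature the paper cites provides, so your sketch there is a plan rather than a proof; and (ii) the paper allows $Y^U=+\infty$ (right censoring), so the compactness used in your closed-to-open limit step and the statement ``for all $t_0\le t_1$'' should be read with $t_1$ ranging over the extended reals, otherwise the constraints carry no information about right-censored mass. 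Neither point changes the verdict: your route is the standard one behind the cited result, not a genuinely different argument.
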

Lemma \ref{lemma: identified set} suggests that we could implement a feasible version of Equation \eqref{eq:level set union}, by replacing the optimisation constraint $\inf_{P\in \mathcal{P}_I} P\pqty{Y\in C(x)\mid X = x} \geq 1 - \alpha$ with $$P\pqty{[Y^{L},Y^U] \subset C(x)\mid X = x} \geq 1 - \alpha.$$ This implies the original constraint, since for $C \in \mathcal C$, we can write $C = \sqcup_{m=1}^M [a_{m}, b_{m}]$, and
\begin{align*}
\inf_{P\in \mathcal{P}_I} P\pqty{Y\in C\mid X = x} &\geq \sum_m \inf_{P\in \mathcal{P}_I}P\pqty{Y\in [a_m, b_m] \mid X = x}  \\ &\geq \sum_m P\pqty{[Y^{L},Y^U] \subset [a_m, b_m] \mid X = x} \\
&= P\pqty{[Y^{L},Y^U] \subset C\mid X = x}
\end{align*}
for a disjoint union of intervals $C\in \mathcal C$.  As a result, a prediction set that satisfies the feasible version of the constraint will have a coverage guarantee since
\begin{align*}
P_{Y\mid X}(Y\in C\mid X =x) &\geq \inf_{P\in \mathcal{P}_I} P\pqty{Y\in C\mid X = x} \\
&\geq P\pqty{[Y^{L},Y^U] \subset C\mid X = x} \geq 1 - \alpha.
\end{align*}
A prediction set $C$ is conservative if one of the three inequalities is strict. The first inequality is due to the partial identification of the joint distribution of $(X,Y)$ and the last inequality is determined by the distribution of $(X, Y^{L}, Y^{U})$. Both are intrinsic properties and cannot be improved without imposing additional restrictions. Lemma \ref{lemma: inf P(y) = P(yl,yu)} shows that for $C \in \mathcal{C}$, the second inequality is an equality and hence implementing the feasible version of the optimisation problem induces no additional conservativeness.
\begin{lemma}\label{lemma: inf P(y) = P(yl,yu)}
For $P_X$-almost everywhere $x\in \mathcal{X}$, and any $C \in \mathcal{C}$, we have under interval censoring,
\begin{equation*}
\inf_{P\in \mathcal{P}_I} P\pqty{Y\in C\mid X = x}= P\pqty{[Y^{L},Y^U] \subset C\mid X = x}.
\end{equation*}
\end{lemma}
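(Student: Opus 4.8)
The inequality ``$\ge$'' has already been established in the display immediately preceding the statement, as a consequence of Lemma~\ref{lemma: identified set} and the disjointness of the intervals composing $C$. The plan is therefore to prove the reverse inequality
\[
\inf_{P\in\mathcal{P}_I} P(Y\in C\mid X=x) \le P\big([Y^L,Y^U]\subset C\mid X=x\big)
\]
for $P_X$-almost every $x$; for this it suffices to exhibit a single joint law $P^\ast\in\mathcal{P}_I$ whose conditional probability $P^\ast(Y\in C\mid X=x)$ equals the right-hand side for $P_X$-a.e.\ $x$, since the infimum is then squeezed between the two sides.

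First I would construct $P^\ast$ as the law of $(X,Y^\ast,Y^L,Y^U)$, where $(X,Y^L,Y^U)$ keeps its observed distribution and $Y^\ast=g(Y^L,Y^U)$ for a measurable map $g$ with three properties: $g(y^L,y^U)\in[y^L,y^U]$ always; $g(y^L,y^U)\notin C$ whenever $[y^L,y^U]\not\subset C$; and $g(y^L,y^U)=y^L$ whenever $[y^L,y^U]\subset C$. Such a $g$ can be written down explicitly. Since $C=\sqcup_{m=1}^M[a_m,b_m]$ with $M<\infty$, the complement $C^c$ is a finite union of nonempty open intervals (the ``gaps'' of $C$, together with the two unbounded ends), and on the event $[y^L,y^U]\not\subset C$ one can let $g(y^L,y^U)$ be the midpoint of $[y^L,y^U]$ intersected with the first gap that $[y^L,y^U]$ meets; a short case analysis shows this midpoint lies strictly inside that gap --- hence outside $C$ --- while remaining in $[y^L,y^U]$, even in the degenerate case where that intersection is essentially a single point. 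This $g$ is Borel measurable, being piecewise linear over a finite semialgebraic partition of $\{(y^L,y^U): y^L\le y^U\}$. With this choice $\{Y^\ast\in C\}=\{[Y^L,Y^U]\subset C\}$ as events, so conditioning on $X=x$ gives $P^\ast(Y^\ast\in C\mid X=x)=P([Y^L,Y^U]\subset C\mid X=x)$ for $P_X$-a.e.\ $x$.

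It then remains to check $P^\ast\in\mathcal{P}_I$. By construction $Y^L\le Y^\ast\le Y^U$ almost surely and the $(X,Y^L,Y^U)$-marginal of $P^\ast$ coincides with the observed one, so $P^\ast$ is observationally equivalent under interval censoring; concretely, the event inclusion $\{[Y^L,Y^U]\subset[t_0,t_1]\}\subseteq\{Y^\ast\in[t_0,t_1]\}$ (valid because $Y^\ast\in[Y^L,Y^U]$) gives $P^\ast(Y^\ast\in[t_0,t_1]\mid X=x)\ge P([Y^L,Y^U]\subset[t_0,t_1]\mid X=x)$ for all $t_0\le t_1$ and $P_X$-a.e.\ $x$, which is exactly the membership condition~\eqref{eq:identified set} of the sharp identified set in Lemma~\ref{lemma: identified set}. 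Combining this with the ``$\ge$'' direction yields the claimed equality.

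I expect the main (though ultimately routine) obstacle to be the construction and measurability of the selection $g$, together with the bookkeeping needed to confirm that the point it picks always falls strictly in a gap of $C$; a standard measurable selection theorem could be invoked as a fallback, but the explicit midpoint rule above seems elementary enough to make that unnecessary. Everything downstream of $g$ is a direct consequence of set inclusions and Lemma~\ref{lemma: identified set}.
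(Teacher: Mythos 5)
Your proposal is correct and takes essentially the same approach as the paper: both establish the reverse inequality by constructing an observationally equivalent (deterministic or randomized) selection of $Y$ from $[Y^L,Y^U]$ that falls outside $C$ exactly when $[Y^L,Y^U]\not\subset C$, and both verify membership in $\mathcal{P}_I$ through the fact that the selection lies in $[Y^L,Y^U]$ almost surely (equivalently, the sharp-identified-set condition of Lemma~\ref{lemma: identified set}). The only difference is cosmetic: your selection uses a midpoint-of-the-first-gap rule, whereas the paper partitions the half-plane of $(y^L,y^U)$ values and assigns $Y=Y^L$, $Y=Y^U$, or a uniform draw over the relevant gap depending on how the interval sits relative to $C$.
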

The above lemma shows that even though the oracle prediction set can be conservative, i.e. overcovering $Y$ for the distribution actually generating the data, the conservativeness is only due to partial identification and the conditional distribution of $(Y^L, Y^U)$ and not because the constraint we focus on differs from the original one.

In fact, given Lemmas \ref{lemma: identified set} and \ref{lemma: inf P(y) = P(yl,yu)}, we could replace the condition in Equation (\ref{eq:level set union}) which depends on the (unobserved) conditional distribution of the latent outcome $Y$ with a condition that depends on the conditional distribution of $(Y^{L}, Y^U)$, which can be estimated given the observed sample $\mathcal{D}$.
We then arrive at the following feasible estimation problem:
\begin{proposition}[Feasible Estimation]\label{def: feasible optimal}
The \textit{oracle prediction set} $C^*$ is the solution to the following feasible optimisation problem for $x$ in the support of $X$,
\begin{equation}\label{eq:feasible criterion}
C^*(x) = \arg \min_{C(x)\in \mathcal{C}} \mu(C(x)) \mathtext{s.t.} P\pqty{[Y^{L},Y^U] \subset C(x)\mid X = x} \geq 1 - \alpha.
\end{equation}
In the special case when  $C^*$ is a single interval, we will denote it as $C^*_{I}$.
\end{proposition}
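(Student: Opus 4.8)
The plan is to observe that the optimisation problem in \eqref{eq:feasible criterion} is, term by term, the \emph{same} problem as the one defining the oracle prediction set in \eqref{eq:level set union}: it has the same objective $C(x)\mapsto\mu(C(x))$ and --- once Lemma~\ref{lemma: inf P(y) = P(yl,yu)} is invoked --- the same feasible region, hence the same set of minimisers. Since $C^{*}$ is \emph{defined} in Definition~\ref{def: optimal} as a minimiser of \eqref{eq:level set union}, the proposition then follows immediately, in both directions: any $C^{*}$ solving \eqref{eq:level set union} solves \eqref{eq:feasible criterion}, and any solution of \eqref{eq:feasible criterion} is an oracle prediction set.

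Concretely, I would fix $x$ in the $P_{X}$-full-measure set on which Lemma~\ref{lemma: inf P(y) = P(yl,yu)} holds, and use that lemma to replace, for every candidate $C\in\mathcal{C}$, the quantity $\inf_{P\in\mathcal{P}_{I}}P(Y\in C\mid X=x)$ appearing in the constraint of \eqref{eq:level set union} by the identified quantity $P([Y^{L},Y^{U}]\subset C\mid X=x)$. This makes the two constraint sets
\[
    \cqty{C\in\mathcal{C}:\ \inf_{P\in\mathcal{P}_{I}}P(Y\in C\mid X=x)\ge 1-\alpha}
    \quad\text{and}\quad
    \cqty{C\in\mathcal{C}:\ P([Y^{L},Y^{U}]\subset C\mid X=x)\ge 1-\alpha}
\]
coincide; since the objectives agree, so do the $\arg\min$ sets, for $P_{X}$-almost every $x$. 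Specialising to $M=1$ (i.e.\ restricting $\mathcal{C}$ to single closed intervals throughout) then yields the statement for $C^{*}_{I}$.

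I do not expect a genuine obstacle here: the substantive work has already been carried out in Lemmas~\ref{lemma: identified set} and~\ref{lemma: inf P(y) = P(yl,yu)} (the former identifies $\mathcal{P}_{I}$, the latter shows the lower bound $P([Y^{L},Y^{U}]\subset C\mid X=x)$ on $\inf_{P\in\mathcal{P}_{I}}P(Y\in C\mid X=x)$ is attained for $C\in\mathcal{C}$), and the ``$\ge$'' half of the relevant equality is already displayed just before Lemma~\ref{lemma: inf P(y) = P(yl,yu)}. The only points needing a word of care are bookkeeping ones: the equivalence of the two constraints is asserted only for $P_{X}$-almost every $x$ (inherited from Lemma~\ref{lemma: inf P(y) = P(yl,yu)} and from Definition~\ref{def: optimal}), and the measurable selection of minimisers $x\mapsto C^{*}(x)$ taken as the estimation target is literally the one already fixed in Definition~\ref{def: optimal}, since the two problems agree pointwise in $x$.
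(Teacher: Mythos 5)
Your proposal is correct and follows essentially the same route as the paper: the paper also treats the proposition as an immediate consequence of Lemma~\ref{lemma: identified set} and Lemma~\ref{lemma: inf P(y) = P(yl,yu)}, which make the constraint $\inf_{P\in\mathcal{P}_{I}}P(Y\in C(x)\mid X=x)\ge 1-\alpha$ and the feasible constraint $P([Y^{L},Y^{U}]\subset C(x)\mid X=x)\ge 1-\alpha$ coincide over $\mathcal{C}$ for $P_{X}$-almost every $x$, so the two problems share the same minimisers. Your bookkeeping remarks on the almost-everywhere qualification and the single-interval case $C^{*}_{I}$ are consistent with the paper's treatment.
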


A natural estimator $\hat{C}$ for the oracle prediction set can be obtained by solving the sample version of the optimisation problem above.
The next two sections study the consistency of such estimators for the oracle prediction set and provide a modification of the estimated prediction set that has finite sample coverage guarantees using conformal inference.
\section{Consistency}\label{sec:consistency}

We first consider the estimation of the oracle prediction interval $C^*_{I}= [\tau_{0}(x), \tau_{1}(x)]$, for $x\in \mathcal{X}$. This analysis highlights the conditions required to ensure the consistency of the prediction set under different schemes of interval censoring. Subsequently, we will extend our analysis to the estimation of oracle prediction sets $C^* \in \mathcal{C}$ consisting of multiple intervals.
To measure the difference between two sets $A, B \subset \mathbb{R}$, we use the volume of their symmetric difference, denoted by $\mu(A \bigtriangleup B)$.
The symmetric difference is defined as $A \bigtriangleup B = (A \setminus B) \cup (B \setminus A)$. This is commonly used in the literature on level set estimation and conformal inference. Another commonly used notion of set distance is the Hausdorff distance. In Appendix~\ref{appendix: set metrics}, we discuss the relationship between these two metrics and why the Hausdorff distance is unsuitable in certain pathological cases.

Let \(\hat{C}_{I}(x)= [\hat{\tau}_{0}(x), \hat{\tau}_{1}(x)]\) denote the solution to the empirical analogue of the optimisation problem defined in \autoref{def: feasible optimal} given a random sample $\mathcal{D}$
\begin{equation}\label{eq:CIhat}
\min_{t_{0} < t_{1}} {t_{1} - t_{0}} \mathtext{s.t.} P_{n}(t_{0},t_{1};x) \geq 1 - \alpha.
\end{equation}
Here we use the shorthand notation \(P(t_{0},t_{1};x) = P(t_{0}\leq Y^{L} \leq Y^{U} \leq t_{1}\mid X = x)\), and \(P_{n}(t_{0},t_{1};x)\) is an estimator for \(P(t_{0}, t_{1};x)\) based on the random sample \(\mathcal{D} = \{(X_{i}, Y_{i}^{L}, Y_{i}^{U}): i\in \mathcal{I}\}\) with sample size \(n\).

\begin{assumption}[Estimation]\label{asmp:estimation}
The estimator \(P_{n}(t_{0},t_{1};x)\) is consistent for \(P(t_{0},t_{1};x)\) uniformly over \((t_{0}, t_{1}, x)\in \mathbb{R}^{2}\times \mathcal{X}_{n}\), for some \(\mathcal{X}_{n}\subset \mathcal{X}\),
\begin{equation*}
\sup_{x\in \mathcal{X}_{n}}\sup_{t_{0}\leq t_{1}} \abs{P_{n}(t_{0},t_{1};x) - P(t_{0},t_{1}; x)} = o_{p}(1).
\end{equation*}
\end{assumption}
This assumption does not specify the estimator \(P_{n}\) we use.
For example, a candidate estimator \(P_{n}(t_{0},t_{1};x)\) based on kernel smoothing is
\begin{equation}\label{eq:Phat}
{P}_{n}(t_{0},t_{1};x):= \frac{\sum_{i\in \mathcal{I}} \indicator{t_{0}\leq Y_{i}^{L}\leq Y_{i}^{U}\leq t_{1}}K_{h}\pqty{X_{i} - x}}{\sum_{i\in \mathcal{I}} K_{h}\pqty{X_{i} - x}},
\end{equation}
where \(K_{h}(x) = \frac{1}{h} K\pqty{\frac{x}{h}}\) for a choice of kernel smoothing function \(K(\cdot)\) and bandwidth parameter \(h\), see \cite{tsybakov2009IntroductionNonparametric} and \cite{hansen2008UniformConvergence}.  In this case, $\mathcal{X}_{n}$ can be an expanding subset of $\mathcal{X}$ at a suitable rate to achieve uniform consistency as in \cite{hansen2008UniformConvergence}, Theorem 8 (see discussion in Appendix \ref{appendix: kernel smoothing}).
Another candidate is the distributional regression estimator \citep[see, e.g.,][]{foresi1995ConditionalDistribution, chernozhukov2013InferenceCounterfactual,chernozhukov2021DistributionalConformal}.

Given a miscoverage level $\alpha \in (0,1)$, the next assumption we make is that \(C_{I}^*(x) = [\tau_{0}(x), \tau_{1}(x)]\) is the unique interval that satisfies the $1-\alpha$ coverage condition with the shortest length.
\begin{assumption}[Identification]\label{asmp:identification}
There exists a unique solution \(C^*_{I}(x)=[\tau_{0}(x), \tau_{1}(x)]\) to the optimisation problem in \autoref{def: feasible optimal}: that is, \(P(\tau_{0}(x),\tau_{1}(x); x) \geq 1 -\alpha\), and, for any \((t_{0}, t_{1}) \neq (\tau_{0}(x), \tau_{1}(x))\) such that \(t_{1}- t_{0} \leq \tau_{1}(x) - \tau_{0}(x)\), then \(P(t_{0},t_{1}; x) < 1 - \alpha\).
Additionally, for any \(\epsilon > 0\), there exists \(\delta > 0\) such that for all \(x\), if \(t_{1}(x) - t_{0}(x) \leq \tau_{1}(x) - \tau_{0}(x)\) and \(\mu([t_{0}(x), t_{1}(x)] \bigtriangleup [\tau_{0}(x), \tau_{1}(x)]) > \epsilon\), then $P(t_{0}, t_{1}; x) < P(\tau_{0}, \tau_{1}; x) - \delta$.
\end{assumption}
Assumption \ref{asmp:identification} states that we cannot achieve the same coverage probability with a different interval $(t_{0}(x),t_{1}(x))$ that is not larger than $(\tau_{0}(x), \tau_{1}(x))$.
Figure \ref{fig:joint density} shows a specific joint density of \(Y^{L}\) and \(Y^{U}\).
The set \(\{(a,b): t_{0}\leq a \leq b \leq t_{1}\}\) is a triangular region in \(\mathbb{R}^{2}\). In Figure \ref{fig:joint density}, the plotted triangular region corresponds to $t_{0}= -0.75$ and $t_{1}= 0.5$.
$P(t_{0},t_{1};x)$ will be the integral of the joint density over the triangular region.
The first part of Assumption \ref{asmp:identification} thus requires that if either the triangular region is smaller or if we move the triangular region away from the optimal position, we will strictly lose coverage probability of \((Y^{L}, Y^U)\); and the second part is a regularity condition ensuring that the ``coverage loss'' is bounded away from $0$ over the support of $X$.
This rules out the case when \(P(t_{0},t_{1};x)\) gets flatter around \((\tau_{0}(x),\tau_{1}(x))\) for some \(x\).
\begin{figure}[htbp]
\centering
\includegraphics[width=0.5\textwidth]{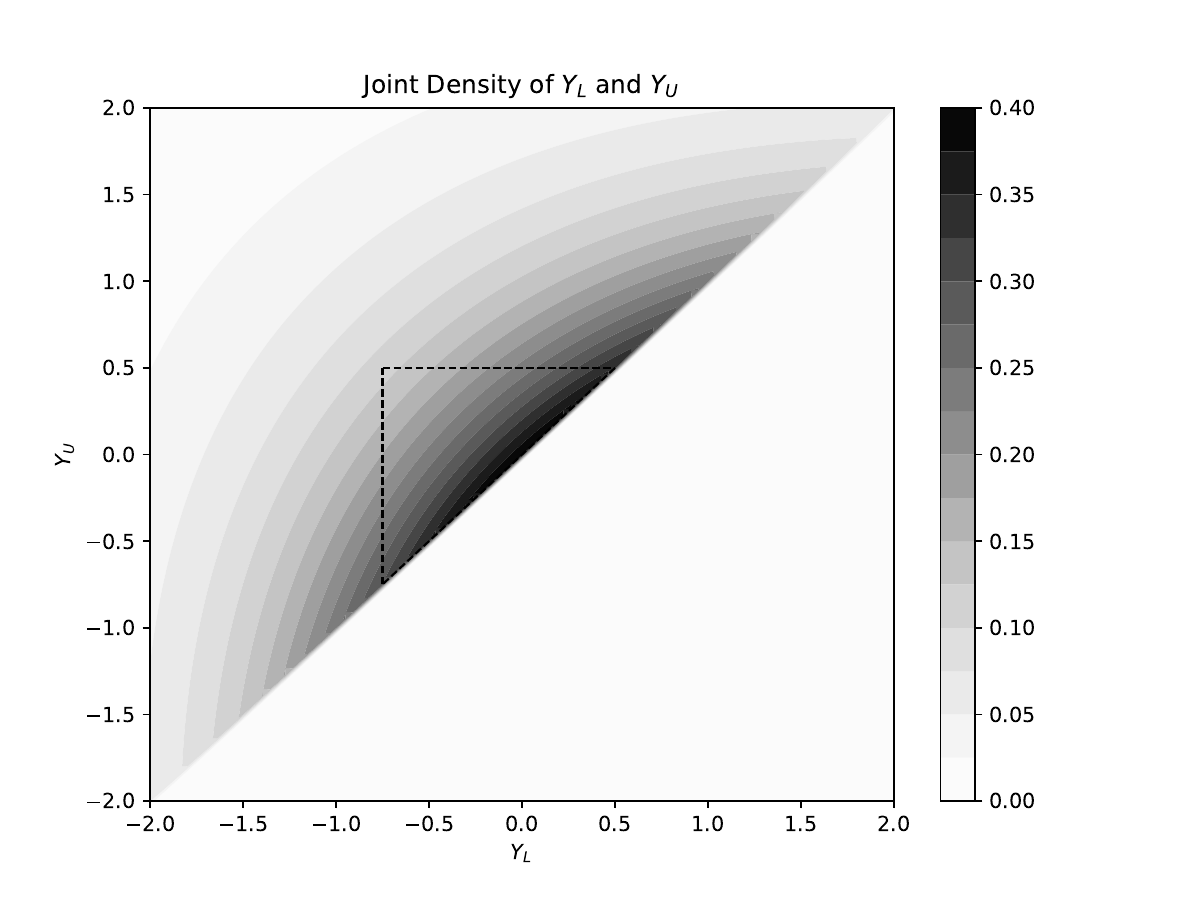}
\caption{\footnotesize Joint density of \(Y^{L}\) and \(Y^{U}\) when \(Y\sim N(0,1)\), \(\kappa^{L},\kappa^{U} \sim \text{Exp}(1)\). Integral of the density over the triangular region corresponds to \(P(-0.75, 0.5;x)\).}
\label{fig:joint density}
\end{figure}

In the absence of fixed censoring, we assume the following regularity condition for the conditional distribution \(P(t_{0},t_{1};x)\). The case with fixed censoring will be discussed later.
\begin{assumption}[Regularity]\label{asmp:regularity}
There exists an \(\epsilon_{0}>0\) and positive constants \(c_{1}, \gamma_{1}\), such that for all \(x\in \mathcal{X}\) and \(\tau_{0}(x),\tau_{1}(x)\) as defined in Assumption \ref{asmp:identification},
\begin{equation*}
P\pqty{\tau_{0}(x)- \epsilon, \tau_{1}(x)+ \epsilon; x} - P\pqty{\tau_{0}(x), \tau_{1}(x);x} \geq c_{1}\epsilon^{\gamma},
\end{equation*}
for any \(0 < \epsilon < \epsilon_{0}\).
In addition, for all \(\epsilon < \epsilon_{0}\) and \(t_{0}\leq t_{1}\), \(P(t_{0}- \epsilon,t_{1}+ \epsilon;x) - P(t_{0},t_{1};x)\leq c_{2}\epsilon^{\gamma_{2}}\) for some constant \(c_{2}, \gamma_{2}\).
\end{assumption}
Assumption \ref{asmp:regularity} is a smoothness assumption on \(P(t_{0},t_{1};x)\).
In particular, we are assuming that by considering a slightly larger interval \([\tau_{0}- \epsilon, \tau_{1}+ \epsilon]\) we can strictly increase our coverage probability.
Assumption \ref{asmp:regularity} also assumes an upper bound on the density, which can be violated if the conditional distribution of \((Y^{L}, Y^{U})\) is discrete, which might happen under a fixed censoring scheme.
This assumption is related to the \(\gamma\)-exponent condition in the level set literature, see \cite{polonik1995MeasuringMass} and \cite{lei2014DistributionfreePrediction}.

The following theorem shows that the estimated prediction interval is consistent for the oracle prediction interval under random censoring.
\begin{theorem}\label{thm:consistency}
Under Assumptions \ref{asmp:estimation}, \ref{asmp:identification},  and \ref{asmp:regularity}, the prediction interval estimator \(\hat{C}_{I}(x)\) defined in Equation \eqref{eq:CIhat} is uniformly consistent for the oracle prediction interval \(C_{I}^{*}(x)\), that is,
\begin{equation*}
\sup_{x\in \mathcal{X}_{n}}{\mu\pqty{\hat{C}_{I}(x)\bigtriangleup C_{I}^{*}(x)} } = o_{p}(1)
\end{equation*}
as the sample size \(n\to \infty\).
\end{theorem}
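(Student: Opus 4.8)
I would treat this as an argmin--consistency result, with the conditional coverage functional $(t_{0},t_{1})\mapsto P(t_{0},t_{1};x)$ as the population objective and $P_{n}(t_{0},t_{1};x)$ as its uniform estimator. Write $\xi_{n}:=\sup_{x\in\mathcal{X}_{n}}\sup_{t_{0}\le t_{1}}|P_{n}(t_{0},t_{1};x)-P(t_{0},t_{1};x)|$, so that $\xi_{n}=o_{p}(1)$ by Assumption~\ref{asmp:estimation}. The first thing to record is the elementary fact that $P(\tau_{0}(x),\tau_{1}(x);x)=1-\alpha$ for $P_{X}$-almost every $x$: for any small $s>0$ the interval $[\tau_{0}(x)+s,\tau_{1}(x)-s]$ is strictly shorter than the oracle interval, hence has coverage $<1-\alpha$ by the first part of Assumption~\ref{asmp:identification}, while the upper bound in Assumption~\ref{asmp:regularity} gives $P(\tau_{0}(x),\tau_{1}(x);x)-P(\tau_{0}(x)+s,\tau_{1}(x)-s;x)\le c_{2}s^{\gamma_{2}}$; letting $s\downarrow 0$ yields $P(\tau_{0}(x),\tau_{1}(x);x)\le 1-\alpha$, and combined with feasibility of the oracle this is an equality.

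Next I would sandwich $\hat{C}_{I}(x)=[\hat\tau_{0}(x),\hat\tau_{1}(x)]$. \emph{Length:} fix $\epsilon_{1}>0$; by the lower bound in Assumption~\ref{asmp:regularity}, $P(\tau_{0}(x)-\epsilon_{1},\tau_{1}(x)+\epsilon_{1};x)\ge (1-\alpha)+c_{1}\epsilon_{1}^{\gamma_{1}}$ for all $x$, so on the event $\{\xi_{n}<c_{1}\epsilon_{1}^{\gamma_{1}}\}$, whose probability tends to one, the enlarged oracle interval $[\tau_{0}(x)-\epsilon_{1},\tau_{1}(x)+\epsilon_{1}]$ satisfies the constraint in \eqref{eq:CIhat} simultaneously for every $x\in\mathcal{X}_{n}$. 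Minimality of $\hat{C}_{I}(x)$ then forces $\hat\tau_{1}(x)-\hat\tau_{0}(x)\le\tau_{1}(x)-\tau_{0}(x)+2\epsilon_{1}$, and since $\epsilon_{1}$ is arbitrary, $\eta_{n}:=\sup_{x\in\mathcal{X}_{n}}\{(\hat\tau_{1}(x)-\hat\tau_{0}(x))-(\tau_{1}(x)-\tau_{0}(x))\}^{+}=o_{p}(1)$. \emph{Coverage:} feasibility of $\hat{C}_{I}(x)$ in \eqref{eq:CIhat} gives $P_{n}(\hat\tau_{0}(x),\hat\tau_{1}(x);x)\ge 1-\alpha$, hence $P(\hat\tau_{0}(x),\hat\tau_{1}(x);x)\ge 1-\alpha-\xi_{n}$ uniformly over $x\in\mathcal{X}_{n}$.

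I would then close the argument with the separation condition. Suppose for contradiction that $\sup_{x\in\mathcal{X}_{n}}\mu(\hat{C}_{I}(x)\bigtriangleup C_{I}^{*}(x))>\epsilon$ on an event whose probability does not vanish. Consider any $x$ with $\mu(\hat{C}_{I}(x)\bigtriangleup C_{I}^{*}(x))>\epsilon$: if $\hat\tau_{1}(x)-\hat\tau_{0}(x)>\tau_{1}(x)-\tau_{0}(x)$, trim $\hat{C}_{I}(x)$ symmetrically by at most $\eta_{n}/2$ on each side to an interval $\tilde{C}(x)=[\tilde t_{0},\tilde t_{1}]$ of length exactly $\tau_{1}(x)-\tau_{0}(x)$; otherwise set $\tilde{C}(x)=\hat{C}_{I}(x)$. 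Then $\mu(\tilde{C}(x)\bigtriangleup\hat{C}_{I}(x))\le\eta_{n}$, so once $\eta_{n}$ is small we have $\mu(\tilde{C}(x)\bigtriangleup C_{I}^{*}(x))>\epsilon/2$, while the upper bound in Assumption~\ref{asmp:regularity} gives $P(\tilde t_{0},\tilde t_{1};x)\ge P(\hat\tau_{0}(x),\hat\tau_{1}(x);x)-c_{2}\eta_{n}^{\gamma_{2}}\ge 1-\alpha-\xi_{n}-c_{2}\eta_{n}^{\gamma_{2}}$. But $\tilde{C}(x)$ is a single interval no longer than the oracle at symmetric-difference distance $>\epsilon/2$, so the second part of Assumption~\ref{asmp:identification} provides $\delta=\delta(\epsilon/2)>0$, uniform in $x$, with $P(\tilde t_{0},\tilde t_{1};x)<P(\tau_{0}(x),\tau_{1}(x);x)-\delta=1-\alpha-\delta$ by the preliminary fact. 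Together these force $\xi_{n}+c_{2}\eta_{n}^{\gamma_{2}}\ge\delta$, which has probability tending to zero since $\xi_{n},\eta_{n}=o_{p}(1)$ and $\delta$ is a fixed positive constant --- contradicting the non-vanishing probability assumed. Hence $\sup_{x\in\mathcal{X}_{n}}\mu(\hat{C}_{I}(x)\bigtriangleup C_{I}^{*}(x))=o_{p}(1)$.

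The step I expect to be the main obstacle is reconciling the \emph{approximate} length bound $\hat\tau_{1}(x)-\hat\tau_{0}(x)\le\tau_{1}(x)-\tau_{0}(x)+o_{p}(1)$ with the \emph{exact} hypothesis ``$t_{1}(x)-t_{0}(x)\le\tau_{1}(x)-\tau_{0}(x)$'' under which the separation bound in Assumption~\ref{asmp:identification} is stated; this is precisely what requires the symmetric trimming and the density-type upper bound in Assumption~\ref{asmp:regularity}. A second point demanding care is uniformity: the ``worst'' location is sample-dependent, so every constant entering the argument ($\xi_{n}$, the map $\epsilon\mapsto\delta(\epsilon)$, and $c_{1},c_{2},\gamma_{1},\gamma_{2},\epsilon_{0}$) must be uniform over $x$, which is exactly what Assumptions~\ref{asmp:estimation}--\ref{asmp:regularity} are designed to deliver; the possibly expanding set $\mathcal{X}_{n}$ creates no difficulty because the non-stochastic conditions hold on all of $\mathcal{X}$.
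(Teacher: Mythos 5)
Your proof is correct and follows essentially the same route as the paper's: feasibility of the slightly enlarged oracle interval on the uniform-estimation-error event controls the length of $\hat{C}_{I}(x)$ (the paper's case of an over-long estimate), and trimming away the excess length, then invoking the separation part of Assumption~\ref{asmp:identification} together with the upper bound in Assumption~\ref{asmp:regularity}, produces the same contradiction as the paper's remaining two cases. A small bonus of your write-up is that you make explicit the identity $P(\tau_{0}(x),\tau_{1}(x);x)=1-\alpha$, which the paper's argument uses implicitly when it passes from the separation condition to the bound $P(\hat{\tau}_{0},\hat{\tau}_{1};x)\leq 1-\alpha-\delta$.
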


Assumption \ref{asmp:regularity} can be violated when there is fixed censoring, i.e., when there are some fixed brackets, and we observe only the bracket in which an outcome falls. Suppose the conditional distribution of $(Y^L, Y^U)$ can be decomposed into a continuous part and a discrete part, where the discrete part has support over the set of points \(\{(L_{k}, U_{k})\in \mathbb{R}^2: 1 \leq k \leq K\}\).

In the case of a mixture of random and fixed censoring, the regularity condition in  Assumption~\ref{asmp:regularity} is violated.
We impose the following assumption, which slightly modifies the estimator to allow for non-smoothness of the conditional distribution.
\begin{assumption}[Approximate solution]\label{asmp:Approximate solution fixed}
The estimator for the oracle prediction interval \(\hat{C}_{I}(x) = [\hat{\tau}_{0}(x), \hat{\tau}_{1}(x)]\) is constructed by solving the following optimization problem, for $x \in \mathcal{X}$,
\begin{equation}\label{eq:approx solution interval}
\min_{t_{0} < t_{1}} \, \, t_{1}- t_{0} \mathtext{s.t.} P_{n}(t_{0}, t_{1}; x) \geq 1 - \alpha - \psi_{n},
\end{equation}
for some $\psi_{n}>0$ and $\psi_{n}\to 0$, such that for some $\mathcal X_n$,
\begin{equation}\label{eq:cond rate psi_n}
P\pqty{\sup_{x\in \mathcal{X}_n} \sup_{t_{0}\leq t_{1}}\abs{P_{n}(t_{0},t_{1};x) - P(t_{0},t_{1};x)} > \psi_{n}}=o(1).
\end{equation}
\end{assumption}

To see the reason why fixed censoring creates a problem, and we should not solve the exact constraint, consider the following example.
\begin{example}\label{example:fixed censoring error}
Suppose $(Y_i^L, Y_i^U) = (0,1)$ or $(0,2)$ with equal probability without predictor information $X$. The oracle prediction set for a miscoverage level $\alpha = 0.5$  (see Equation \ref{eq:level set union}) is $C^* = [0,1]$. (Notice that $\inf_{P\in \mathcal{P}_I} P\pqty{Y\in C^*} = 0.5$ and a shorter interval would not achieve the desired miscoverage level.)  If we solve the exact empirical analogue of the oracle prediction set with the natural estimator $P_n(a,b) = n^{-1} \sum \indicator{[Y_i^L, Y_i^U] \subset [a,b]}$ as in \autoref{eq:CIhat}, we obtain a prediction interval
$$\check{C} = \bqty{0, 1 + \indicator{\bar{Y}^U > 1.5}},$$
where $\bar{Y}^U = \frac{1}{n}\sum_i Y_i^U$. Since the lower bound for $(Y_L, Y_U)$ is 0 with probability one, it is easy to see that the lower bracket of $\check{C}$ is $0$.  Because
$$P_n(0,b) =
\begin{cases}
0 &\mathtext{if} 0 < b < 1 \\
n^{-1} \sum \indicator{Y_i^U =1} &\mathtext{if} 1\leq b < 2 \\
1 &\mathtext{if} b\geq 2
\end{cases},$$
the upper bracket of $\check{C}$ is $2$ if $P_n(0,1) < 0.5$, i.e. there are fewer observations of $Y_i^U =1$ than $Y_i^U = 2$ in the sample, which is equivalent to having the average $\bar{Y}^U > 1.5$. Since $\bar{Y}^U$ has a scaled Binomial distribution $\bar{Y}^U \sim 1 + n^{-1} \text{Binom}(n,0.5)$, and the median for a $\text{Binom}(n,p)$ distribution is between $\lfloor np \rfloor$ and $\lceil np \rceil$, we have that $P(\bar{Y}^U>1.5) \to 0.5$ as $n \to \infty$.
This means that $P(\check{C} = [0,2]) \to 0.5$ as $n\to \infty$ and \(\check{C}\) is not consistent.
On the other hand, for any $0 < \psi_n < 0.5$, $P_n(0,1) < 0.5-\psi_n$ is equivalent to $\bar{Y}^U > 1.5 + \psi_n$, which has probability approaching $0$ by the central limit theorem if $\sqrt{n}\psi_n \to \infty$ as the sample size $n \to \infty$. Solving the modified optimisation problem in \eqref{eq:approx solution interval} will then lead to a consistent estimator of the oracle prediction set in this example. Also note that the condition in \eqref{eq:cond rate psi_n} is a generalisation of the condition $\sqrt{n}\psi_n \to \infty$.
\end{example}

With \autoref{asmp:Approximate solution fixed} in place of Assumptions \ref{asmp:estimation} and \ref{asmp:regularity}, we obtain the following result:
\begin{theorem}\label{thm:consistency fixed}
Under Assumptions \ref{asmp:identification}, and \ref{asmp:Approximate solution fixed}, we have the uniform consistency of $\hat{C}_{I}$ for the oracle prediction interval $C_{I}^{*}$, that is, as the sample size $n\to \infty$,
\begin{equation*}
\sup_{x\in \mathcal{X}_n} \mu\pqty{\hat{C}_{I}(x)\bigtriangleup C^{*}_{I}(x)} = o_{p}(1).
\end{equation*}
\end{theorem}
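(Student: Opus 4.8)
The plan is to reuse the skeleton of the proof of Theorem~\ref{thm:consistency}, with the slack sequence $\psi_n$ from Assumption~\ref{asmp:Approximate solution fixed} taking over the role that the smoothness bound in Assumption~\ref{asmp:regularity} played there. First I would introduce the good event
\[
A_n = \cqty{\sup_{x\in\mathcal{X}_n}\sup_{t_0\le t_1}\abs{P_n(t_0,t_1;x)-P(t_0,t_1;x)}\le \psi_n},
\]
which satisfies $\Prob(A_n)\to 1$ by condition~\eqref{eq:cond rate psi_n}. Since the conclusion is an $o_p(1)$ statement, it suffices to show that, on $A_n$, for each fixed $\epsilon>0$ the quantity $\sup_{x\in\mathcal{X}_n}\mu(\hat{C}_{I}(x)\bigtriangleup C_{I}^{*}(x))$ is at most $\epsilon$ once $n$ is large; everything below is argued deterministically on $A_n$.

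The second step records two consequences of the uniform closeness built into $A_n$. On the one hand, the population oracle interval is feasible for the relaxed empirical programme \eqref{eq:approx solution interval}: from $P(\tau_0(x),\tau_1(x);x)\ge 1-\alpha$ (Assumption~\ref{asmp:identification}) we get $P_n(\tau_0(x),\tau_1(x);x)\ge 1-\alpha-\psi_n$, so $\hat{C}_{I}(x)$, being length-minimising, obeys $\hat\tau_1(x)-\hat\tau_0(x)\le \tau_1(x)-\tau_0(x)$ for every $x\in\mathcal{X}_n$. On the other hand, $\hat{C}_{I}(x)$ itself satisfies the relaxed constraint, hence $P(\hat\tau_0(x),\hat\tau_1(x);x)\ge P_n(\hat\tau_0(x),\hat\tau_1(x);x)-\psi_n\ge 1-\alpha-2\psi_n$. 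So on $A_n$ the estimated interval is, uniformly in $x$, no longer than the oracle and covers $(Y^L,Y^U)$ with probability within $2\psi_n$ of the target $1-\alpha$.

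The third step converts this into a set-distance bound. Fix $\epsilon>0$ and let $\delta=\delta(\epsilon)>0$ be the ($x$-free) constant furnished by the second part of Assumption~\ref{asmp:identification}. Suppose $\mu(\hat{C}_{I}(x)\bigtriangleup C_{I}^{*}(x))>\epsilon$ for some $x\in\mathcal{X}_n$. Because $\hat\tau_1(x)-\hat\tau_0(x)\le\tau_1(x)-\tau_0(x)$, that part of the assumption applies to $\hat{C}_{I}(x)$ and gives $P(\hat\tau_0(x),\hat\tau_1(x);x)<P(\tau_0(x),\tau_1(x);x)-\delta$. Combined with the lower bound $P(\hat\tau_0(x),\hat\tau_1(x);x)\ge 1-\alpha-2\psi_n$ and with $P(\tau_0(x),\tau_1(x);x)\ge 1-\alpha$, this forces $2\psi_n>\delta-\bigl(P(\tau_0(x),\tau_1(x);x)-(1-\alpha)\bigr)$. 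Provided the right-hand side is bounded below by a strictly positive constant uniformly in $x$, this is impossible for $n$ large, so no such $x$ exists; since $\epsilon>0$ was arbitrary, $\sup_{x\in\mathcal{X}_n}\mu(\hat{C}_{I}(x)\bigtriangleup C_{I}^{*}(x))=o_p(1)$.

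The main obstacle is the last step, and specifically the ``uniformly in $x$'' proviso. Relative to Theorem~\ref{thm:consistency} the new phenomenon is that under fixed censoring the conditional law of $(Y^L,Y^U)$ has atoms, so (i) the coverage gained by enlarging an interval can jump rather than grow polynomially as in Assumption~\ref{asmp:regularity}---ruling out the endpoint-by-endpoint estimates used there---and (ii) the oracle constraint may bind with strict slack, $P(\tau_0(x),\tau_1(x);x)>1-\alpha$. One therefore cannot bound $\hat\tau_0(x)-\tau_0(x)$ and $\tau_1(x)-\hat\tau_1(x)$ directly; instead the argument must show that the $\psi_n$-relaxed feasible set contracts onto $\{C_{I}^{*}(x)\}$, and the delicate part is verifying that the identification margin $\delta(\epsilon)$ dominates the over-coverage $P(\tau_0(x),\tau_1(x);x)-(1-\alpha)$ with a slack that is \emph{uniform} over $x\in\mathcal{X}$, so that a single threshold on $\psi_n$ (depending only on $\epsilon$) works for all $x$ simultaneously. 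This is precisely what the $x$-uniform, quantitative second clause of Assumption~\ref{asmp:identification} is designed to supply, and it is why condition~\eqref{eq:cond rate psi_n}---which keeps $\psi_n$ from shrinking faster than the uniform estimation error while ensuring $\Prob(A_n)\to1$---is imposed in place of a bare $\psi_n\to0$. A routine but necessary bookkeeping point is to check that the constants in every inequality above are free of $x$ before the supremum over the expanding set $\mathcal{X}_n$ is taken.
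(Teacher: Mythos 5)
Your skeleton matches the paper's argument (the interval specialisation of the proof given for Theorem~\ref{thm:consistency multiple}): on the good event $A_n$ the oracle interval is feasible for the relaxed empirical programme \eqref{eq:approx solution interval}, so $\hat\tau_1(x)-\hat\tau_0(x)\le\tau_1(x)-\tau_0(x)$ for all $x\in\mathcal{X}_n$, and $P(\hat\tau_0(x),\hat\tau_1(x);x)\ge 1-\alpha-2\psi_n$; the identification margin is then supposed to exclude a symmetric difference exceeding $\epsilon$. The gap is in how you close that last step. You invoke the second clause of Assumption~\ref{asmp:identification} in its literal form, $P(\hat\tau_0,\hat\tau_1;x)<P(\tau_0,\tau_1;x)-\delta$, and are then left needing $\delta$ to dominate the overcoverage $P(\tau_0(x),\tau_1(x);x)-(1-\alpha)$ uniformly in $x$ (plus $2\psi_n$). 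You state this as a proviso and then assert that the second clause of Assumption~\ref{asmp:identification} ``is designed to supply'' it; it does not. That clause separates competitors from the oracle's \emph{own} coverage $P(\tau_0,\tau_1;x)$ and says nothing about how $\delta(\epsilon)$ compares with the overcoverage slack, which under fixed censoring can be large because the constraint need not bind at an atom. With that literal reading the contradiction genuinely fails: if, say, the oracle overcovers by $0.1$ while there are equally short intervals at symmetric distance greater than $\epsilon$ whose coverage approaches $1-\alpha$ from below, the clause holds with $\delta\approx 0.1$, yet such a competitor is feasible for the $\psi_n$-relaxed empirical problem whenever $\psi_n$ exceeds its vanishing coverage deficit, and your chain of inequalities yields no contradiction. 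So, as written, the proof does not close.

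The reading the paper actually uses — see how Assumption~\ref{asmp:identification} is applied in the first case of the proof of Theorem~\ref{thm:consistency} (``then $P(\hat\tau_0,\hat\tau_1;x)\le 1-\alpha-\delta$''), and the parallel Assumption~\ref{asmp:identification multiple}(1) behind Theorem~\ref{thm:consistency multiple} — is that a no-longer interval at symmetric distance greater than $\epsilon$ has coverage below $1-\alpha-\delta$, i.e.\ separation from the \emph{nominal level}, not from the oracle's coverage. With that reading your argument closes immediately and the overcoverage term never enters: on $A_n$ with $n$ large enough that $2\psi_n<\delta$, the bounds $P(\hat\tau_0(x),\hat\tau_1(x);x)\ge 1-\alpha-2\psi_n$ and $P(\hat\tau_0(x),\hat\tau_1(x);x)<1-\alpha-\delta$ are incompatible, uniformly over $x\in\mathcal{X}_n$ because $\delta$ and $\psi_n$ are free of $x$. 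Either adopt that interpretation (and delete the detour through $P(\tau_0,\tau_1;x)-(1-\alpha)$), or acknowledge that under the literal statement the theorem would require the additional uniform-domination hypothesis you flagged, which the stated assumptions do not deliver.
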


We now turn to the more general case where the estimation target is the {oracle prediction set} $C^*\in \mathcal{C}$ defined in \autoref{def: feasible optimal}.

The next assumption is on the identification of the oracle prediction set, which generalises Assumption \ref{asmp:identification}. It additionally assumes that the number of disjoint intervals in the oracle prediction set is bounded.
\begin{assumption}\label{asmp:identification multiple}
\begin{enumerate}
\item There exists a unique solution $C^*$ to \autoref{def: feasible optimal}. For any $\epsilon > 0$, there exists $\delta = \delta(\epsilon) > 0$, such that for $P_{X}$-almost surely $x\in \mathcal{X}$ and any set ${C}'(x) \in \mathcal{C}$ with $\mu({C}'(x)) \leq \mu(C^{*}(x))$ and $\mu(C^{*}(x) \triangle C'(x)) > \epsilon$, we have $P([Y^L, Y^U] \subset {C}'(x)\mid X = x ) < 1 - \alpha-\delta$.
\item The oracle \(C^*(x) \in \mathcal{C}_{\bar{M}}\), where
    $\mathcal{C}_{\bar{M}} = \{\sqcup_{m=1}^{M} [a_m,b_m]: a_m < b_m < a_{m+1}, \, M \leq \bar{M} < \infty\}$, for some $\bar{M} < \infty$ and almost surely $x \in \mathcal X$.
\end{enumerate}
\end{assumption}

With the notation $P(C(x);x) = P([Y^{L}, Y^{U}]\subset C(x) \mid X = x)$, and suppose that we have a uniform consistent estimator $P_n(C; x)$. We define the estimator $\hat{C}(x), x\in \mathcal{X}$, for the oracle prediction set $C^*$, allowing for potential fixed censoring.
\begin{assumption}\label{asmp:approximate solution multiple}
The estimators $\hat{{C}}(x), x\in \mathcal{X}$ are constructed by solving the following problem, for any $x\in \mathcal{X}$, and $M \geq \bar{M}$,
\begin{equation*}
\min_{C(x)\in \mathcal{C}_M} \mu(C(x)) \mathtext{s.t.} P_{n}(C(x); x) \geq 1 - \alpha - \psi_{n}.
\end{equation*}
Here $\psi_{n}=o(1)$ is a sequence of positive numbers and $P_{n}$ is a consistent estimator of $P$ such that
$$P\pqty{{\sup_{C\in \mathcal{C}_M}}\sup_{x\in \mathcal{X}_n} \abs{P_{n}(C(x);x) - P(C(x);x)} > \psi_{n}}=o(1),$$
as $n\to \infty$.
\end{assumption}

\begin{remark}
\autoref{appendix: kernel smoothing} states conditions under which a kernel smoothing estimator  $P_n(C;x)$ is uniformly consistent and  provides the corresponding convergence rate. This convergence rate can be used to inform the choice of $\psi_n$. Empirically, varying $\psi_n$ between 0 and small positive values produces negligible changes in the results. The uniformity is restricted to the class $\mathcal{C}_M$ rather than the larger class $\mathcal{C}$ because the proof relies on arguments that are closely tied to $\mathcal{C}_M$ being a VC class of sets. Notably, knowledge of $\bar{M}$ is not required, as it suffices to select an $M$ that is sufficiently large. Even if $M$ is smaller than the true $\bar{M}$, the resulting prediction set can be viewed as an approximation for the oracle prediction set.
\end{remark}

\begin{theorem}\label{thm:consistency multiple}
Under Assumptions \ref{asmp:identification multiple} and \ref{asmp:approximate solution multiple}, we have that $\hat{C}$ is a consistent estimator of ${C}^*$ in the sense that, as $n\to \infty$,
\begin{equation*}
\sup_{x\in \mathcal{X}_n} {\mu\pqty{\hat{C}(x) \bigtriangleup C^*(x)}} = o_{P}(1).
\end{equation*}
\end{theorem}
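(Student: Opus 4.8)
The plan is to run essentially the same argmin-consistency argument used for \autoref{thm:consistency}, now phrased for sets in $\mathcal{C}_M$ under the symmetric-difference pseudometric, with all the randomness confined to the uniform-consistency event supplied by Assumption~\ref{asmp:approximate solution multiple} and the remainder of the argument deterministic. The separation condition in part~(1) of Assumption~\ref{asmp:identification multiple} plays exactly the role that the ``coverage loss bounded away from zero'' clause of Assumption~\ref{asmp:identification} plays in the single-interval case; the point to keep track of is that the gap $\delta(\epsilon)$ there does not depend on $x$, so the conclusion will come out uniform over $\mathcal{X}_n$.

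First I would fix $\epsilon>0$, take $\delta=\delta(\epsilon)>0$ from Assumption~\ref{asmp:identification multiple}(1), and introduce the event
\[
\mathcal{A}_n=\Bigl\{\sup_{C\in\mathcal{C}_M}\sup_{x\in\mathcal{X}_n}\bigl|P_n(C(x);x)-P(C(x);x)\bigr|\le\psi_n\Bigr\},
\]
which has probability tending to $1$ by Assumption~\ref{asmp:approximate solution multiple}. On $\mathcal{A}_n$ I would first check that the oracle $C^*(x)$ is feasible for the sample problem at every $x\in\mathcal{X}_n$: since $M\ge\bar M$, part~(2) of Assumption~\ref{asmp:identification multiple} gives $C^*(x)\in\mathcal{C}_{\bar M}\subseteq\mathcal{C}_M$, and the population validity $P(C^*(x);x)\ge 1-\alpha$ together with $\mathcal{A}_n$ yields $P_n(C^*(x);x)\ge P(C^*(x);x)-\psi_n\ge 1-\alpha-\psi_n$. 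Because $\hat C(x)$ minimises volume over feasible sets in $\mathcal{C}_M$, this forces $\mu(\hat C(x))\le\mu(C^*(x))$ for all $x\in\mathcal{X}_n$.

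Next, again on $\mathcal{A}_n$, the sample constraint satisfied by $\hat C(x)$ transfers back to the population: $P(\hat C(x);x)\ge P_n(\hat C(x);x)-\psi_n\ge 1-\alpha-2\psi_n$. Now suppose that for some $x\in\mathcal{X}_n$ we had $\mu(\hat C(x)\bigtriangleup C^*(x))>\epsilon$. Since $\hat C(x)\in\mathcal{C}_M\subseteq\mathcal{C}$ and $\mu(\hat C(x))\le\mu(C^*(x))$, Assumption~\ref{asmp:identification multiple}(1) forces $P(\hat C(x);x)<1-\alpha-\delta$, contradicting the previous display as soon as $2\psi_n<\delta$, which holds for all large $n$ since $\psi_n\to 0$. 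Hence, for $n$ large, on $\mathcal{A}_n$ one has $\sup_{x\in\mathcal{X}_n}\mu(\hat C(x)\bigtriangleup C^*(x))\le\epsilon$, so $P\bigl(\sup_{x\in\mathcal{X}_n}\mu(\hat C(x)\bigtriangleup C^*(x))>\epsilon\bigr)\le P(\mathcal{A}_n^c)\to 0$; since $\epsilon$ was arbitrary, this is the claimed $o_p(1)$ statement.

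The parts needing care rather than new ideas are: (i) existence and measurability of the minimiser $\hat C(x)$ — existence holds because $C\mapsto P_n([Y^L,Y^U]\subset C)$ is monotone with respect to set inclusion and $\mathcal{C}_M$ is parametrised by $2M$ ordered endpoints, so one may restrict to a compact parameter set and attain the minimum, or else carry an $o_p(1)$ slack through the volume comparison and argue with near-minimisers throughout; (ii) the uniformity over the expanding $\mathcal{X}_n$, which is inherited wholesale from the uniform law hypothesised in Assumption~\ref{asmp:approximate solution multiple}, combined with the $x$-free constant $\delta(\epsilon)$ in Assumption~\ref{asmp:identification multiple}(1) — without the latter the contradiction step would not be uniform; and (iii) the restriction to $\mathcal{C}_M$ rather than $\mathcal{C}$, which is what makes that uniform law available ($\mathcal{C}_M$ is a VC class, $\mathcal{C}$ is not), though since the uniform convergence is assumed directly no empirical-process input needs to be re-proven here. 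The main obstacle is therefore just bookkeeping the two-sided ($\pm\psi_n$) slack carefully, so that feasibility passes in both directions between the sample and population problems while the $2\psi_n<\delta$ threshold is eventually met.
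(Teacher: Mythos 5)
Your proposal is correct and follows essentially the same argument as the paper: work on the uniform-consistency event from Assumption~\ref{asmp:approximate solution multiple}, use the $x$-free gap $\delta(\epsilon)$ from Assumption~\ref{asmp:identification multiple}(1), and trade the $\pm\psi_n$ slack between the sample and population constraints until $2\psi_n<\delta$. The only cosmetic difference is that you establish $\mu(\hat C(x))\le\mu(C^*(x))$ up front via sample-feasibility of $C^*$, whereas the paper splits into the two cases $\mu(\hat C(x))\le\mu(C^*(x))$ and $\mu(\hat C(x))>\mu(C^*(x))$ and derives a violation of the event in each; the substance is identical.
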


We have proposed consistent estimators for the oracle prediction set. There has been a recent interest in constructing prediction sets that are finite-sample valid via conformal inference. The next section discusses how to construct a conformal prediction set that is finite-sample valid under interval censoring.

\section{Conformal inference with interval outcomes}\label{sec:conformal}

Given a consistent estimator for the oracle prediction set \(\hat{C}\), this section considers the problem of constructing a prediction set with finite-sample validity.
An estimated prediction set \(\tilde{C}\) with a random sample $\mathcal{D} = \cqty{(X_{i}, Y^{L}_{i}, Y^{U}_{i}), i \in \mathcal{I}}$  is said to be \textit{finite-sample marginally valid}, if for all \(P\),
\begin{equation*}
\Prob \pqty{Y_{n+1} \in \tilde{C}(X_{n+1})} \geq 1 - \alpha,
\end{equation*}
where \(\Prob\) denotes the joint probability of \(\cqty{\bar{Z}_{i} = (X_{i}, Y_{i}, Y^L_{i}, Y_{i}^{U}): i \in \mathcal{I}\cup\cqty{{n+1}}}\).
Notice that this is the finite-sample validity condition typically considered in conformal inference literature, where the probability is the joint distribution for both the observed sample and the individual to be predicted. While the oracle prediction set \(C^*(x)\) depends on the true distribution \(P\), conformal inference constructions lead to a prediction set that is finite-sample valid for all \(P\), making it distribution-free.

In particular, we will achieve finite-sample validity using the method of \textit{split conformal inference} (see \cite*{romano2019ConformalizedQuantile} and \cite*{lei2018DistributionfreePredictive}).
One feasible split conformal inference procedure for the oracle prediction interval $C^*_I$ in our setting can be described as follows.
We first split the sample \(\mathcal{D}\) into a training set \(\mathcal{D}_{1} = \cqty{Z_{i}: i\in \mathcal{I}_{1}}\) and a calibration set \(\mathcal{D}_{2}= \cqty{Z_{i}: i \in \mathcal{I}_{2}}\).
We calculate the estimator \(\hat{C}_{I}(x) = [\hat{\tau}_{0}(x), \hat{\tau}_{1}(x)]\) for the oracle prediction set with the training set for \(x\in \mathcal{X}\) as in \autoref{eq:CIhat}. The following scores \(s_{j} = s(Y^{L}_{j}, Y^{U}_{j}, X_{j}; \hat{\tau}_{0}, \hat{\tau}_{1})\) are then computed for each \(j \in \mathcal{I}_{2}\) in the calibration set, given \(\hat{C}_{I}(x) = [\hat{\tau}_{0}(x), \hat{\tau}_{1}(x)]\), where
\begin{equation*}
s\pqty{y^{L}, y^{U}, x; t_{0}(\cdot), t_{1}(\cdot)} = \max \cqty{t_{0}(x) - y^{L}, y^{U} - t_{1}(x)}
\end{equation*}
This score function is a modified version of the score function used in the case of conformalized quantile regression with no interval censoring studied in \cite{romano2019ConformalizedQuantile}.
The score function is negative when the interval \([y^L, y^U]\) is contained in \([t_{0}(x),t_{1}(x)]\) and positive otherwise.
By allowing for negative scores, the conformal prediction set \(\tilde{C}_{I}(x)\) penalises both overcoverage and undercoverage.

We can then compute the quantile $$\vartheta_{1 - \alpha} = q\pqty{\cqty{s_{j}: j \in \mathcal{I}_{2}}; (1 - \alpha)(1+ 1/\abs{\mathcal{I}_2})},$$ where \(q\pqty{\mathcal{S}; \zeta}\)  denotes the \(\zeta\)-quantile of the set \(\mathcal{S}\) of real numbers.
The conformal prediction set is then
\begin{equation}\label{eq:conformalized interval}
\tilde{C}_{I}(x) = [\tilde{\tau}_{0}(x), \tilde{\tau}_{1}(x)]  = \bqty{\hat{\tau}_{0}(x) - \vartheta_{1 - \alpha}, \hat{\tau}_{1}(x) + \vartheta_{1 - \alpha}}.
\end{equation}
Intuitively, when our estimated prediction set \(\hat{C}_{I}\) is wider than necessary, the intervals \(\{(Y_{j}^{L}, Y_{j}^{U}), j \in \mathcal{I}_{2}\}\) tend to be contained within \(\hat{C}_{I}\), leading to a higher proportion of negative scores in the calibration set. As a result, the chosen quantile \(\vartheta_{1 - \alpha}\) is negative, and the conformal prediction interval adjusts for the over-coverage of \(\hat{C}_I\). The correction for under-coverage follows the same logic.

When the prediction set $\hat{C}$ is a union of multiple intervals, we can construct a finite-sample valid conformal prediction set as follows. First we split the samples into a training set $\mathcal{I}_{1}$ and a calibration set $\mathcal{I}_{2}$.
We estimate $\hat{C}\in \mathcal{C}$ using data in the training set, and then compute the following conformity score$s_{j}= s(Y^{L}_{j}, Y^{U}_{j}, X_{j}; \hat{C})$, for each $j\in \mathcal{I}_{2}$. In order to allow for $\hat{C}$, which is a union of intervals, we compute the conformity score as follows. For $C(x) = \sqcup_{m=1}^{M} [t_{0m}(x), t_{1m}(x)] \in \mathcal{C}$, where $t_{0m} < t_{1m} < t_{0,m+1}$,  
\begin{equation}\label{eq:score function}
s(y^{L}, y^{U}, x; C) = \inf\cqty{\theta: [y^L, y^U]\subset \bigcup_{m=1}^M [t_{0m}(x) - \theta, t_{1m}(x) + \theta]}.
\end{equation}
This score can be computed explicitly as follows. With the notation $t_{10} = -\infty, t_{0,M + 1} = \infty$, if there exists $j: 1\leq j \leq M$ such that $t_{0,j} \leq y^L \leq y^U \leq t_{1, j}$, then $s(y^L, y^U; C) = s(y^L, y^U; t_{0j}, t_{1j})$, otherwise, for $i = \max\{j: t_{1,j - 1}\leq y^L\}$, and $k = \min\{j : t_{0,j + 1} \geq y^u\}$,
\begin{align*}\label{eq:score function details}
s(y^{L}, y^{U}, x; C) =  & 2^{-1}\max\{(t_{0,j} - t_{1,j-1}): [t_{1,j-1},t_{0,j}]\subset [y^L, y^U] \} \\ &\vee [(t_{0,i} - y^L)\wedge 2^{-1}(t_{0,i} - t_{1,i-1})] \\ &\vee [(y^U - t_{1,k}) \wedge 2^{-1}(t_{0,k+1} - t_{1,k})].
\end{align*}
Notice that $s_{j} \leq 0$ if and only if $[Y^{L}_{j}, Y^{U}_{j}] \subset [t_{0m}, t_{1m}]$ for some $m\leq M$. We can then compute $\vartheta_{1 - \alpha} = q(\{s_{j}: j \in \mathcal{I}_{2}\}; (1 - \alpha)(1+ 1/\abs{\mathcal{I}_2}))$.
The conformal prediction set is then given by
\footnote{Here, the endpoints of the estimated $\hat{C}$ are adjusted uniformly using the same $\vartheta_{1-\alpha}$. A more sophisticated adjustment method is discussed in \autoref{appendix: multiple scores}, where we propose a minimal Euclidean distance adjustment. We conjecture that this approach could improve the convergence rate of the conformal prediction set. The optimal choice of conformity score remains an interesting question, see for example, \cite{thurin2025OptimalTransportbased}.}
\begin{equation}\label{eq:conformal prediction set}
\tilde{C}(x) = \bigcup_{m} [\hat{\tau}_{0m}(x) - \vartheta_{1-\alpha}, \hat{\tau}_{1m}(x) + \vartheta_{1-\alpha}].
\end{equation}

The next theorem states the finite-sample marginal validity of the conformal prediction set \(\tilde{C}\) defined in \autoref{eq:conformal prediction set} above. It implies the validity of the conformal prediction interval $\tilde{C}_I$.

\begin{theorem}\label{thm:marginal}
Under the assumption that \(\{(X_{i}, Y_{i}, Y_{i}^{L}, Y_{i}^{U}): i\in \mathcal{I}\cup \{n+1\}\}\) are i.i.d.\footnote{In order to perform conformal inference, it is sufficient that the sample satisfies the weaker condition of exchangeability. Here, independence is assumed for simplicity and for establishing the asymptotics for the nonparametric estimators we will use. \cite{barber2023ConformalPrediction} studies conformal inference beyond exchangeability and is applicable for time series data. Our results can be readily extended to the time series settings combined with their procedure. }, the conformal prediction set $\tilde{C}$ in \autoref{eq:conformal prediction set} is finite-sample marginally valid, that is, for all \(P\),
\begin{equation*}
\Prob\pqty{Y_{n+1}\in \tilde{C}(X_{n+1})} \geq 1 - \alpha.
\end{equation*}
where $\Prob$ denotes the joint probability of \(\cqty{\bar{Z}_{i}: i \in \mathcal{I}\cup\cqty{{n+1}}}\).
\end{theorem}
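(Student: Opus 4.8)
The plan is to run the standard split-conformal argument with the interval-valued conformity score \eqref{eq:score function}, isolating one deterministic inclusion and one distribution-free quantile bound. Write $n_2 = \abs{\mathcal{I}_2}$ and let $\mathcal{D}_1 = \cqty{Z_i : i \in \mathcal{I}_1}$ be the training subsample on which $\hat{C}$ is fitted. First I would show that, deterministically, $\cqty{s_{n+1} \le \vartheta_{1-\alpha}} \subseteq \cqty{Y_{n+1} \in \tilde{C}(X_{n+1})}$; then I would show $\Prob(s_{n+1} \le \vartheta_{1-\alpha}) \ge 1-\alpha$; chaining the two gives $\Prob(Y_{n+1}\in \tilde{C}(X_{n+1})) \ge 1-\alpha$, and since the only distributional input is the i.i.d.\ (or exchangeability) assumption, the bound holds for every $P$, i.e.\ it is distribution-free.

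For the inclusion, suppose $s_{n+1} = s(Y^L_{n+1}, Y^U_{n+1}, X_{n+1}; \hat{C}) \le \vartheta_{1-\alpha}$ with $\hat{C}(x) = \sqcup_{m\le M}[\hat{\tau}_{0m}(x), \hat{\tau}_{1m}(x)]$. Since $s$ is a minimum over $m$, some index $m^\star$ satisfies $\max\cqty{\hat{\tau}_{0m^\star}(X_{n+1}) - Y^L_{n+1},\, Y^U_{n+1} - \hat{\tau}_{1m^\star}(X_{n+1})} \le \vartheta_{1-\alpha}$, i.e.\ $\hat{\tau}_{0m^\star}(X_{n+1}) - \vartheta_{1-\alpha} \le Y^L_{n+1}$ and $Y^U_{n+1} \le \hat{\tau}_{1m^\star}(X_{n+1}) + \vartheta_{1-\alpha}$. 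Because $Y^L_{n+1}\le Y^U_{n+1}$, the enlarged interval $[\hat{\tau}_{0m^\star}(X_{n+1}) - \vartheta_{1-\alpha},\,\hat{\tau}_{1m^\star}(X_{n+1}) + \vartheta_{1-\alpha}]$ is nonempty — even when $\vartheta_{1-\alpha}<0$ and the estimated intervals contract — and contains $[Y^L_{n+1}, Y^U_{n+1}]$; invoking the maintained interval-censoring condition $Y^L_{n+1}\le Y_{n+1}\le Y^U_{n+1}$ then gives $Y_{n+1} \in \tilde{C}(X_{n+1})$. This is the point at which the particular pairing of score and endpoint adjustment ``closes'', and it is the one computation I would write out in full.

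For the quantile bound, I would condition on $\mathcal{D}_1$. Then $\hat{C}$ is a fixed measurable function, so $\cqty{s_j : j \in \mathcal{I}_2\cup\cqty{n+1}}$ are i.i.d.\ (hence exchangeable), each a measurable image $g(Z_j)$ of the corresponding i.i.d.\ draw $Z_j$, with $g$ fixed given $\mathcal{D}_1$. By the usual empirical-quantile lemma for exchangeable variables (as in \cite{romano2019ConformalizedQuantile} and \cite{tibshirani2019ConformalPrediction}): since $\vartheta_{1-\alpha} = q(\cqty{s_j : j\in \mathcal{I}_2}; (1-\alpha)(1+1/n_2))$ is the $\lceil(1-\alpha)(n_2+1)\rceil$-th order statistic of the $n_2$ calibration scores (with the convention $\vartheta_{1-\alpha}=+\infty$, so $\tilde{C}\equiv\mathbb{R}$ and coverage is trivial, if this index exceeds $n_2$), one has $\Prob(s_{n+1}\le \vartheta_{1-\alpha}\mid \mathcal{D}_1)\ge 1-\alpha$, the inequality surviving ties among the scores. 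Taking expectations over $\mathcal{D}_1$ and combining with the inclusion finishes the argument; the conformal interval $\tilde{C}_I$ of \eqref{eq:conformalized interval} is the special case $M=1$.

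I do not expect a genuine obstacle: this is the interval-censoring analogue of classical split-conformal validity and rests only on exchangeability of the calibration-plus-test scores conditional on the training sample. The two points requiring precision are (a) verifying the deterministic inclusion above, in particular that a negative $\vartheta_{1-\alpha}$ does not break it, and (b) quoting the quantile lemma with the correct ceiling index and in its tie-robust form, so that the conclusion is the exact inequality ``$\ge 1-\alpha$'' rather than an approximate one.
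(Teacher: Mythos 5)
Your proposal is correct and follows essentially the same route as the paper's own proof: the deterministic link between the event $\cqty{s_{n+1}\le\vartheta_{1-\alpha}}$ and $[Y^L_{n+1},Y^U_{n+1}]\subset\tilde{C}(X_{n+1})$ (hence $Y_{n+1}\in\tilde{C}(X_{n+1})$ by interval censoring), combined with the exchangeability-based quantile bound of Lemma 2 in \cite{romano2019ConformalizedQuantile}. Your write-up merely makes explicit some details the paper leaves implicit, namely conditioning on the training sample, the ceiling-index/tie-robust form of the quantile lemma, and the harmlessness of a negative $\vartheta_{1-\alpha}$.
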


The following theorem shows that the conformal prediction set \(\tilde{C}(x)\) preserves the consistency of the estimator \(\hat{C}(x)\) for the oracle prediction set.
Recall the definition of $s(y^{L}, y^{U}, x; C)$ in Equation \eqref{eq:score function}, let \(s_{j} = s(Y^{L}_{j}, Y^{U}_{j}, X_{j}; \hat{C})\), and \(s_{j}^{*} = s\pqty{Y_{j}^{L}, Y_{j}^{U}, X_{j}; C^*}\).

\begin{theorem}\label{thm:consistency of conformal}
Given an i.i.d. sample \(\mathcal{D} = \{(X_{i}, Y_{i}^{L}, Y^{U}_{i} ): i\in \mathcal{I}\}\) which is split into training \(\mathcal{I}_{1}\) and calibration sets \(\mathcal{I}_{2}\). Let $\tilde{C}(x)$ be the conformal prediction set constructed by the preceding procedure in \autoref{eq:conformal prediction set}, if $s(Y^L, Y^U, X; C^*)$ has positive density in a neighbourhood around $0$, then under the assumptions of \autoref{thm:consistency multiple}, 
\begin{equation*}
\sup_{x} \mu\pqty{{\tilde{C}(x) \bigtriangleup C^*(x)}} = o_{p}(1)
\end{equation*}
as the sample sizes \(n_{1}, n_{2}\to \infty\), here \(n_{1}= \abs{\mathcal{I}_{1}}\) and \(n_{2}= \abs{\mathcal{I}_{2}}\).
\end{theorem}

It is tempting to look for a finite-sample conditionally valid prediction set \(\tilde{C}(x)\) such that for all \(P\),
\begin{equation*}
\Prob \pqty{Y_{n+1} \in \tilde{C}(x) \mid X_{n+1} = x} \geq 1 - \alpha.
\end{equation*}
However, it is impossible to construct a prediction set that is finite-sample conditionally valid and consistent for the oracle prediction set \citep{lei2014DistributionfreePrediction, barber2021LimitsDistributionfree, gibbs2024ConformalPrediction}, and hence we consider the weaker notion of local validity. Let \(\mathcal{A}= \cqty{A_{k}: k\leq K}\) be a partition of \(\mathcal{X}\), a prediction interval \({C}_{\text{loc}}(x)\) is \textit{finite-sample locally valid} with respect to \(\mathcal{A}\), if for all \(P\),
\begin{equation*}
\Prob \pqty{Y_{n+1} \in {C}_{\text{loc}}(X_{n+1}) \mid X_{n+1} \in \mathcal{A}_{k}} \geq 1 - \alpha.
\end{equation*}

Let \(\mathcal{I}_{2,k} = \cqty{i\in \mathcal{I}_{2}: X_{i} \in A_{k}}\), \(n_{2,k} = \abs{\mathcal{I}_{2,k}}\), and $$\vartheta_{1-\alpha,k} = q \pqty{\cqty{s_{j}: j \in \mathcal{I}_{2,k}}; (1 - \alpha)\pqty{1+\frac{1}{n_{2,k}}}},$$ the  local conformal prediction set is
\begin{equation}\label{eq:local conformal pred set}
\tilde{C}_{\text{loc}}(x) = \bigcup_{m=1}^M \bqty{\hat{\tau}_{0m}(x) -  \sum_{k\leq K} \indicator{x \in A_{k}} \vartheta_{1-\alpha,k}, \hat{\tau}_{1m}(x) + \sum_{k\leq K} \indicator{x \in A_{k}}\vartheta_{1-\alpha,k}}.
\end{equation}

The finite-sample local validity is a result of exchangeability, similar to \cite{lei2014DistributionfreePrediction}.
\begin{theorem}\label{thm:local validity}
Suppose \(\{(X_{i}, Y_{i}^{L}, Y_{i}^{U}): i\in \mathcal{I}\}\) are i.i.d. Let \(\mathcal{A}\) be a partition of \(\mathcal{X}\). Then the local conformal prediction set \(\tilde{C}_{\text{loc}}(x)\) is finite-sample locally valid with respect to \(\mathcal{A}\).
\end{theorem}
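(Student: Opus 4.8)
The plan is to mimic the split-conformal argument behind Theorem~\ref{thm:marginal}, but carried out cell by cell and conditioned on the event $\{X_{n+1}\in A_k\}$. Fix a cell $A_k$ of the partition $\mathcal{A}$. I would condition throughout on the training subsample $\mathcal{D}_1 = \{Z_i : i\in\mathcal{I}_1\}$, so that $\hat{C} = \sqcup_{m\le M}[\hat\tau_{0m},\hat\tau_{1m}]$ is a fixed deterministic element of $\mathcal{C}_M$, and integrate $\mathcal{D}_1$ out at the end. The first step is the purely deterministic reduction already used for the conformal prediction set in \eqref{eq:conformal prediction set}: since $Y_{n+1}\in[Y^L_{n+1},Y^U_{n+1}]$, it suffices to show $[Y^L_{n+1},Y^U_{n+1}]\subset\tilde{C}_{\mathrm{loc}}(X_{n+1})$, and on $\{X_{n+1}\in A_k\}$ the indicator sum in the definition of $\tilde{C}_{\mathrm{loc}}$ collapses to $\vartheta_{1-\alpha,k}$. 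Now if the score $s_{n+1} = s(Y^L_{n+1},Y^U_{n+1},X_{n+1};\hat{C})$ from \eqref{eq:score function} satisfies $s_{n+1}\le\vartheta_{1-\alpha,k}$, then by the structure of the $\min$--$\max$ in \eqref{eq:score function} there is an index $m^\ast$ with $\hat\tau_{0m^\ast}(X_{n+1}) - Y^L_{n+1}\le\vartheta_{1-\alpha,k}$ and $Y^U_{n+1} - \hat\tau_{1m^\ast}(X_{n+1})\le\vartheta_{1-\alpha,k}$, hence $[Y^L_{n+1},Y^U_{n+1}]\subset[\hat\tau_{0m^\ast}(X_{n+1}) - \vartheta_{1-\alpha,k},\ \hat\tau_{1m^\ast}(X_{n+1}) + \vartheta_{1-\alpha,k}]\subset\tilde{C}_{\mathrm{loc}}(X_{n+1})$. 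So it remains to prove $\Prob(s_{n+1}\le\vartheta_{1-\alpha,k}\mid X_{n+1}\in A_k)\ge 1-\alpha$.

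For that I would invoke the standard quantile/exchangeability lemma on the points landing in $A_k$. Condition additionally on the random set $\mathcal{I}_{2,k} = \{i\in\mathcal{I}_2 : X_i\in A_k\} = S$ for a fixed $S$ with $|S| = \ell\ge 1$, together with $\{X_{n+1}\in A_k\}$. Given $\mathcal{D}_1$, the variables $\{(X_i,Y^L_i,Y^U_i) : i\in S\cup\{n+1\}\}$ are i.i.d.\ draws from the conditional law $P(\,\cdot\mid X\in A_k)$, hence exchangeable; since $\hat{C}$ is fixed, so are the scores $\{s_i : i\in S\cup\{n+1\}\}$. The usual argument — the rank of $s_{n+1}$ among these $\ell+1$ scores is stochastically dominated by the uniform law on $\{1,\dots,\ell+1\}$, with ties only helping — gives $\Prob\big(s_{n+1}\le q(\{s_j : j\in S\};(1-\alpha)(1+1/\ell))\big)\ge\lceil(1-\alpha)(\ell+1)\rceil/(\ell+1)\ge 1-\alpha$, which is exactly $\Prob(s_{n+1}\le\vartheta_{1-\alpha,k}\mid\mathcal{I}_{2,k} = S,\ X_{n+1}\in A_k)\ge 1-\alpha$. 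Averaging over $S$ against the conditional law of $\mathcal{I}_{2,k}$ given $\{X_{n+1}\in A_k\}$, and then over $\mathcal{D}_1$, yields the claim; the degenerate case $\ell = 0$ is covered by the convention $\vartheta_{1-\alpha,k} = +\infty$, under which $s_{n+1}\le\vartheta_{1-\alpha,k}$ holds trivially.

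The only real subtlety — and the step I would spell out carefully — is the conditioning bookkeeping: verifying that conditioning jointly on $\mathcal{D}_1$, on the identity of the calibration indices falling in $A_k$, and on $\{X_{n+1}\in A_k\}$ leaves the relevant collection of scores exchangeable, so that the quantile inequality applies with the \emph{random} local sample size $n_{2,k}$, and handling the degenerate cell. Everything else is the deterministic score manipulation of the first paragraph together with the textbook split-conformal quantile bound, exactly parallel to the proof of Theorem~\ref{thm:marginal}.
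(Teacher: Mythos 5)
Your proposal is correct and follows essentially the same route the paper takes (and largely leaves implicit): the paper proves local validity by the same within-cell exchangeability argument \`a la Lei and Wasserman, combined with the split-conformal quantile bound already used for Theorem~\ref{thm:marginal} and the deterministic reduction from $s_{n+1}\le\vartheta_{1-\alpha,k}$ to $[Y^L_{n+1},Y^U_{n+1}]\subset\tilde{C}_{\text{loc}}(X_{n+1})$. Your careful bookkeeping of the conditioning on $\mathcal{D}_1$, on $\mathcal{I}_{2,k}$, and on $\{X_{n+1}\in A_k\}$, plus the empty-cell convention, simply supplies details the paper omits.
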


In the next section, we demonstrate the performance of the conformal and local conformal prediction sets in a set of numerical experiments.
\section{Numerical experiments}\label{sec:simulation}
We report here the results of a series of numerical experiments to evaluate the performance of the conformal prediction sets.
In all the experiments, we randomly draw a sample of predictors \(\cqty{X_{i}, i= 1,2,\dots, n}\), with sample size \(n=2,500\) and each \(X_{i}\sim \text{Unif}[-1.5, 1.5]\). The miscoverage level is fixed at \(\alpha = 0.1\).
The unobserved outcomes \(Y_{i}\) and the observed intervals \((Y_{i}^{L}, Y_{i}^{U})\) are then generated based on one of the models described below.

\begin{itemize}
\item \textit{Model A.} The unobserved outcomes \(Y_i\) are generated according to a mixture model with heteroscedasticity, similar to the simulations in \cite{lei2014DistributionfreePrediction}.
\begin{equation}\label{eq:mixture model}
    (Y_{i} \mid X_{i} = x)\sim B_{i} N(f(x)+ g(x), \sigma^{2}(x)) + (1-B_{i}) N(f(x)- g(x), \sigma^{2}(x))
\end{equation}
Here, the auxiliary random variable \(B_i\) is independently drawn from a Bernoulli distribution with probability \(0.5\), and the following functions are used.
\begin{align*}
    f(x) = 2(x-1)^{2}(x+1);\, g(x) = 4 \sqrt{(x + 0.5) \indicator{x\geq -0.5}};\,  \sigma^{2}(x) = \frac{1}{4} + \abs{x},
\end{align*}
The observed intervals \((Y_i^L, Y_i^U)\) are obtained by adding independent noise to the unobserved outcomes.
\begin{equation}\label{eq:simulation random censor}
    Y_{i}^{L} = Y_{i} - e_{1,i}, \quad Y_{i}^{U} = Y_{i} + e_{2,i}, \quad e_{1,i}, e_{2,i} \sim_{\text{i.i.d.}} |N(0,1)|.
\end{equation}
In this design, the conditional distribution \(P_{Y\mid X}\) is unimodal for smaller values of \(X\) and bi-modal for larger values of \(X\). Regression and quantile regression-based prediction sets will have difficulty capturing the bifurcation shape of the oracle prediction set.

\item \textit{Model B.}  The unobserved outcomes are generated in the same way as in \autoref{eq:mixture model} above. We randomly select 20\% of the unobserved outcomes, and they are interval valued, and in this case and \((Y^{L}_{i}, Y_{i}^{U})\) have a discrete distribution, while the remaining 80\%  are observed without censoring. Let \(\iota_{i}\) be a random draw from a Bernoulli distribution with probability 0.2,
\begin{equation}\label{eq:simulation fixed censor}
    \begin{cases}
        Y_{i}^{L} = Y_{i}^{U} = Y_{i} &\mathtext{if} \iota_{i} = 0;\\
        Y_{i}^{L} = \lfloor Y_{i} \rfloor, Y_{i}^{U} = Y_{i}^{L} +1 &\mathtext{if} \iota_{i} = 1.
    \end{cases}
\end{equation}
This model is designed to evaluate the performance of the conformalized prediction set under fixed censoring. We choose to censor 20\% of the outcomes, as approximately 20\% of the income data is interval valued in the dataset used for our empirical studies.
\item \textit{Model C.}  The unobserved outcomes are generated according to the following model,
\begin{equation*}
    Y_{i} = f(X_{i}) + \epsilon_{i}, \quad \epsilon_{i} \sim_{\text{i.i.d.}} \chi^{2}_{1.5},
\end{equation*}
and \((Y_{i}^{L}, Y_{i}^{U})\) are generated according to \autoref{eq:simulation fixed censor}.
In this case, the conditional distribution is unimodal and skewed.
\end{itemize}

The observations \(\cqty{(X_{i}, Y_{i}^{L}, Y_{i}^{U}), i = 1,\dots,n}\), are randomly split into a training set $\mathcal{I}_{1}$ and a calibration set $\mathcal{I}_{2}$, with the training set containing 75\% of the observations.
We estimate the prediction set \(\hat{C}(x)\) based on the kernel smoothing estimator \(P_{n}(C;x)\) with the Epanechnikov kernel for the conditional distribution $P(C; x)$, as described in Equation (\ref{eq:Phat}), using the observations in the training set.
The conformal prediction sets \(\tilde{C}\) are then constructed according to Equation (\ref{eq:conformal prediction set}) using the calibration set.
Local conformal prediction sets \(\tilde{C}_{\text{loc}}\) are obtained using the local conformalisation procedure, where the support \(\mathcal{X}\) is divided into 5 equal-sized bins.

We then compare the average coverage and volume of the conformal prediction set \(\tilde{C}\) and the local conformal prediction set \(\tilde{C}_{\text{loc}}\) over 100 repetitions with two alternative conformal prediction intervals: \(\tilde{C}_{q2}\), based on a quantile regression assuming a quadratic model, and \(\tilde{C}_{q3}\), based on a quantile regression assuming a cubic model.
See \cite{romano2019ConformalizedQuantile} and \autoref{sec:quantile} for more details about the construction of a prediction interval based on quantile methods.
The coverage is computed by randomly generating 5,000 samples of \((Y^{L}, Y^{U})\) and calculating the proportion of intervals covered by the prediction sets. The volume is calculated as the integrated Lebesgue measure of the prediction sets over the support \(\mathcal{X}\).

Figure \ref{fig:compare_prediction_sets} shows the comparison of the prediction sets from the numerical experiments.
The prediction sets \(\tilde{C}\) and \(\tilde{C}_{\text{loc}}\) capture the bifurcation region of the conditional distribution in \textit{Model A} and \textit{Model B} and have a smaller volume compared to the quantile regression-based prediction sets.
Even when the conditional distribution is unimodal, and the oracle prediction set is a single interval for each $x\in \mathcal X$, our proposed estimators still have smaller volume as demonstrated in \textit{Model C}.
In terms of coverage, \(\tilde{C}, \tilde{C}_{\text{loc}}\) and \(\tilde{C}_{q3}\) appear to have coverage close to the nominal level $1-\alpha = 0.9$. Notably, \(\tilde{C}_{\text{loc}}\) appears to have better coverage than \(\tilde{C}\) near the boundaries of the support \(\mathcal{X}\).

\begin{figure}[p]
\centering
\begin{minipage}{0.3\textwidth}
\textit{Model A}
\centering
\includegraphics[width=0.8\linewidth]{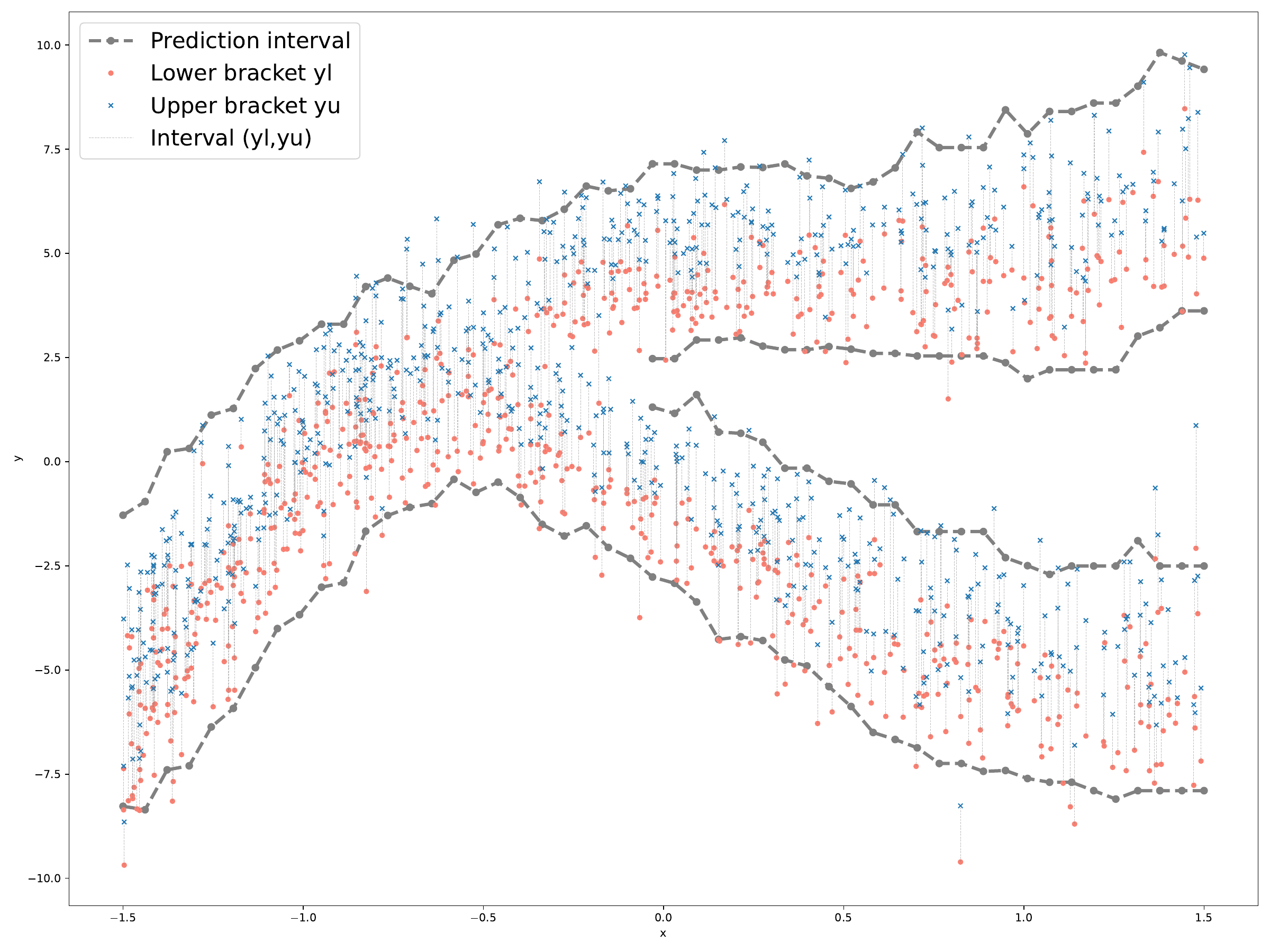}
\vspace{0.5em}
\includegraphics[width=0.8\linewidth]{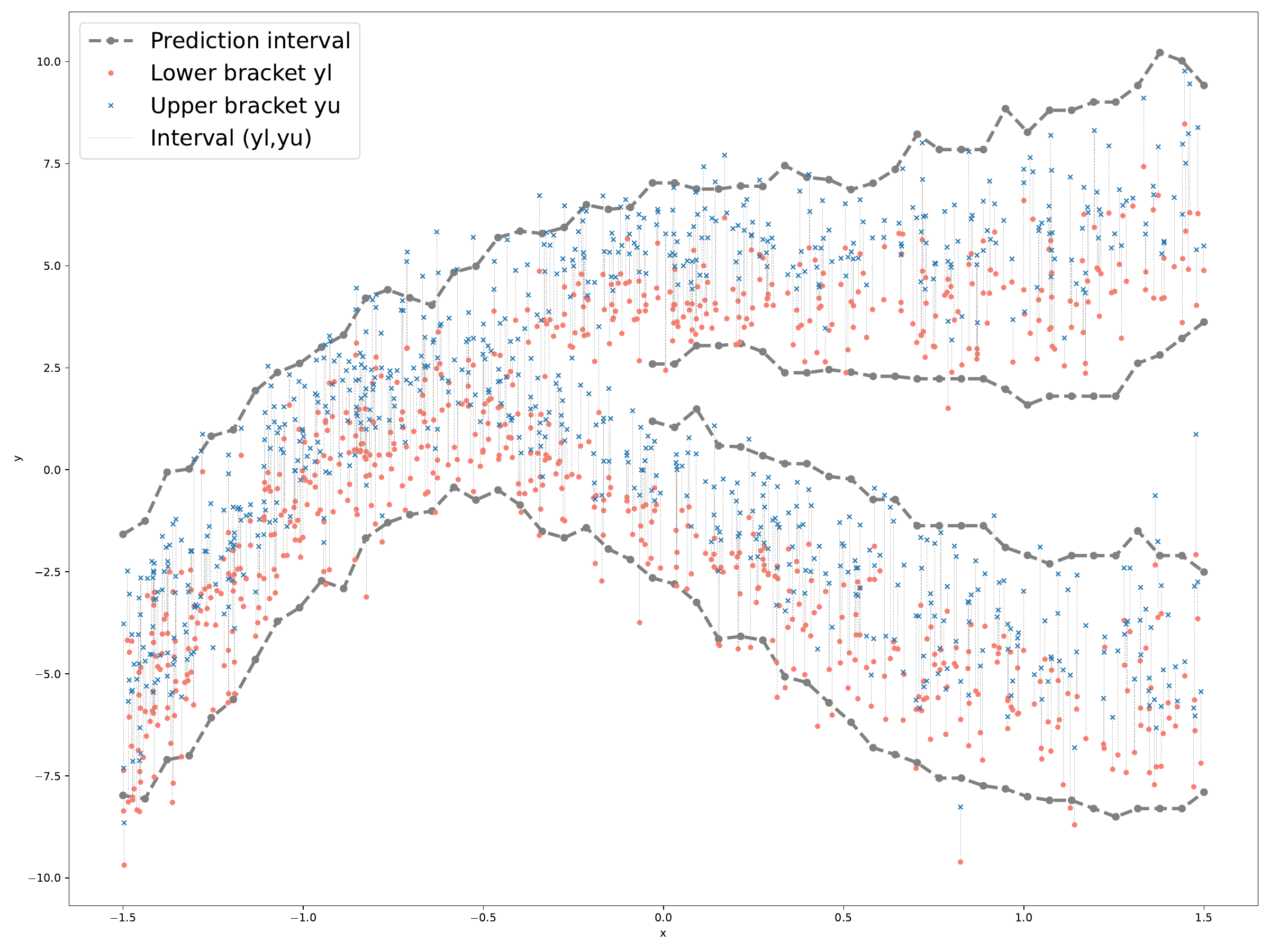}
\vspace{0.5em}
\includegraphics[width=0.8\linewidth]{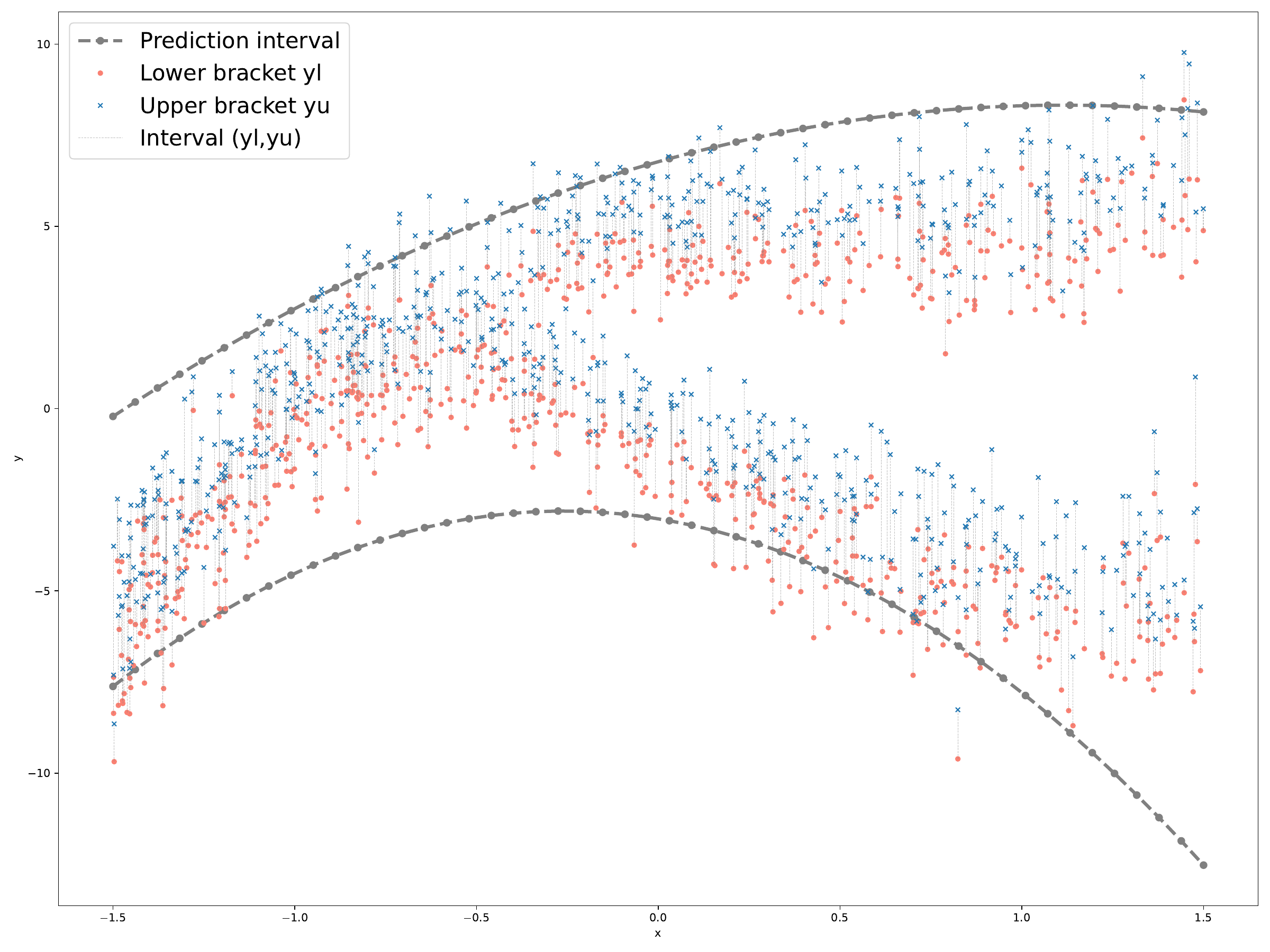}
\vspace{0.5em}
\includegraphics[width=0.8\linewidth]{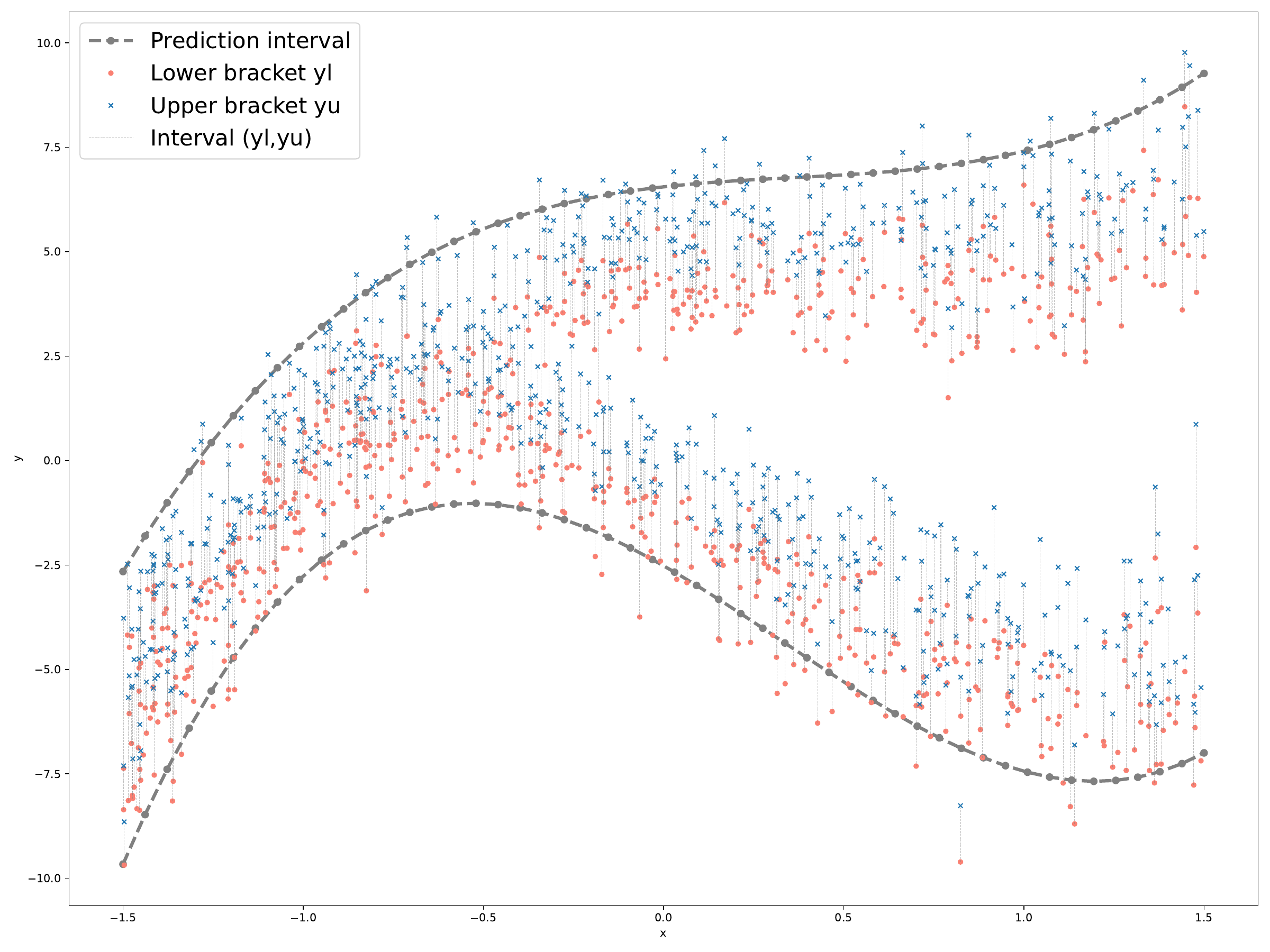}
\vspace{0.5em}
\includegraphics[width=0.8\linewidth]{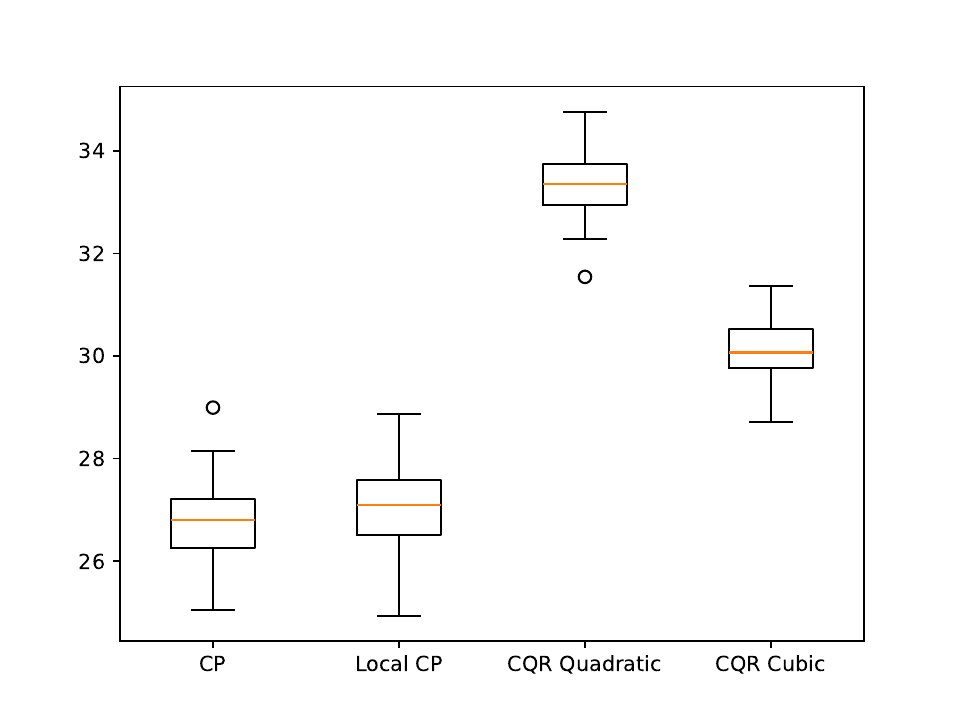}
\vspace{0.5em}
\includegraphics[width=0.8\linewidth]{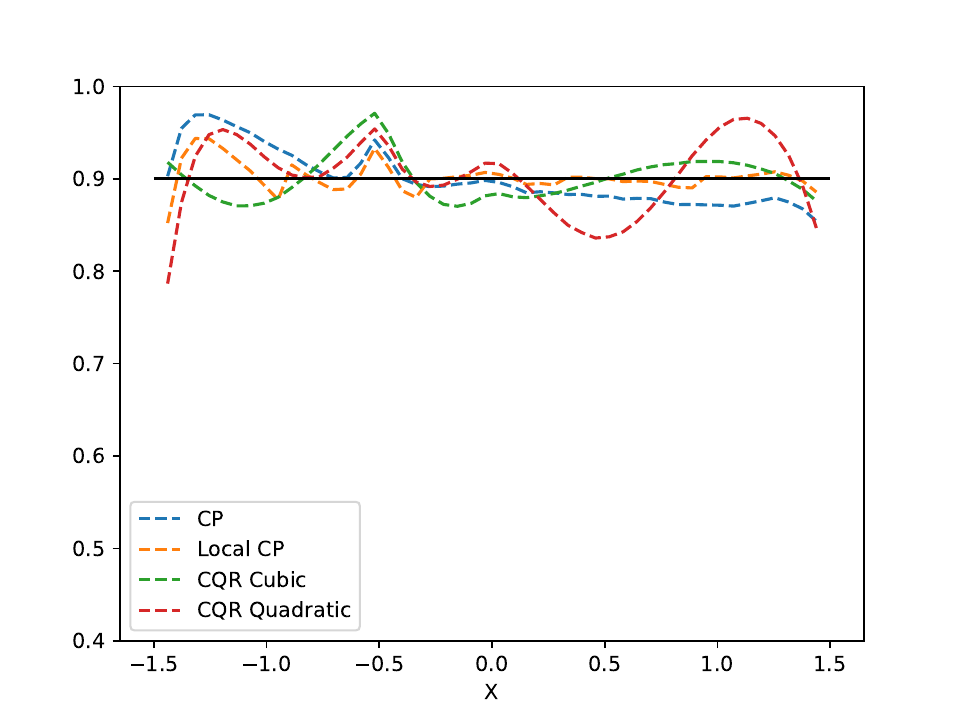}
\end{minipage}
\begin{minipage}{0.3\textwidth}
\textit{Model B}
\centering
\includegraphics[width=0.8\linewidth]{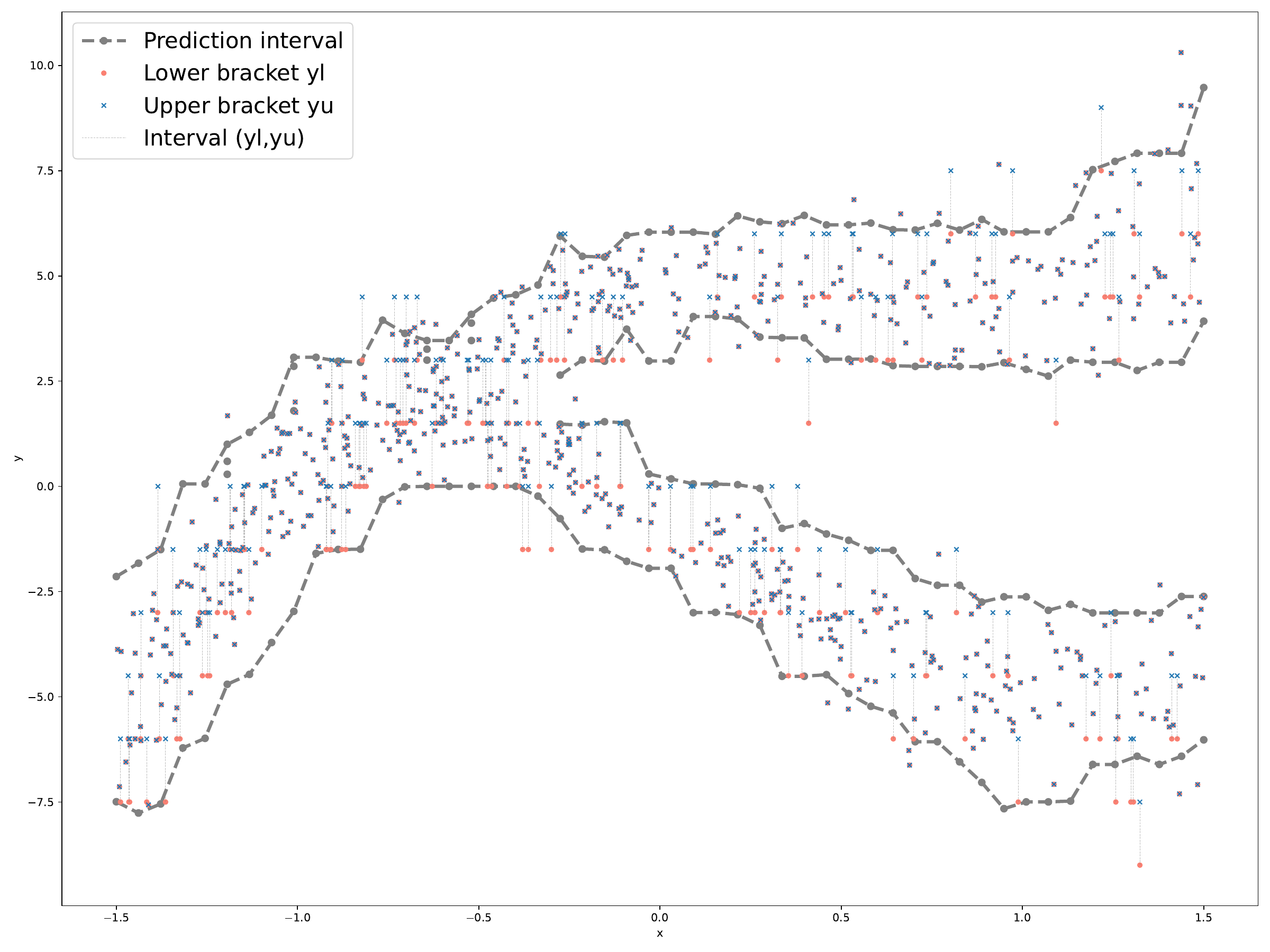}
\vspace{0.5em}
\includegraphics[width=0.8\linewidth]{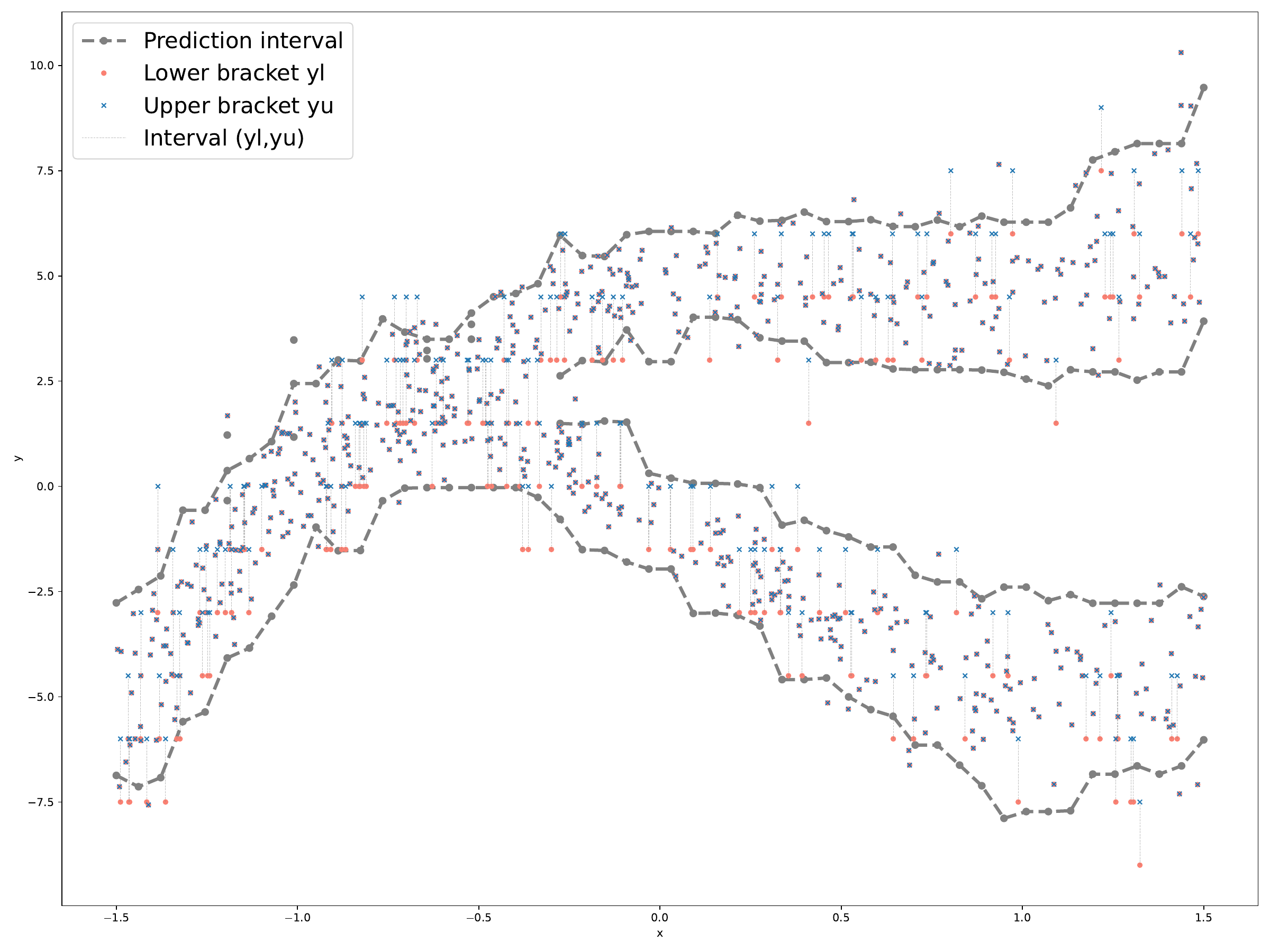}
\vspace{0.5em}
\includegraphics[width=0.8\linewidth]{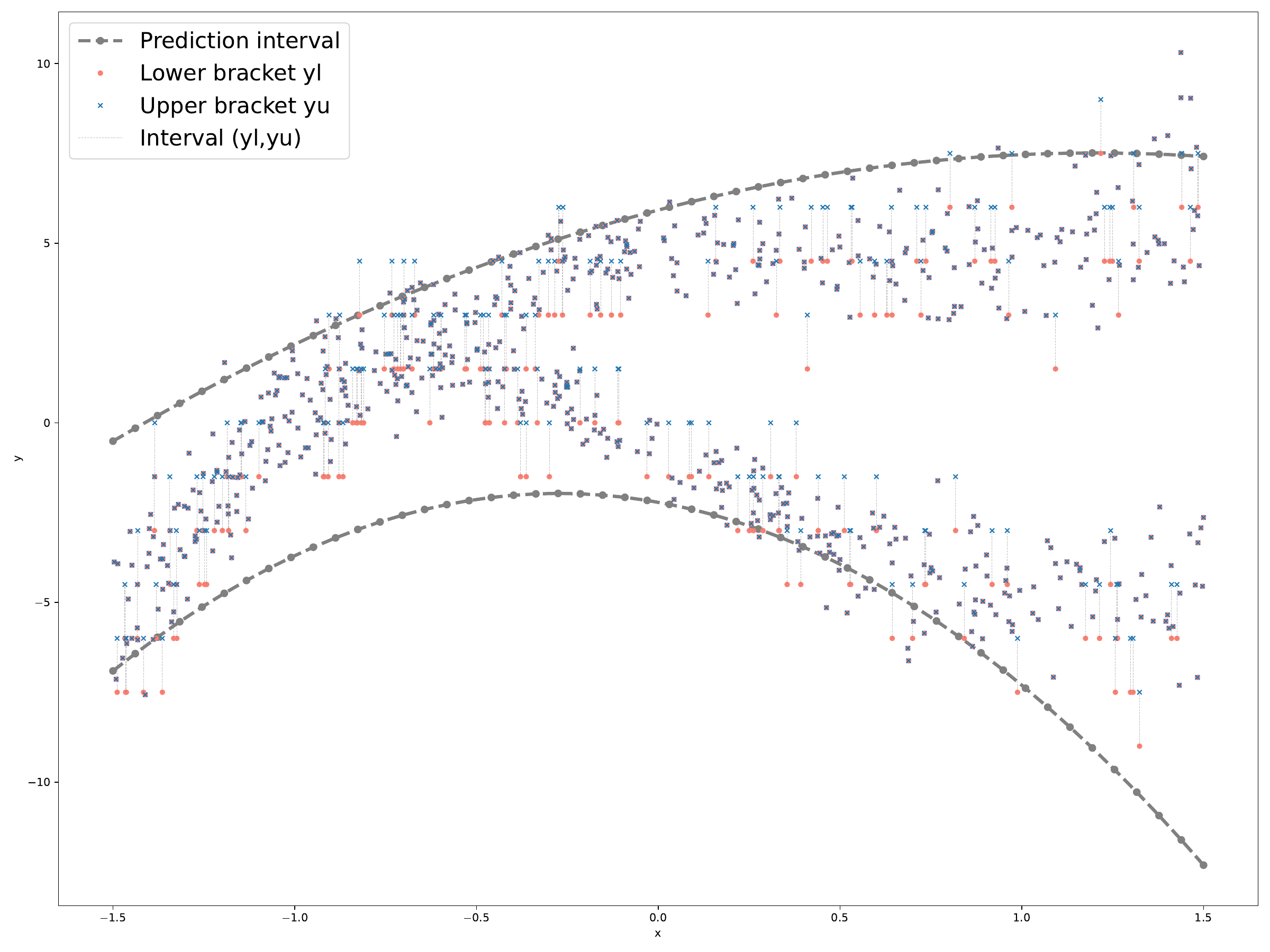}
\vspace{0.5em}
\includegraphics[width=0.8\linewidth]{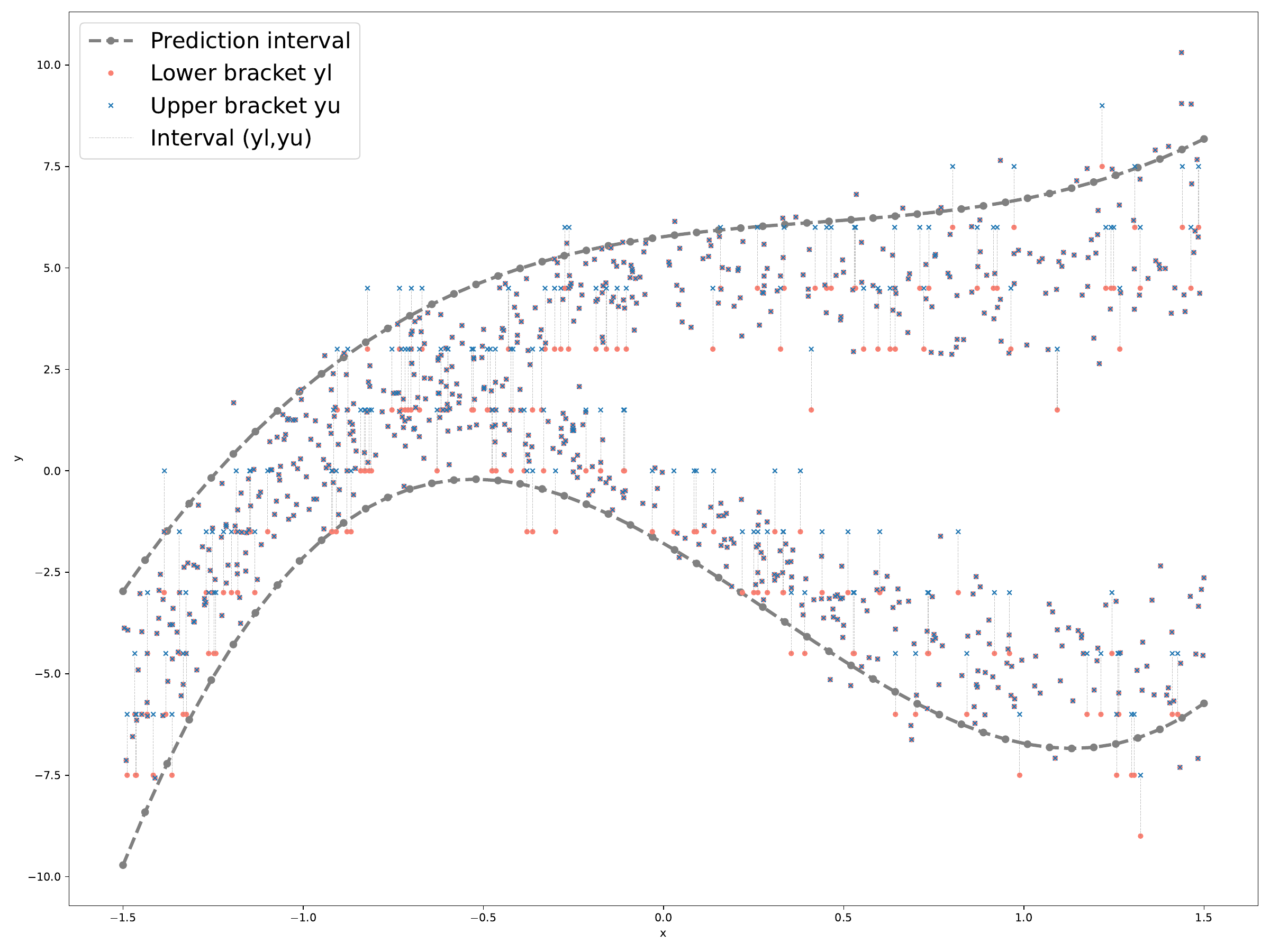}
\vspace{0.5em}
\includegraphics[width=0.8\linewidth]{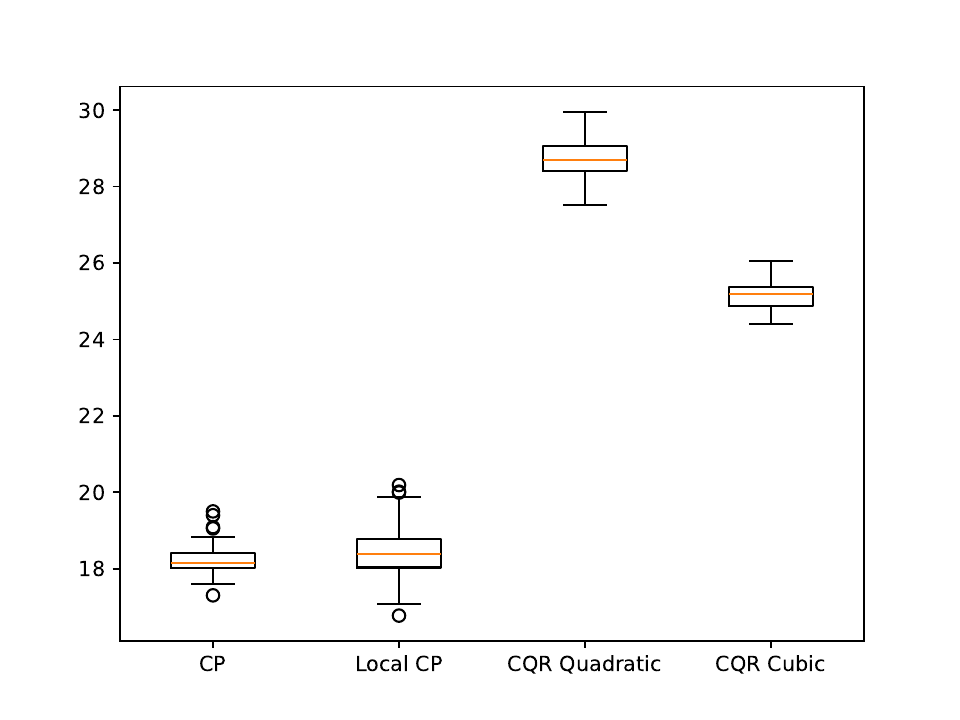}
\vspace{0.5em}
\includegraphics[width=0.8\linewidth]{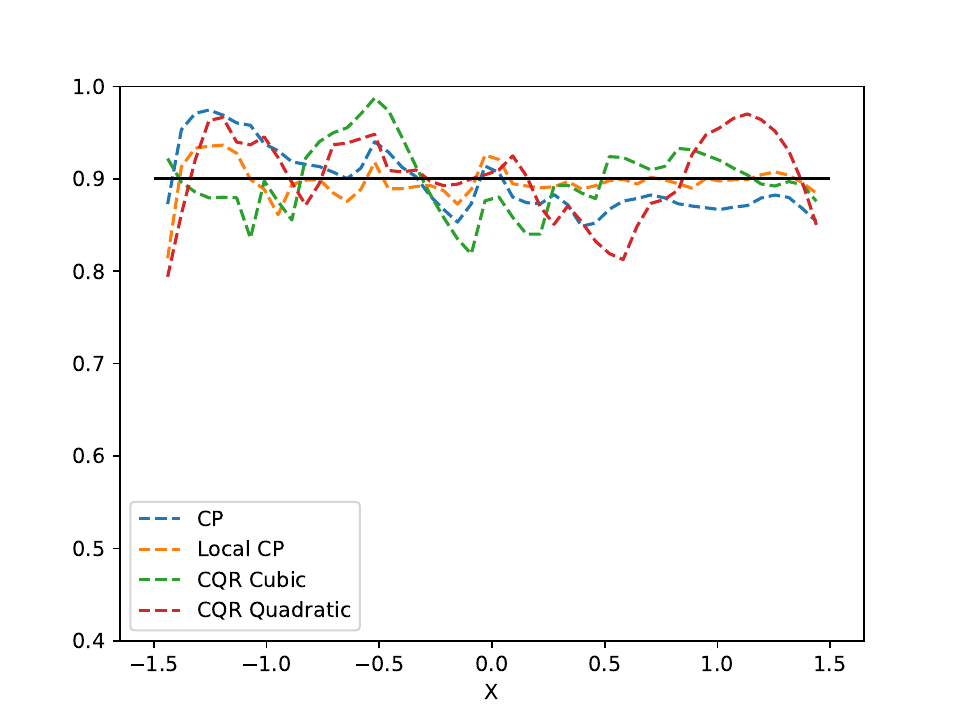}
\end{minipage}
\begin{minipage}{0.3\textwidth}
\textit{Model C}
\centering
\includegraphics[width=0.8\linewidth]{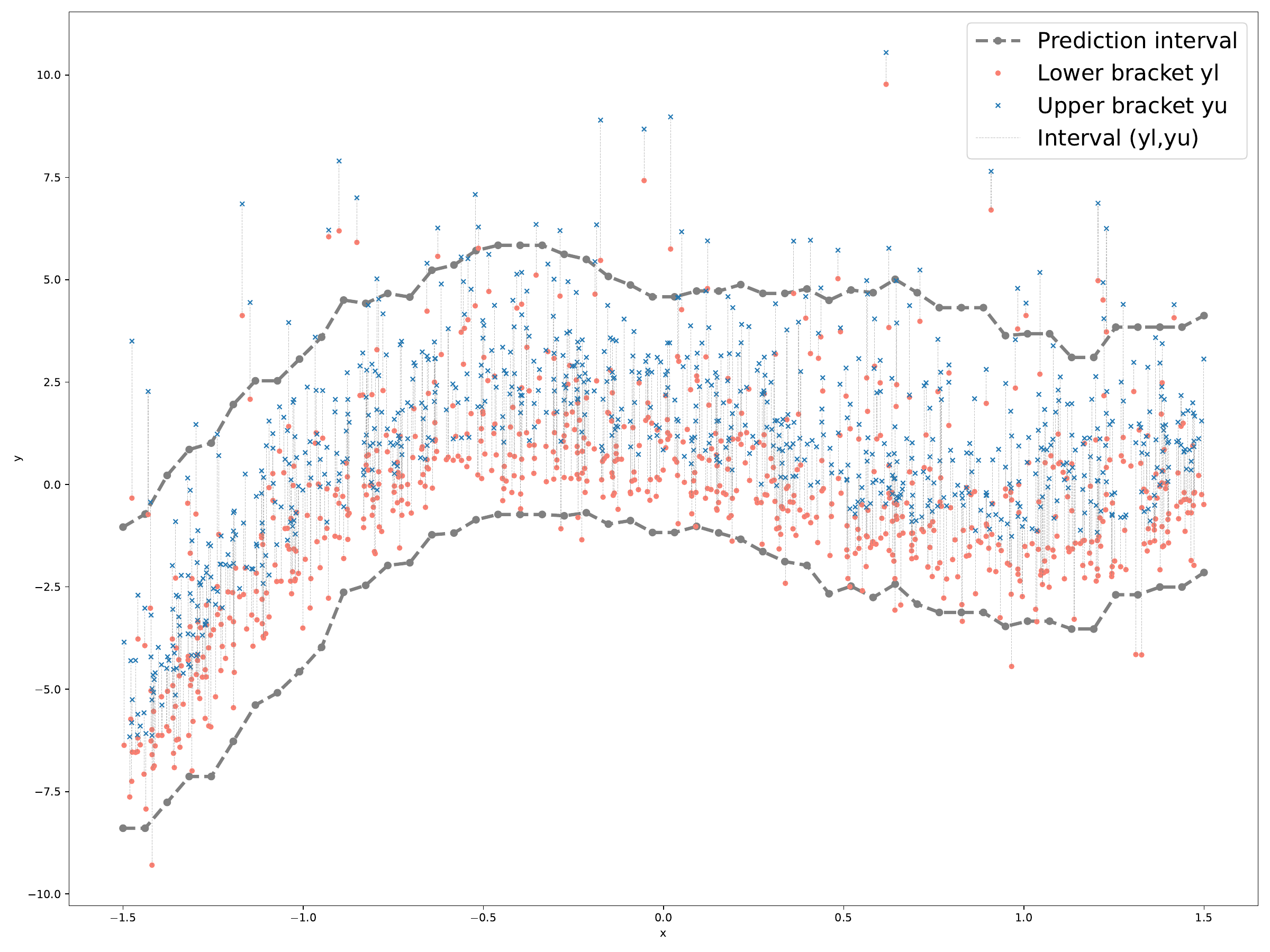}
\vspace{0.5em}
\includegraphics[width=0.8\linewidth]{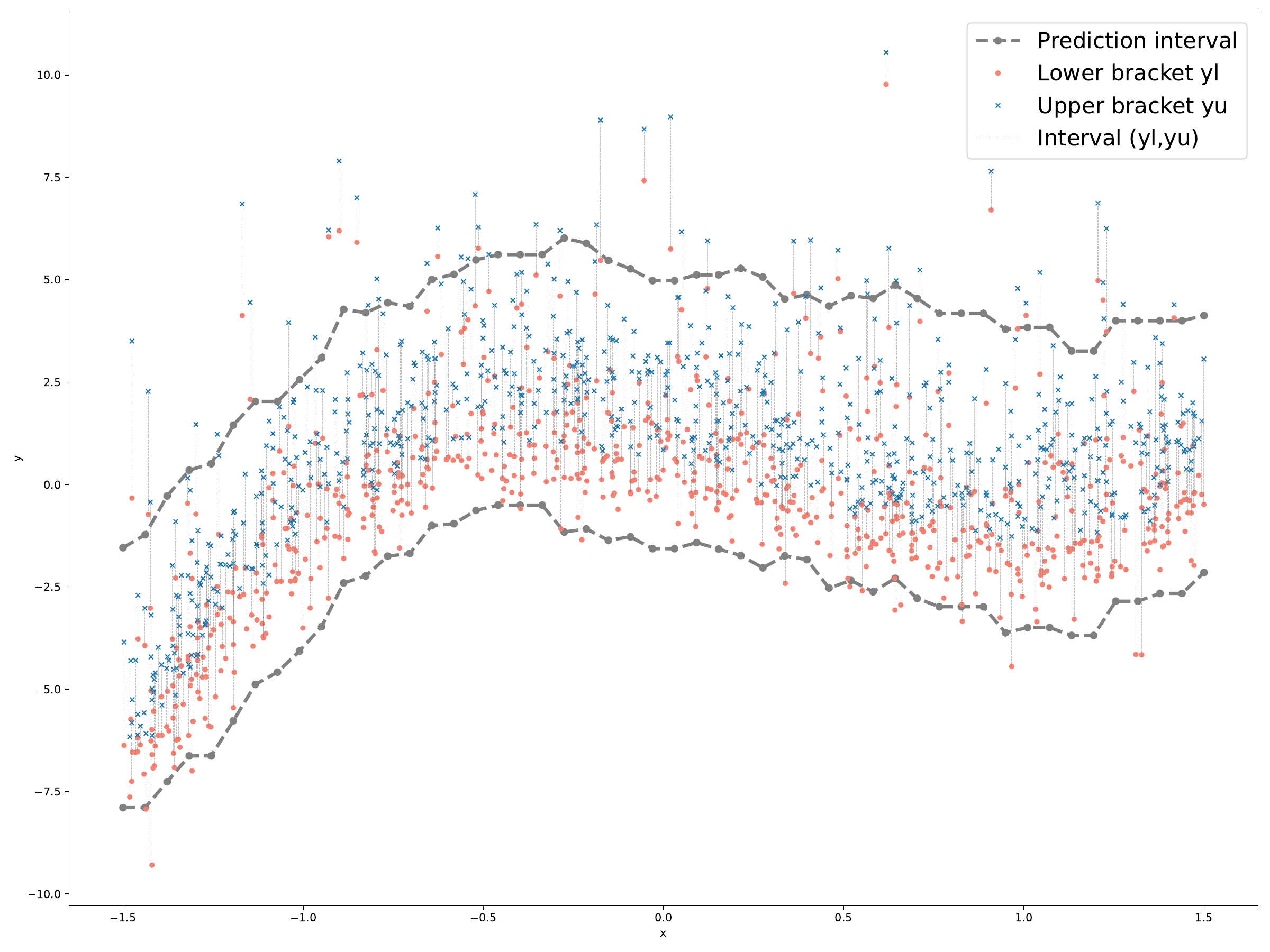}
\vspace{0.5em}
\includegraphics[width=0.8\linewidth]{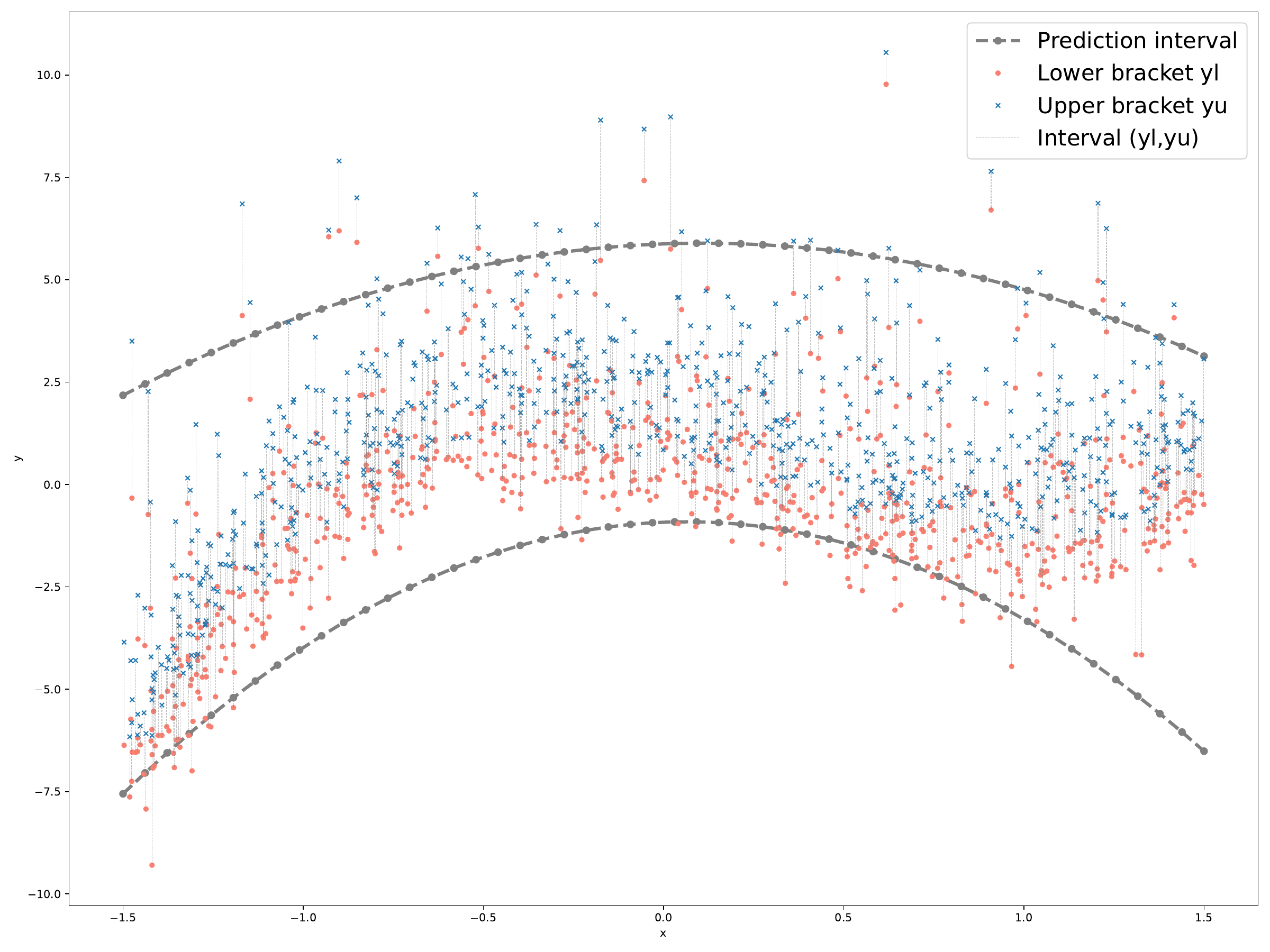}
\vspace{0.5em}
\includegraphics[width=0.8\linewidth]{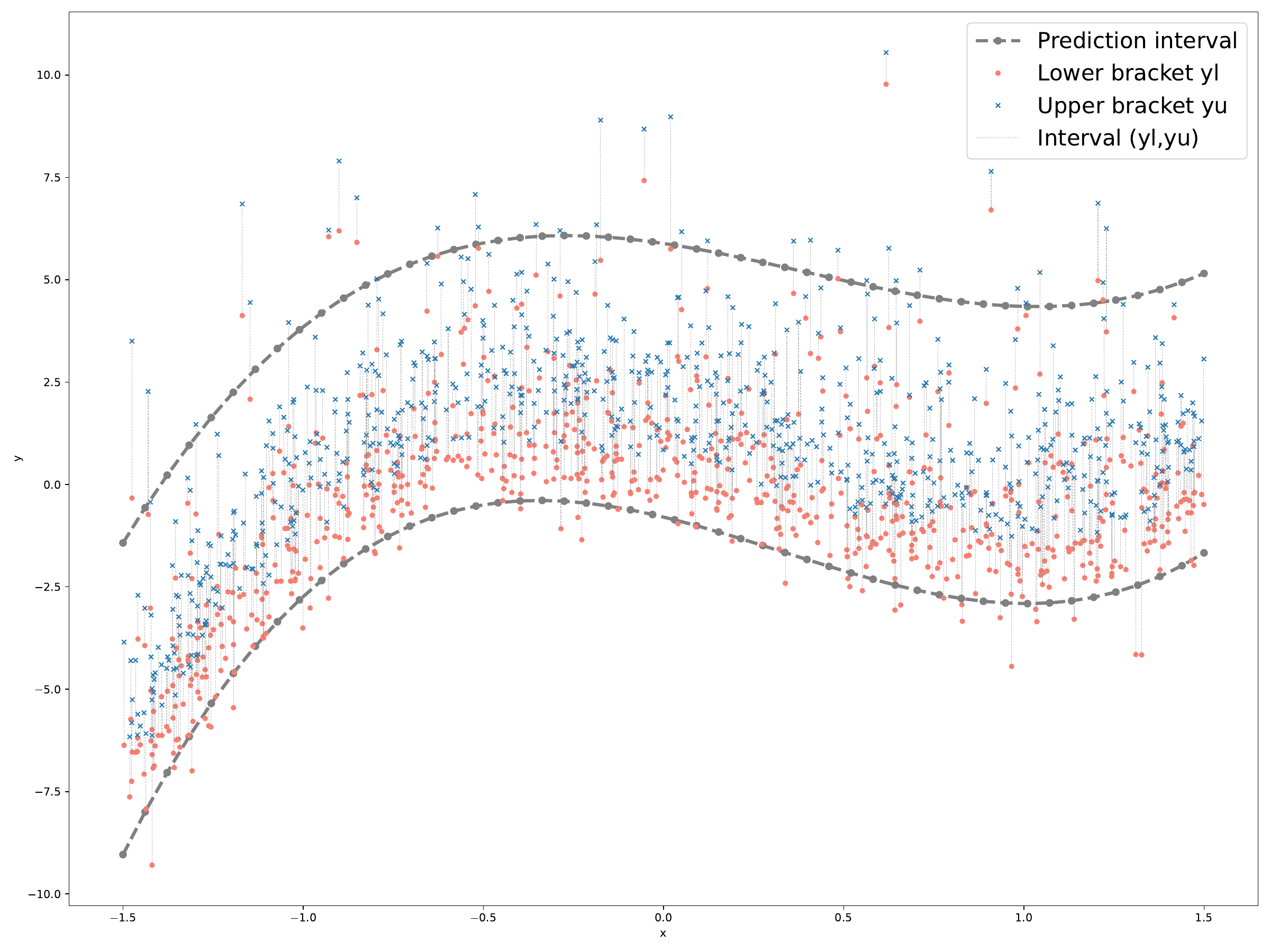}
\vspace{0.5em}
\includegraphics[width=0.8\linewidth]{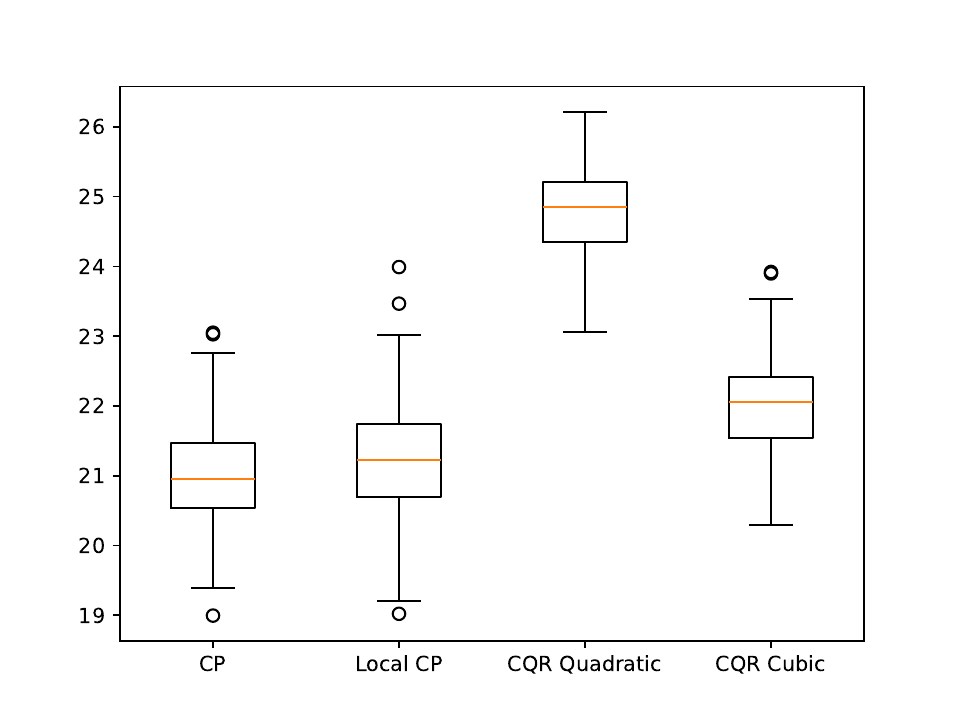}
\vspace{0.5em}
\includegraphics[width=0.8\linewidth]{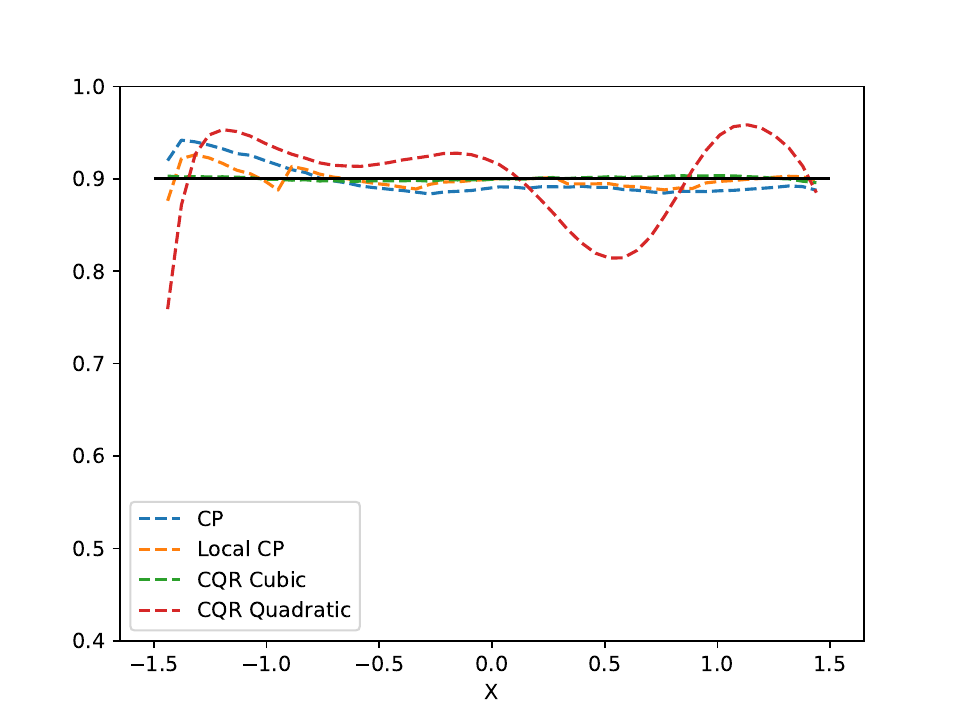}
\end{minipage}
\caption{\small Comparison of prediction sets in simulation study. Row 1-4 shows the observed samples \((X_{i}, Y^{L}_{i}, Y_{i}^{U})\) and the conformal prediction set \(\tilde{C}\), the local conformal prediction set \(\tilde{C}_{\text{loc}}\), the conformal prediction set constructed with quadratic quantile regression \(\tilde{C}_{q2}\), and the conformal prediction set constructed with cubic quantile regression \(\tilde{C}_{q3}\). Row 5 shows the integrated volume of the prediction sets, and row 6 shows the coverage of the prediction sets.}
\label{fig:compare_prediction_sets}
\end{figure}

\section{Empirical study}\label{sec:empirical}

\subsection{UK Job Advert Data}\label{sec:empirical_uk}

We have also applied our method to the UK job advert dataset from Adzuna, an online job advert aggregator in the UK. Adzuna data have been used widely as an indicator of economic activity in the UK, including by the Office for National Statistics as experimental statistics \citep{ons2021UsingAdzuna}. It has also been used to develop a measure of labour market opportunities for heterogeneous types of workers \citep{dias2021WorkerMobility}.

The original sample consists of job adverts posted in the United Kingdom between January and March of 2022, which comprise 121,746 observations.
Each advertisement contains information on job location, job category, and salary, when available.  Salary information is missing for 37.8\% (46,077) of the postings. Among the advertisements that report salaries, the majority are listed as salary intervals, accounting for approximately 62\% (46,606) of all postings with salary information, making this dataset well-suited for our interval-valued outcome framework.

The outcome of interest is the annual salary associated with each job advert, with $(Y_i^L, Y_i^U)$ denoting the reported lower and upper salary bounds for the $i$-th advert, and we work with the logarithm of the salary bounds. In the analysis below, we focus on the IT industry, which is the largest job category by number of postings (see \autoref{fig:job distribution by category}).
For the 19,385 job advertisements in the IT industry, 40\% (7,742 postings) report salary information, of which 64\% (4,981 postings) provide salaries in intervals; in our analysis, we restrict attention to the subset of postings that report salary information.
Job location is represented by geographic coordinates (latitude and longitude), which we treat as continuous covariates.

For a given location (latitude and longitude), we estimate the 90\% oracle prediction interval (see \autoref{def: optimal}) for the salary distribution in the IT category conditional on location. We employ the kernel smoothing estimator of the conditional probability using a rectangular kernel applied to geographic distance and solve for the empirical analogue of the feasible optimization problem in Equation~\eqref{eq:feasible criterion}, yielding prediction interval $\hat{C} = (\hat{\tau}_0(x), \hat{\tau}_1(x))$. Then we construct the local conformal prediction set $\tilde{C}_{\text{local}}$ according to \autoref{eq:local conformal pred set}. To visualise spatial variation in predicted salary distributions, we evaluate the predicted upper and lower bounds on a grid of locations covering the UK and the resulting prediction bounds are then interpolated. Importantly, our approach accommodates the interval-valued nature of the salary data without imposing parametric assumptions or relying on imputation of missing salaries.

Figure~\ref{fig:job-advert} plots the estimated lower and upper bounds of the 90\% local conformal prediction interval for annual salaries across geographic locations in the UK. For reference, the median annual pay for full-time employees was £33,000 for the tax year ending on 5 April 2022.\footnote{Source: \url{https://www.ons.gov.uk/employmentandlabourmarket/peopleinwork/earningsandworkinghours/bulletins/annualsurveyofhoursandearnings/2022}. \emph{Employee earnings in the UK: 2022}, Measures of employee earnings using data from the Annual Survey of Hours and Earnings (ASHE), Office for National Statistics.} We can observe a few patterns in the spatial distribution of the predicted salary bounds.
First, the estimated lower bounds exhibit moderate geographic heterogeneity and are relatively even across space and are typically below £30,000. Lower bounds tend to be higher around major cities, while rural areas display lower predicted lower bounds or no job postings.
Second, the upper bounds display a much thicker right tail and substantially stronger spatial concentration, with predicted values ranging roughly from £50,000 to £230,000. Areas surrounding major metropolitan centres, such as London, Manchester, Bristol, Birmingham and Edinburgh, stand out with markedly higher upper bounds than the rest of the country, indicating pronounced geographic concentration of high-paying jobs. As a result, the predicted salary interval is widest in London and other large cities, reflecting greater wage dispersion and job heterogeneity in these labour markets.

Table~\ref{tab:city_salary_bounds} reports the estimated 90\% local conformal prediction intervals for annual IT salaries at selected major UK cities, evaluated at their geographic coordinates. The results illustrate substantial cross-city heterogeneity, particularly in the upper bounds of the predicted salary distributions. London stands out with both relatively higher predicted lower bound and upper bound, leading to the widest prediction interval among the listed cities, consistent with high wage dispersion and job heterogeneity in the capital. Other large metropolitan areas such as Manchester, Edinburgh, and Glasgow also exhibit wide intervals driven primarily by elevated upper bounds, while Birmingham displays a comparatively narrower interval, reflecting more moderate dispersion in predicted salaries. Across all cities, predicted lower bounds are relatively similar and generally below \pounds25,000, whereas differences in upper bounds account for most of the variation in interval widths. We also note that, although the salary intervals reported in individual job postings are typically narrow, their levels vary substantially across postings even within the IT industry. This dispersion leads to the relatively large estimated interval widths.

Overall, these findings highlight the importance of accounting for interval-valued salary information when studying spatial variation in labour market opportunities. In particular, these patterns reinforce the importance of analysing the full conditional salary distribution rather than relying on point estimates, as the upper tail of the distribution varies markedly across locations. Our approach captures spatial variation in both the predicted lower and upper bounds, thereby providing richer insights than simple point estimates such as conditional means or medians, without imposing parametric assumptions or relying on imputation of missing salaries.

\begin{table}[]
\caption{\footnotesize Estimated 90\% Local Conformal Prediction Intervals for IT Salaries in Major UK Cities}\label{tab:city_salary_bounds}
\begin{tabular}{l|ccccc}
\toprule
City & Latitude & Longitude & Predicted lower & Predicted upper & Interval width \\
\midrule
London & 51.50°N & 0.14°W & £22,204 & £180,094 & £157,889 \\
Manchester & 53.48°N & 2.23°W & £19,757 & £159,553 & £139,795 \\
Birmingham & 52.48°N & 1.89°W & £16,885 & £110,028 & £93,143 \\
Edinburgh & 55.94°N & 3.33°W & £17,855 & £159,388 & £141,533 \\
Glasgow & 55.86°N & 4.26°W & £15,822 & £144,000 & £128,178 \\
Bristol & 51.46°N & 2.59°W & £22,636 & £135,657 & £113,021 \\
\bottomrule
\end{tabular}
\end{table}

\begin{figure}[htbp]
\centering
\includegraphics[width=0.7\textwidth]{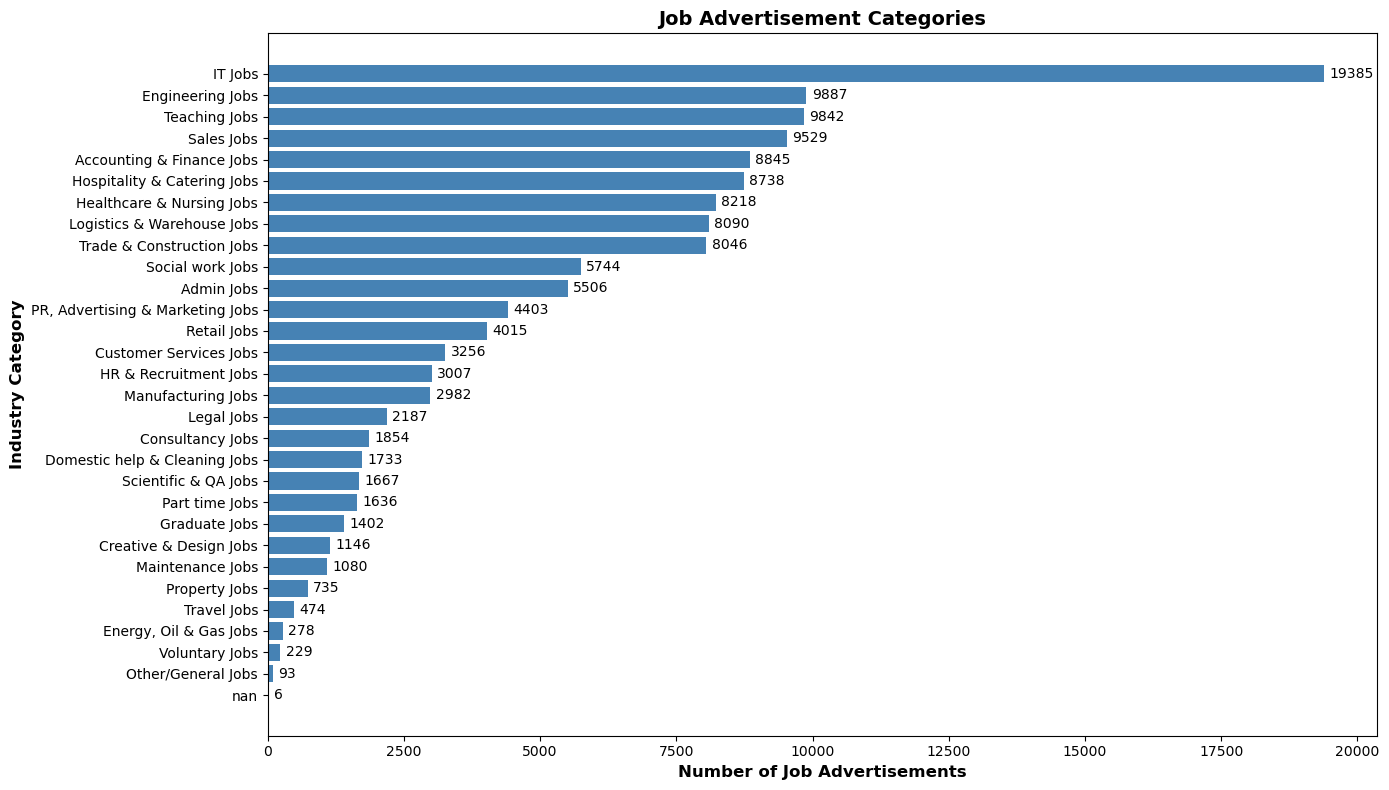}
\caption{Job advert counts by category.}
\label{fig:job distribution by category}
\end{figure}

\begin{figure}[htbp]
\centering
\begin{subfigure}[b]{0.48\textwidth}
\centering
\includegraphics[width=\linewidth]{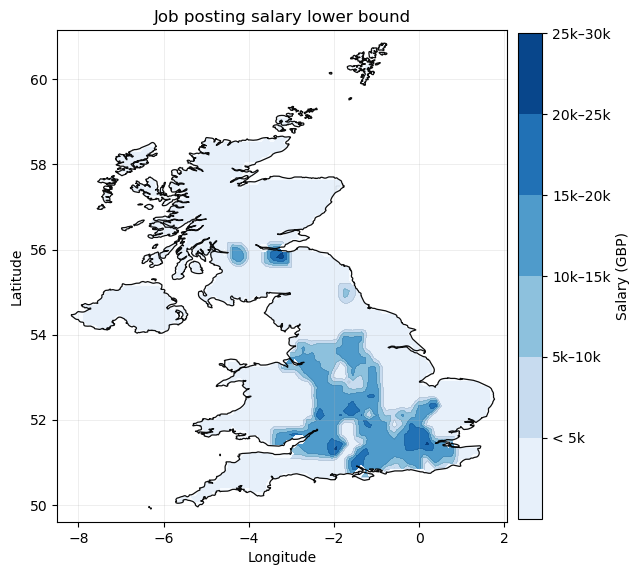}
\caption{Lower bound}
\end{subfigure}
\hfill
\begin{subfigure}[b]{0.48\textwidth}
\centering
\includegraphics[width=\linewidth]{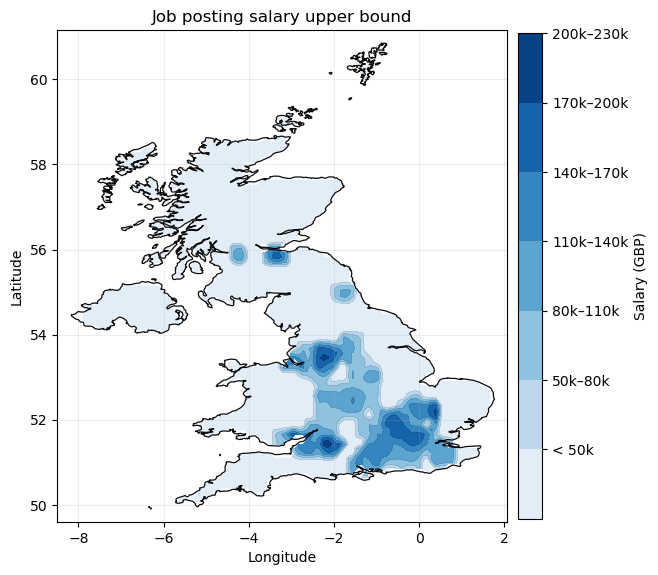}
\caption{Upper bound}
\end{subfigure}
\caption{Local conformal prediction interval for annual salaries in the UK IT job market. The left panel shows the estimated lower bound, and the right panel shows the estimated upper bound of the 90\% local conformal prediction interval for annual salaries across geographic locations in the UK}
\label{fig:job-advert}
\end{figure}

\newpage

\subsection{US CPS data}

We apply our method to the 2023 US Current Population Survey (CPS) dataset and construct a prediction interval for their income given years of education and experience.
Following \cite{heckman2008EarningsFunctions}, we construct the following variables from the dataset for individuals who report being American citizens and employed:
\begin{itemize}
\item \textit{Income:} We include all individuals who reported their income or income range in the survey.
Since 2014, when a person refuses to report their income, the interviewer will ask for a range, and the person's income will be imputed by matching with similar people who have income in the same range. In the 2019 ASEC Updates\footnote{See \href{https://cps.ipums.org/cps/asec_2019_changes.shtml}{https://cps.ipums.org/cps/asec\_2019\_changes.shtml}}, it is stated that ``if respondents did not give a value for income received from a given source, the interviewer now follows up with a question about income ranges. Respondents who did not give a specific value were asked if they received over \$60,000, between \$45,000 and \$60,000, or less than \$45,000 from this income source. Those who replied that they earned less than \$45,000 from this source were further asked if they earned more than \$30,000, between \$15,000 and \$30,000, or less than \$15,000 from this income source.''
For people who reported their income as larger than $ \$60,000$, we set the upper bound of the income as $\$400,000$, which is the topcode of the income variable in the dataset.
The individuals who did not report their income or their income range are excluded.

\item \textit{Age:} We include individuals who are between 18 and 65 years old.
\item \textit{Education:} Following \cite{heckman2008EarningsFunctions}, we construct the years of education for each individual by converting the highest qualification they have obtained into years of education.
\item \textit{Experience:} We construct the years of experience for each individual as their age minus the years of education minus 6 (which is presumed to be the entry age into education).
\end{itemize}


The CPS dataset consists of 146,133 observations. There are 53,171 individuals who satisfy the demographic restrictions above. Among the individuals that satisfy the demographic criteria, 39,333 reported their income either exactly or as a band and the response rate is thus around 74\%.
The issue of nonresponse in surveys has been studied in the literature; see for example, \cite{lillard1986WhatWe} and \cite{bollinger2019TroubleTails}, where they document an increasing proportion of nonresponse of earnings in survey data and the potential bias it introduces. We will not address the nonresponse issue in this paper, but it is an important consideration in practice. Our proposed predictions could assist in the imputation of missing values in a robust manner; see \cite{lei2021ConformalInference}. In fact, allowing respondents to report their income in bands could help mitigate the non-response problem. 
For the 39,333 individuals who reported their incomes and satisfy the demographic restrictions, 8,819 of them reported either their gross or net income as a band, which thus amounts to roughly 22.4\%. of the respondents.
Figure \ref{fig:histogram_education_range_indicator_cps} shows the proportion of individuals who reported their income in a range by years of education.

\begin{figure}[htbp]
\centering\includegraphics[width=0.75 \textwidth]{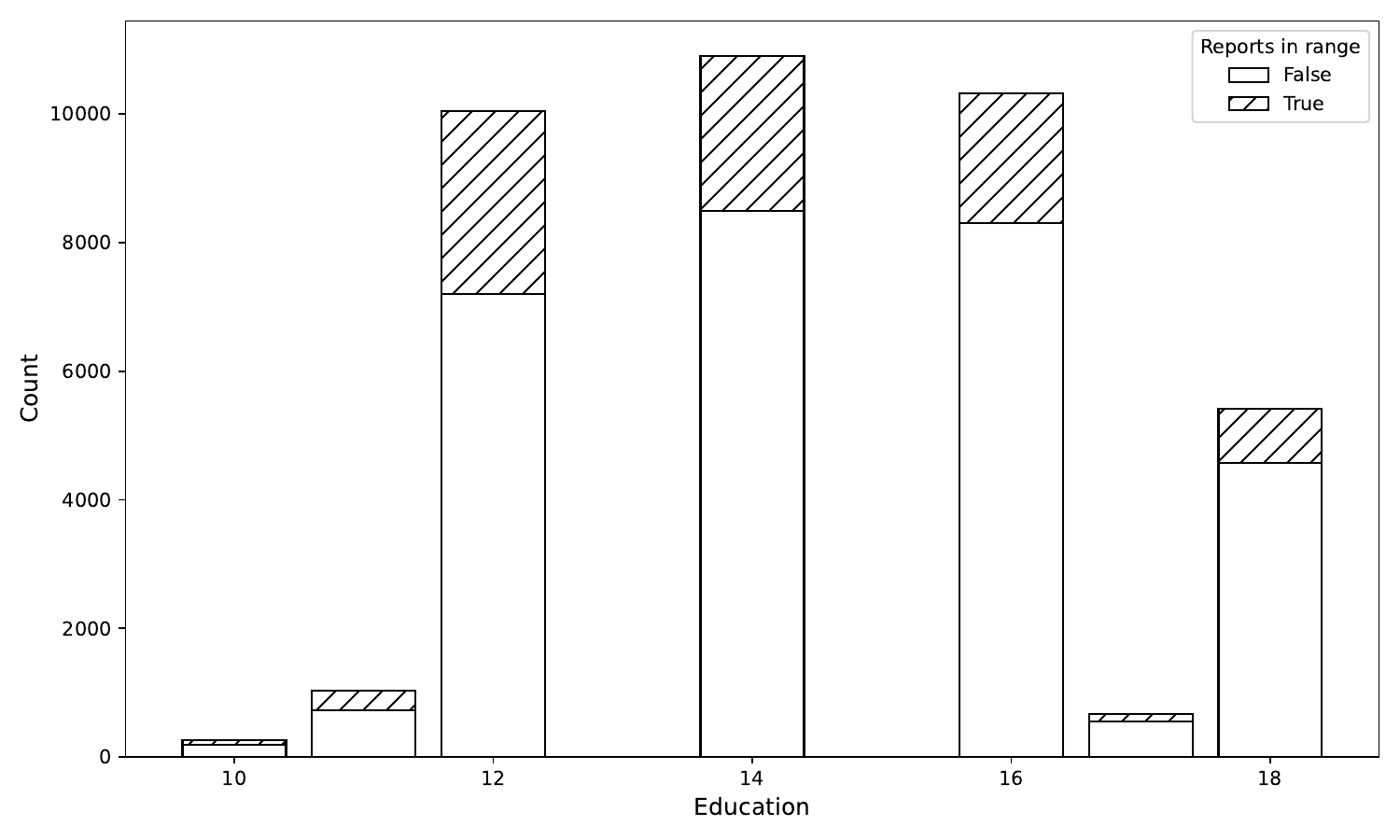}
\caption{Proportion of individuals who reported income in a range by years of education.}
\label{fig:histogram_education_range_indicator_cps}
\end{figure}

In our empirical study, we first randomly reserve 20\% of the data as a hold-out dataset.
The remaining 80\% is then further divided into training and calibration sets, with 75\% of the data used for training to estimate the prediction interval \(\hat{C}_{I}\) and 25\% used for constructing the conformalized prediction interval \(\tilde{C}_{I}\).
Table \ref{tab:CPS} presents the average estimated prediction intervals across the 100 iterations.
The method column indicates the way we estimate the prediction interval.
``P'' stands for prediction interval $\hat{C}_{I}$ in (\ref{eq:CIhat}) and ``CP'' stands for conformal prediction interval $\tilde{C}_{I}$. If ``M'' is appended, it indicates the prediction intervals are estimated with imputed income for individuals who reported their income in an interval.
It can be seen from the table that we would in general, predict a higher income for individuals with more years of education.

The lower bounds of the prediction intervals increase with years of education, as well as the high-density region of the income distribution, which is shown in the columns with $\alpha=0.9$ and a corresponding coverage of 10\%.
This is consistent with what has been documented in the literature.
Having more years of experience tends to increase the predicted income as well.
By varying the choice of $\alpha$, we can gain useful insights into the predicted income distribution.

The impact of including interval-valued income data on the prediction interval is more pronounced at smaller $\alpha$ values, as evidenced by the results for $\alpha=0.1$ in the table.
This is because, with a 90\% coverage target, the tails of the income distribution become more critical.
Given that 20\% of the observations are reported in ranges and many fall into the highest category, incorporating interval-valued data shifts the right tail of the income distribution.
Consequently, the prediction intervals constructed with interval-valued data are significantly different from the ones constructed with imputed income data.

\autoref{tab:table2} shows the coverage of the interval valued incomes in the hold-out dataset for the conformal prediction interval $\tilde C$ that are constructed using both interval-valued and exactly reported income data (shown in the “Censored” columns), as well
as for the conformal prediction intervals constructed when the interval-valued incomes are imputed (shown in the “Imputed” columns).
The significant difference in prediction intervals constructed with interval-valued data and with imputed data may be because there are only five categories for individuals who report their income in a range.
This seems somewhat coarse in terms of informing about the true income distribution. Notice that the conformal coverage using the censored model has coverage close to the nominal rate, while the Imputed Coverage is almost always lower. Note that in Tables \ref{tab:CPS} and \ref{tab:table2}, when we display different levels of coverage \(\alpha\), this provides an empirical estimate of the  \((1-\alpha)\)\% prediction sets, which can be informative about the shape of the data distribution.
Notice that standard reporting of confidence intervals on parameters is meant to summarise sampling uncertainty, while here not only do these conformalized sets take into account sampling uncertainty, but they also estimate prediction sets of the underlying distributions of interest.

{\footnotesize\begin{table}[p]

    \renewcommand{\arraystretch}{1} 
    \centering
    \caption{Prediction intervals for annual income}
    \label{tab:CPS}
    \scalebox{0.8}{\resizebox{\textwidth}{!}{%
            \begin{tabular}{lllllll}
                \toprule
                &  & Edu. & 12 & 14 & 16 & 18 \\
                $\alpha$ & Method & Exp. &  &  &  &  \\
                \midrule
                \multirow[t]{16}{*}{0.1} & \multirow[t]{4}{*}{CP} & 10 & (3,750 - 148,077) & (6,183 - 146,329) & (24,068 - 427,369) & (30,365 - 433,655) \\
                &  & 20 & (11,321 - 457,690) & (13,725 - 426,357) & (18,722 - 425,517) & (38,276 - 440,264) \\
                &  & 30 & (12,862 - 389,206) & (13,743 - 389,322) & (22,539 - 437,454) & (29,392 - 449,567) \\
                &  & 40 & (12,174 - 402,666) & (11,374 - 421,875) & (13,292 - 444,424) & (11,678 - 493,910) \\
                \cline{2-7}
                & \multirow[t]{4}{*}{CPM} & 10 & (8,867 - 99,274) & (11,341 - 128,093) & (17,879 - 194,759) & (25,095 - 243,709) \\
                &  & 20 & (10,518 - 145,653) & (12,617 - 159,334) & (18,680 - 270,824) & (35,708 - 301,124) \\
                &  & 30 & (13,134 - 127,668) & (13,748 - 183,252) & (19,765 - 305,760) & (26,054 - 341,740) \\
                &  & 40 & (10,675 - 138,292) & (9,893 - 169,031) & (11,674 - 304,466) & (11,033 - 411,907) \\
                \cline{2-7}
                & \multirow[t]{4}{*}{P} & 10 & (4,501 - 112,987) & (6,545 - 136,283) & (24,828 - 412,162) & (31,079 - 422,347) \\
                &  & 20 & (12,437 - 403,886) & (14,077 - 411,421) & (19,247 - 411,791) & (38,814 - 432,214) \\
                &  & 30 & (13,310 - 372,180) & (13,577 - 392,820) & (23,110 - 424,860) & (29,853 - 441,393) \\
                &  & 40 & (12,495 - 389,214) & (11,455 - 417,218) & (13,512 - 435,372) & (12,725 - 446,886) \\
                \cline{2-7}
                & \multirow[t]{4}{*}{PM} & 10 & (8,978 - 97,593) & (11,606 - 124,455) & (18,444 - 188,033) & (26,000 - 234,411) \\
                &  & 20 & (10,774 - 141,664) & (12,895 - 155,148) & (18,928 - 266,643) & (37,019 - 288,533) \\
                &  & 30 & (13,284 - 125,232) & (14,218 - 176,678) & (20,387 - 294,974) & (26,773 - 329,103) \\
                &  & 40 & (10,945 - 134,348) & (10,076 - 165,370) & (12,014 - 292,832) & (12,259 - 359,563) \\

                \cline{1-7} \cline{2-7}
                \multirow[t]{16}{*}{0.9} & \multirow[t]{4}{*}{CP} & 10 & (37,280 - 47,224) & (51,494 - 64,217) & (50,508 - 59,915) & (64,290 - 77,790) \\
                &  & 20 & (46,705 - 58,681) & (47,282 - 57,681) & (47,429 - 59,726) & (90,065 - 104,636) \\
                &  & 30 & (44,508 - 56,170) & (48,707 - 58,304) & (60,125 - 73,829) & (86,622 - 104,886) \\
                &  & 40 & (54,771 - 68,861) & (51,089 - 63,476) & (64,694 - 84,384) & (110,919 - 146,037) \\
                \cline{2-7}
                & \multirow[t]{4}{*}{CPM} & 10 & (32,945 - 39,246) & (43,036 - 49,255) & (48,741 - 55,971) & (84,551 - 99,614) \\
                &  & 20 & (48,124 - 55,754) & (55,927 - 66,073) & (72,426 - 87,084) & (75,760 - 86,633) \\
                &  & 30 & (43,362 - 51,148) & (50,080 - 57,792) & (74,437 - 87,542) & (87,295 - 99,780) \\
                &  & 40 & (47,259 - 56,903) & (54,230 - 65,552) & (62,610 - 77,602) & (92,021 - 112,387) \\
                \cline{2-7}
                & \multirow[t]{4}{*}{P} & 10 & (36,525 - 48,183) & (50,920 - 64,919) & (49,748 - 60,842) & (64,738 - 77,259) \\
                &  & 20 & (45,691 - 59,964) & (46,176 - 59,062) & (46,590 - 60,791) & (90,123 - 104,542) \\
                &  & 30 & (43,507 - 57,444) & (47,394 - 59,939) & (58,687 - 75,646) & (86,500 - 105,008) \\
                &  & 40 & (53,896 - 69,941) & (49,935 - 64,921) & (64,204 - 84,959) & (111,821 - 144,773) \\
                \cline{2-7}
                & \multirow[t]{4}{*}{PM} & 10 & (32,919 - 39,274) & (42,881 - 49,430) & (47,997 - 56,834) & (85,117 - 98,935) \\
                &  & 20 & (48,083 - 55,796) & (55,770 - 66,242) & (72,782 - 86,662) & (76,003 - 86,341) \\
                &  & 30 & (43,297 - 51,219) & (50,045 - 57,832) & (74,150 - 87,866) & (87,420 - 99,605) \\
                &  & 40 & (47,276 - 56,869) & (54,423 - 65,315) & (62,899 - 77,205) & (93,783 - 110,213) \\
                \cline{1-7} \cline{2-7}

                \multicolumn{7}{c}{\parbox{1.2\textwidth}{\vspace{0.3cm}
                \textbf{Note.} The annual incomes are measured in  US dollars. In the Method column, P stands for prediction intervals $\hat{C}_I$ at a given education level and years of experience. CP stands for the conformal prediction interval $\tilde{C}_I$. PM and CPM are prediction intervals and conformal prediction intervals, respectively, constructed with imputed income for individuals who report their income in a range.}}
            \end{tabular}
        }
    }
\end{table}
 }
{\footnotesize
\begin{table}
\caption{Coverage of conformal prediction intervals}
{\scriptsize\resizebox{\textwidth}{!}{%
    \renewcommand{\arraystretch}{1}
    \begin{tabular}{cccccccccc}
        \toprule
        & $\alpha$ & \multicolumn{2}{c}{0.10} & \multicolumn{2}{c}{0.25} & \multicolumn{2}{c}{0.50} & \multicolumn{2}{c}{0.90} \\
        & Data Type & Censored & Imputed & Censored & Imputed & Censored & Imputed & Censored & Imputed \\
        Education & Experience &  &  &  &  &  &  &  &  \\
        \midrule
        \multirow[t]{4}{*}{12} & 10 & \textbf{0.896} & 0.858 & \textbf{0.742} & 0.668 & \textbf{0.504} & 0.433 & \textbf{0.104} & 0.085 \\
        & 20 & \textbf{0.906} & 0.854 & \textbf{0.780} & 0.693 & \textbf{0.522} & 0.467 & \textbf{0.111} & 0.074 \\
        & 30 & \textbf{0.911} & 0.828 & \textbf{0.744} & 0.663 & \textbf{0.509} & 0.445 & \textbf{0.115} & 0.080 \\
        & 40 & \textbf{0.894} & 0.838 & \textbf{0.761} & 0.667 & \textbf{0.489} & 0.458 & \textbf{0.098} & 0.072 \\
        \cline{1-10}
        \multirow[t]{4}{*}{14} & 10 & \textbf{0.896} & 0.857 & \textbf{0.736} & 0.667 & \textbf{0.484} & 0.444 & \textbf{0.095} & 0.078 \\
        & 20 & \textbf{0.922} & 0.848 & \textbf{0.759} & 0.679 & \textbf{0.502} & 0.446 & \textbf{0.107} & 0.079 \\
        & 30 & \textbf{0.919} & 0.844 & \textbf{0.754} & 0.698 & \textbf{0.511} & 0.458 & \textbf{0.097} & 0.080 \\
        & 40 & \textbf{0.910} & 0.838 & \textbf{0.749} & 0.683 & \textbf{0.505} & 0.461 & \textbf{0.096} & 0.094 \\
        \cline{1-10}
        \multirow[t]{4}{*}{16} & 10 & \textbf{0.890} & 0.825 & \textbf{0.726} & 0.663 & \textbf{0.488} & 0.451 & \textbf{0.097} & 0.087 \\
        & 20 & \textbf{0.910} & 0.801 & \textbf{0.756} & 0.674 & \textbf{0.524} & 0.451 & \textbf{0.099} & 0.085 \\
        & 30 & \textbf{0.895} & 0.768 & \textbf{0.765} & 0.627 & \textbf{0.478} & 0.382 & \textbf{0.105} & 0.076 \\
        & 40 & \textbf{0.914} & 0.787 & \textbf{0.723} & 0.632 & \textbf{0.488} & 0.435 & 0.109 & \textbf{0.091} \\
        \cline{1-10}
        \multirow[t]{4}{*}{18} & 10 & \textbf{0.909} & 0.823 & \textbf{0.749} & 0.681 & \textbf{0.517} & 0.464 & 0.108 & \textbf{0.104} \\
        & 20 & \textbf{0.898} & 0.793 & \textbf{0.747} & 0.646 & \textbf{0.496} & 0.419 & \textbf{0.099} & 0.077 \\
        & 30 & \textbf{0.911} & 0.800 & \textbf{0.768} & 0.650 & \textbf{0.550} & 0.421 & \textbf{0.114} & 0.081 \\
        & 40 & \textbf{0.923} & 0.825 & \textbf{0.822} & 0.646 & \textbf{0.552} & 0.430 & \textbf{0.092} & 0.081 \\
        \cline{1-10}
        \bottomrule

        \multicolumn{10}{l}{\parbox{\textwidth}{\vspace{0.3cm}
                \textbf{Note.} This table shows the coverage of the interval censored incomes in the hold-out dataset for the conformal prediction interval $\tilde{C}$ that are constructed using both interval-censored and exactly reported income data (shown in the ``Censored'' columns), as well as for the conformal prediction intervals constructed when the interval-censored incomes are imputed (shown in the ``\textrm{Imputed}'' columns). The bolded values are the coverage rates that are closest to the nominal level $1-\alpha$.
        }}
    \end{tabular}
}
}\label{tab:table2}
\end{table}}

\section{Conclusion}\label{sec:conclusion}

The object of interest in this paper is a prediction set for an outcome given a set of covariates when this outcome is censored. Censoring is a common issue in economic and other data. We first characterise the oracle prediction set, which is the smallest set that maintains a given miscoverage level under censoring. The characterisation leads to a feasible estimation strategy based on the observed sample. We then provide consistent estimators for this oracle prediction set based on nonparametric estimation of the conditional distribution of the observed lower and upper brackets given the conditioning variables. We allow for the prediction set to consist of multiple intervals and consider both random and fixed censoring. Furthermore, we use recent results from conformal inference to obtain a conformal prediction set that maintains a finite sample miscoverage property using a set-based score function. We find that our procedures perform well with simulated data. We also apply our procedures to data from the US Census.
There is increasing interest in prediction set estimation, and several intriguing research directions are related to this paper. One promising avenue is to consider prediction sets under partial identification beyond interval censoring. While conformal inference offers finite-sample guarantees on coverage, it remains a research challenge to explore the more traditional statistical inference on the estimated prediction set. Additionally, the design of conformity scores in the context of partial identification presents an interesting and important topic for further investigation.

\newpage

\textbf{Declaration of generative AI and AI-assisted technologies in the manuscript preparation process}: In preparation for the manuscript, ChatGPT was used to format the figures and tables in the LaTeX document and to provide spelling and grammar checks. GitHub Copilot and Codex, provided code suggestions and auto-completions for the python codes implementing the simulation and empirical study. Refine.ink is also used for proofreading. After using these tools, the authors reviewed and edited the content as needed and take full responsibility for the content of the published article.

\bibliographystyle{apalike}
\bibliography{lib.bib,additional.bib}

\newpage
\appendix
\begin{center}
\textbf{\Large Supplementary Material for ``Prediction Sets and Conformal Inference with Censored Outcomes''}
\end{center}
\section{Proofs}\label{appendix: proofs}
\subsection{Proofs for Section \ref{sec:setup}}

In this subsection, we prove the results in Section \ref{sec:setup} and clarify the difference between our approach and the classical level set estimation.

First we state a simple and useful lemma that allows us to construct a density function that is in the identified set under interval censoring given the joint density of $(Y^{L}, Y^{U})$.
\begin{lemma}\label{lemma: tool for construct Y density}
    Let $f_{Y^{L}, Y^{U}}$ be the joint density of $(Y^{L}, Y^{U})$, for any interval $A = [a_{0}, a_{1}]\subset \mathbb{R}$, and any $\theta \in [\frac{\pi}{2}, \pi]$, define the following $$A_{\theta} = \cqty{(x,y)\in \mathbb{R}^{2}: x \leq y, \, \frac{y-a}{x-a} = \tan \theta,  a \in A},$$
    where it follows that $A_{\frac{\pi}{2}}= \cqty{(x,y): x \leq y, x \in A}$.
    Then $P_{\theta}(A) = P_{Y^{L}, Y^{U}}(A_{\theta})$ defines a probability measure $P_{\theta}$ on $\mathbb{R}$, and $P_{\theta}$ is in the identified set of the distribution of the latent outcome $Y$.
\end{lemma}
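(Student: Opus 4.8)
The plan is to recognise $A_\theta$ as the preimage of $A$ under a convex-combination map of the two interval endpoints, so that $P_\theta$ is simply the pushforward of $P_{Y^L,Y^U}$ under that map --- hence automatically a probability measure --- and then to check membership in the identified set directly from the characterisation in Lemma~\ref{lemma: identified set}.

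First I would reparametrise the angle. Fix $\theta\in(\pi/2,\pi)$ and write $t=\tan\theta\in(-\infty,0)$. For a pair $(x,y)$ with $x<y$, solving $\tfrac{y-a}{x-a}=t$ gives the unique anchor point $a=\tfrac{y-tx}{1-t}$ on the diagonal; since $1-t>0$, this can be rewritten as $a=\lambda x+(1-\lambda)y$ with $\lambda=\lambda(\theta)=\tfrac{-\tan\theta}{1-\tan\theta}\in(0,1)$. Thus, up to the $P_{Y^L,Y^U}$-null diagonal, $A_\theta=\cqty{(x,y):\, x\le y,\ \lambda x+(1-\lambda)y\in A}$, and the two boundary angles fit the same description: $\theta=\pi/2$ corresponds to $\lambda=1$, which is exactly the stated convention $A_{\pi/2}=\cqty{(x,y):\, x\le y,\ x\in A}$, while $\theta=\pi$ gives $t=0$, $a=y$, i.e.\ $\lambda=0$. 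The restriction $\theta\in[\pi/2,\pi]$ is precisely what makes $\lambda\in[0,1]$, i.e.\ $a$ a genuine convex combination of $x$ and $y$; for other $\theta$ the anchor point would fall outside $[x,y]$, which is why those angles are excluded.

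Next I would introduce the scalar random variable $Y_\theta:=\lambda Y^L+(1-\lambda)Y^U$ with $\lambda=\lambda(\theta)$. Since $\lambda\in[0,1]$ and $Y^L\le Y^U$, we have $Y^L\le Y_\theta\le Y^U$ almost surely, so $Y_\theta$ is an admissible selection of the interval $[Y^L,Y^U]$. By the previous step, $\cqty{(Y^L,Y^U)\in A_\theta}=\cqty{Y_\theta\in A}$ up to a null set, hence $P_\theta(A)=P_{Y^L,Y^U}(A_\theta)=\Prob(Y_\theta\in A)$; that is, $P_\theta$ is the law of $Y_\theta$, which settles that it is a probability measure on $\mathbb{R}$. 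Because $(Y^L,Y^U)$ admits the joint density $f_{Y^L,Y^U}$, a linear change of variables moreover shows that $P_\theta$ has a density, which is what the subsequent level-set arguments (e.g.\ the proof of Proposition~\ref{prop: level set not identified}) exploit.

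Finally, I would verify that $P_\theta$ lies in the sharp identified set of Lemma~\ref{lemma: identified set}. For any $t_0\le t_1$, the event $\cqty{[Y^L,Y^U]\subset[t_0,t_1]}$ forces $Y_\theta\in[Y^L,Y^U]\subset[t_0,t_1]$, so $\cqty{[Y^L,Y^U]\subset[t_0,t_1]}\subseteq\cqty{Y_\theta\in[t_0,t_1]}$ and therefore $P_\theta([t_0,t_1])=\Prob(Y_\theta\in[t_0,t_1])\ge\Prob\pqty{[Y^L,Y^U]\subset[t_0,t_1]}$, which is exactly the defining inequality of the identified set (running the same argument conditionally on $X=x$ gives the conditional version of Lemma~\ref{lemma: identified set}). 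The only genuinely delicate step is the reparametrisation in the second paragraph --- solving for the anchor $a$ and checking the equivalence $\theta\in[\pi/2,\pi]\Leftrightarrow\lambda\in[0,1]$, together with the bookkeeping for the two endpoint angles and the (null) diagonal; once $Y_\theta$ is in hand, the remainder is the routine pushforward-and-selection argument.
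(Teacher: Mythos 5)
Your argument is correct and is exactly the intended one: the paper states this lemma without an explicit proof (calling it "simple"), and the natural justification is precisely your observation that $A_\theta$ is the preimage of $A$ under the map $(y^L,y^U)\mapsto \lambda(\theta)\,y^L+(1-\lambda(\theta))\,y^U$ with $\lambda(\theta)=\tfrac{-\tan\theta}{1-\tan\theta}\in[0,1]$, so that $P_\theta$ is the law of the measurable selection $Y_\theta=\lambda Y^L+(1-\lambda)Y^U\in[Y^L,Y^U]$ and hence lies in the identified set. Your handling of the endpoint angles and the null diagonal is fine; note only that once $Y_\theta$ is recognised as a selection, membership in the identified set is immediate, so the final verification via the inequalities of Lemma~\ref{lemma: identified set} is a harmless redundancy.
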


With Lemma \ref{lemma: tool for construct Y density} at hand, we can now state the following proposition that shows the level set of $Y$ is not in general identified given the joint density of $(Y^{L}, Y^{U})$.
Given the density $f_{Y}$ of a random variable $Y$, the (upper) level set of $Y$ at level $\lambda$ is defined as $L(Y; \lambda) = \cqty{y: f_{Y}(y) \geq \lambda}$.
If $f_{Y}$ is identified, then the optimal $1-\alpha$ prediction set for $Y$ can be written as $C(1-\alpha) = L(Y; \lambda_{1-\alpha})$ where $\lambda_{1-\alpha}= \sup\cqty{\lambda: P_{Y}(Y\in L(Y;\lambda)) \geq 1 - \alpha}$.
When the random variable $Y$ is interval censored, we instead observe the random vector $(Y^{L}, Y^{U})$ and we could define the level set of $(Y^{L},Y^{U})$ as,
\begin{equation}
    L((Y^{L},Y^{U}); \lambda) = \cqty{(y^{L}, y^{U}): f_{Y^{L}, Y^{U}}(y^{L}, y^{U}) \geq \lambda}.
\end{equation}
We note here two points. First, the level set of $Y$ is in general not identified given the joint density of $(Y^{L}, Y^{U})$ and second, in terms of prediction for $Y$, our proposed prediction set is more efficient and has clearer interpretation than a level set of $(Y^{L}, Y^{U})$.

The following example shows that the level set of $Y$ is not in general identified without additional assumptions on the mechanism of censoring.
\begin{example}\label{example: level set not identified}
    Suppose for some $a_{0} < b_{0} \leq a_{1} < b_{1} \in \mathbb{R}$, and and $\lambda > 0$, the joint density of $(Y^{L}, Y^{U})$ satisfies the restriction that for all $(a,b) \in \mathbb{R}^{2}$ such that either $a_{0}\leq a \leq a_{1}$ or $b_{0}\leq b \leq b_{1}$,
    \begin{equation*}
        f_{Y^{L}, Y^{U}}(a,b) = \lambda \mathbb{I}\cqty{a \in [a_{0},a_{1}] , b\in [b_{0},b_{1}]},
    \end{equation*}
    and $f_{Y^{L}, Y^{U}}(a,b) = 0$ for all $a<b$ such that $b\in [a_{0},b_{0}]$ and $a\in [a_{1}, b_{1}]$.

    Consider the following two random variables $Y_{1}, Y_{2}$ that are constructed in the following ways for $y\in \mathcal{Y}$,
    \begin{equation*}
        f_{Y_{1}}(y) = \int f_{Y^{L}, Y^{U}}(y,b) \dd b; \quad f_{Y_{2}}(y) = \int f_{Y^{L}, Y^{U}}(a,y) \dd a.
    \end{equation*}
    Then $[a_{0},b_{0}] \subset L(Y_{1}, \lambda) \cup L(Y_{2},\lambda)^{c}$, and $[a_{1}, b_{1}] \subset L(Y_{2}, \lambda) \cup L(Y_{1}, \lambda)^{c}$. And both $Y_{1}, Y_{2}$ are observationally equivalent under interval censoring, corresponding to the selection mechanism that $Y_{1} = Y^{L}$ and $Y^{2} = Y^{U}$.
\end{example}

We now apply \autoref{lemma: tool for construct Y density} to prove \autoref{prop: level set not identified} which generalises the idea in the previous example and shows the failure of identification of the level set of $Y$.

\begin{proof}[Proof of Proposition \ref{prop: level set not identified}]
    Let $(a,b)\in \mathbb{R}^2$ such that $f_{Y^L, Y^U}(a,b) > 0$ and consider sets $$A_k = \cqty{(y_l, y_u) : a - 1/k \leq y_l \leq a, b\leq y_u \leq b+1/k},$$ which shrinks nicely to $(a,b)$ \citep{folland1999RealAnalysis}, and as a reult for some $k^*$, $ \delta = \int_{A_{k^*}} f_{Y^L, Y^U}> 0$. Let $P_{\pi/2}$ be the one defined in Lemma \ref{lemma: tool for construct Y density} and $P_{\pi/2}'$ be the positive measure constructed with the same method for the density function $f_{Y^L, Y^U} \indicator{A_{k^*}^c}$. For any $y\in [a,b]$, and any $\lambda > 0$, let $0<\epsilon < \delta/\lambda$, and define the following probability distribution, for any interval $B\subset \mathbb{R}$,
    \begin{equation}
        P_{y,\epsilon}(B) = P'_{\pi/2} (B) + \mu(B\cap B_{\frac{\epsilon}{2}}(y)\frac{\delta}{\epsilon},
        \end{equation}
        where $B_{\frac{\epsilon}{2}}(y) = \cqty{y': \abs{y'-y} \leq \frac{\epsilon}{2}}$.
        Then $P_{y,\epsilon}$ is in the identified set and has density larger than $\lambda$ at $y$.
    \end{proof}

    Since $P(Y\in A) \geq P([Y^{L},Y^U] \subset A)$, it is clear that the outer region of the level set with $1-\alpha$ coverage of $(Y^{L}, Y^{U})$ also has $1-\alpha$ coverage for $Y$.
    But the outer region is in general not the smallest valid prediction set.

    The following lemma shows that even though our approach is conservative in terms of coverage, its conservativeness is only due to partial identification, which is necessary without assuming the mechanism of censoring.

    \begin{lemma}
        For almost surely $x\in \mathcal{X}$, and any $C = \sqcup_{m=1}^{M} [l_{m}, u_{m}]\in \mathcal{C}$, we have
        \begin{equation*}
            \inf_{P\in \mathcal{P}} P(Y\in C \mid X= x) = P\pqty{[Y^{L}, Y^{U}] \subset C \mid X = x}.
        \end{equation*}
    \end{lemma}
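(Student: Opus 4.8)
The plan is to prove the two inequalities in the display separately; ``$\geq$'' is essentially bookkeeping on top of Lemma~\ref{lemma: identified set}, while ``$\leq$'' — the statement that the feasible bound is tight — is the substantive part and is proved by exhibiting one distribution in the identified set that attains it. For ``$\geq$'', fix $x$ and write $C = \bigsqcup_{m=1}^{M}[l_m,u_m]$ with $u_m < l_{m+1}$. Since $[Y^L,Y^U]$ is an interval (hence connected) and the $[l_m,u_m]$ are separated by gaps, the event $\{[Y^L,Y^U]\subset C\}$ is the disjoint union $\bigsqcup_m\{[Y^L,Y^U]\subset[l_m,u_m]\}$, and $\{Y\in C\}=\bigsqcup_m\{Y\in[l_m,u_m]\}$. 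Then for every $P\in\mathcal{P}$,
\[
P(Y\in C\mid X=x)=\sum_m P(Y\in[l_m,u_m]\mid X=x)\geq\sum_m P([Y^L,Y^U]\subset[l_m,u_m]\mid X=x)=P([Y^L,Y^U]\subset C\mid X=x),
\]
where the inequality is exactly the defining constraint of the sharp identified set from Lemma~\ref{lemma: identified set}; taking the infimum over $P$ gives ``$\geq$''.

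For ``$\leq$'' I would construct a single $P'\in\mathcal{P}$ attaining the right-hand side, by specifying a deterministic selection rule $Y=g(Y^L,Y^U)$ with $Y^L\leq g(Y^L,Y^U)\leq Y^U$ chosen so that $g(Y^L,Y^U)\notin C$ whenever $[Y^L,Y^U]\not\subset C$. The enabling observation is that $C$ is closed and its component intervals are separated by positive gaps, so on $\{[Y^L,Y^U]\not\subset C\}$ the set $[Y^L,Y^U]\setminus C$ is nonempty (indeed of positive Lebesgue measure, except in the degenerate case $Y^L=Y^U$, where it is the singleton $\{Y^L\}$), and a point of it can be picked. Explicitly, for a single interval $C=[l,u]$ one may take $g=Y^L$ on $\{Y^L<l\}$ and $g=Y^U$ otherwise; for general $C\in\mathcal{C}$ one may take the midpoint of the leftmost connected component of $[Y^L,Y^U]\setminus C$ (and, say, $Y^L$ when that set is empty). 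This is nothing but a measurable patching of the selections $Y=Y^L$ and $Y=Y^U$ (and interior ones) furnished by Lemma~\ref{lemma: tool for construct Y density}. Letting $P'$ be the joint law of $(X,g(Y^L,Y^U),Y^L,Y^U)$, it has the same $(X,Y^L,Y^U)$-marginal as $P$ and respects $Y^L\leq Y\leq Y^U$, so it arises from a bona fide censoring mechanism and hence $P'\in\mathcal{P}$; moreover $Y\in C$ holds exactly on $\{[Y^L,Y^U]\subset C\}$, so $P'(Y\in C\mid X=x)=P([Y^L,Y^U]\subset C\mid X=x)$, which yields ``$\leq$'' and hence equality.

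The main obstacle is the measurable-selection step: one must check that the chosen point of $[Y^L,Y^U]\setminus C$ is a measurable function of $(Y^L,Y^U)$. This is elementary here, since $C$ is a fixed finite union of intervals, so $g$ is defined by finitely many cases on the positions of $Y^L$ and $Y^U$ relative to the fixed endpoints $\{l_m,u_m\}$ and is therefore piecewise of closed form; in greater generality one could instead invoke a measurable-selection theorem for the nonempty-valued map $(Y^L,Y^U)\mapsto[Y^L,Y^U]\setminus C$. A minor point worth stating explicitly is that the whole argument is carried out at a fixed conditioning value $x$ and the equality then holds for $P_X$-almost every $x$, which is immediate because $g$ does not depend on $x$ and conditional distributions are only pinned down up to $P_X$-null sets.
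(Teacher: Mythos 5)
Your proposal is correct and follows essentially the same route as the paper: the ``$\geq$'' direction is immediate, and tightness is established by exhibiting a selection of $Y$ within $[Y^L,Y^U]$ that falls outside $C$ whenever $[Y^L,Y^U]\not\subset C$, which yields an element of the identified set attaining the bound. The paper's construction differs only in implementation --- it partitions the $(y^L,y^U)$ half-plane and selects $Y=Y^L$, $Y=Y^U$, or a uniform draw on the gap $(u_m,l_{m+1})$ depending on the region, whereas you use a single deterministic measurable selection (the midpoint of the leftmost component of $[Y^L,Y^U]\setminus C$) --- but the underlying argument is the same.
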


    \begin{proof}
        It is clear that $P(Y\in C \mid X = x) \geq P\pqty{[Y^{L}, Y^{U}] \subset C \mid X = x}$ for all $P\in \mathcal{P}$, and we only need to show the reverse inequality also holds.
        Let $g_{x} = \dd P_{Y^{L}, Y^{U}\mid X = x} / \dd \mu$ be the Radon-Nikodym derivative of $P_{Y^{L}, Y^{U}\mid X = x}$ with respect to a dominating measure $\mu$ on $\mathbb{R}^{2}$.
        Assume without loss of generality, $-\infty = u_{0} < l_{1} < u_{1} <\dots <l_{M} < u_{M} < \infty = l_{M+1}$, the half space $\mathbb{R}^{2,+} = \cqty{(y_{l}, y_{u}): y_{u}\geq y_{l}}$ is partitioned into the following regions,
        \begin{align*}
            &\mathbb{R}^{2,+} = (\cup_{m=1}^{M} (A_{m}\cup A'_{m}\cup A''_m)) \cup (\cup_{m=0}^{M} B_{m}), \mathtext{where}\\
            &A_{m} = \cqty{(y_{l}, y_{u}): l_{m} \leq y_{l} \leq y_{u} \leq u_{m}}, \\
            &A'_{m} = \cqty{(y_{l}, y_{u}): l_{m} \leq y_{l} \leq u_{m} < y_{u} \leq l_{m+1}},\\
            &A''_m =  \cqty{(y_{l}, y_{u}): l_{m} \leq y_{l} \leq u_{m}, y_u > l_{m+1}},\\
            &B_{m} = \cqty{(y_{l}, y_{u}): u_{m} \leq y_{l} \leq l_{m+1}, y_{l} \leq y_{u}}.
        \end{align*}
        Recall the definition of $A_\theta$ in \autoref{lemma: tool for construct Y density}, we define the following positive measures on the real line, for any intervals $A\subset \mathbb{R},$
        \begin{align*}
            &P_1(A) = \int_{A_{\frac{\pi}{2}}} g_x \indicator{\cup_m (A_m \cup B_m)}, \\
            &P_2(A) = \int_{A_\pi} g_x \indicator{\cup A'_m},\\
            &P_{3,m} = P_{Y^{L}, Y^{U}\mid X = x}(A''_m) \cdot U(u_m, l_{m+1}),
        \end{align*}
        where $U(a,b)$ is the uniform distribution over $(a,b)$.

        Then $P'(A) = P_1(A) + P_2(A) + \sum_m P_{3,m} (A)$ defines a probability measure $P'$ such that $P'(C) = P([Y^{L}, Y^{U}]\subset C)$ and for any interval $A$, $P'(Y\in A) \geq P([Y^{L}, Y^{U}] \subset A)$, and hence $P'\in \mathcal{P}$.
    \end{proof}

    \subsection{Proofs for Section \ref{sec:consistency}}

    In this subsection, we first prove the uniform consistency of the proposed prediction set when the oracle prediction set is an interval under random censoring. Then the result is extended to the case allowing for fixed censoring and when oracle prediction set is a union of multiple intervals.

    When the oracle prediction set is a single interval for each \(x\in \mathcal{X}\), let \(C_{I}^{*}(x)=[\tau_{0}(x), \tau_{1}(x)]\).
    \begin{proof}[Proof of \autoref{thm:consistency}]
        For any \(\epsilon > 0\), let \(\delta\) be as defined in Assumption \ref{asmp:identification}.

        Let \(\nu\) be chosen such that \(0 < \nu < \min\pqty{\frac{\epsilon}{4}, \frac{1}{2}\pqty{\frac{\delta}{2c_{2}}}^{\frac{1}{\gamma_{2}}}}\), and fix \(0 < \kappa < \min\pqty{\frac{\delta}{2}, c_{1}\nu^{\gamma_{1}}}\).
        Let \(E\) denote the event \(\sup_{a,b,x}\abs{P_{n}(a,b;x) - P(a,b;x)} < \kappa\), under Assumption \ref{asmp:estimation}, \(P(E^{c}) \to 0\) as \(n\to \infty\).

        On the event \(E\), for any \(x \in \mathcal{X}\), we want to show that \(\mu\pqty{[\hat{\tau}_{0}(x), \hat{\tau}_{1}(x)] \bigtriangleup [\tau_{0}(x), \tau_{1}(x)]} < \epsilon\). Let \(E_{0,x} = \cqty{\omega: \mu\pqty{[\hat{\tau}_{0}(x), \hat{\tau}_{1}(x)] \bigtriangleup [\tau_{0}(x), \tau_{1}(x)]} > \epsilon}\).
        Consider the following three cases,
        \begin{enumerate}
            \item \(E_{1,x} = \cqty{\omega: \hat{\tau}_{1}(x) - \hat{\tau}_{0}(x)\leq \tau_{1}(x) - \tau_{0}(x)}\)
            \item \(E_{2,x} = \cqty{\omega: \hat{\tau}_{1}(x)- \hat{\tau}_{0}(x) > \tau_{1}(x) - \tau_{0}(x)+ 2\nu}\)
            \item \(E_{3,x} = \cqty{\omega: \tau_{1}(x) - \tau_{0}(x) < \hat{\tau}_{1}(x) - \hat{\tau}_{0}(x) \leq \tau_{1}(x) - \tau_{0}(x) + 2\nu}\).
        \end{enumerate}
        then we will show \(E_{0,x}\subset E^{c}\) by showing that \((E_{1,x} \cup E_{2,x}\cup E_{3,x})\cap E_{0,x} \subset E^{c}\) for all \(x\in \mathcal{X}\).

        \begin{enumerate}

            \item In the first case, we have \(\hat{\tau}_{1}-\hat{\tau}_{0} \leq \tau_{1}(x) - \tau_{0}(x)\), if \(E_{0,x}\) also holds, then \(P (\hat{\tau}_{0},\hat{\tau}_{1}; x) \leq 1 - \alpha- \delta\) by Assumption \ref{asmp:identification}, while \(P_{n}(\hat{\tau}_{0},\hat{\tau}_{1}; x) \geq 1 - \alpha\), which implies that
                \begin{equation*}
                    \abs{P_{n}(\hat{\tau}_{0}, \hat{\tau}_{1};x) - P\pqty{\hat{\tau}_{0}, \hat{\tau}_{1};x}} > \delta > \kappa.
                \end{equation*}
                Hence \(E_{1,x}E_{0,x} \subset E^{c}\).
            \item In the second case, \(\hat{\tau}_{1}-\hat{\tau}_{0} > \tau_{1}(x) - \tau_{0}(x) + 2\nu\). Let \((\bar{\tau}_{0}, \bar{\tau}_{1}) = (\tau_{0}- \nu, \tau_{1}+ \nu)\), then by Assumption \ref{asmp:regularity}, \(P(\bar{\tau}_{0}, \bar{\tau}_{1};x)\geq 1 - \alpha + c_{1} \nu^{\gamma}\).
                Since \(\bar{\tau}_{1}-\bar{\tau}_{0} < \hat{\tau}_{1}- \hat{\tau}_{0}\), by Assumption \ref{asmp:estimation}, \(P_{n}\pqty{\bar{\tau}_{0}, \bar{\tau}_{1}; x} < 1 - \alpha\).
                Again, this would imply that \(\abs{P_{n}(\bar{\tau}_{0}(x), \bar{\tau}_{1}(x); x) - P(\bar{\tau}_{0}(x), \bar{\tau}_{1}(x); x)} > c_{1}\nu^{\gamma_{1}} > \kappa\), hence \(E_{2,x}\subset E^{c}\).
            \item

                Using the fact that \(\mu([a_{0},a_{1}] \bigtriangleup [b_{0},b_{1}]) \leq \abs{b_{0}- a_{0}} + \abs{b_{1}- a_{1}}\), we have either \(\abs{\hat{\tau}_{0}(x) - \tau_{0}(x)} > \frac{\epsilon}{2}\) or \(\abs{\hat{\tau}_{1}(x) - \tau_{1}(x)} > \frac{\epsilon}{2}\).
                Since \(\hat{\tau}_{1}(x) - \hat{\tau}_{0}(x)  - (\tau_{1}(x) - \tau_{0}(x)) \leq 2 \nu < \frac{\epsilon}{2}\), \(\hat{\tau}_{0}(x)\) and \(\hat{\tau}_{1}(x)\) are on the same side of \(\tau_{0}(x), \tau_{1}(x)\).
                As a result, either \([a_{0},a_{1}] = [\hat{\tau}_{0},\hat{\tau}_{1}-2\nu]\) or \(\bqty{\hat{\tau}_{0}+ 2\nu, \hat{\tau}_{1}}\) satisfies \(\mu\pqty{[a_{0},a_{1}] \bigtriangleup [\tau_{0}(x), \tau_{1}(x)]} > \epsilon\) and \(\abs{[a_{0},a_{1}]} \leq \tau_{1}(x) - \tau_{0}(x)\), then we have \(P([a_{0},a_{1}]; x) < 1 - \alpha - \delta\).
                It follows that \(P\pqty{\hat{\tau}_{0}, \hat{\tau}_{1};x} < 1 - \alpha -\delta + c_{2}(2\nu)^{\gamma_{2}}\)
                So \(\abs{P_{n}(\hat{\tau}_{0}(x), \hat{\tau}_{1}(x) ; x) - P(\hat{\tau}_{0}(x), \hat{\tau}_{1}(x) ; x)} > \frac{\delta}{2} > \kappa\).
                Hence \(E_{3,x} E_{0,x} \subset E^{c}\).
        \end{enumerate}
        As a result, on the event \(E\), we have \(\sup_{x} \mu([\hat{\tau}_{0}(x), \hat{\tau}_{1}(x)] \bigtriangleup [\tau_{0}(x), \tau_{1}(x)]) \leq \epsilon\), and since \(P(E^{c})\to 0\) as \(n \to \infty\), we have that \(\sup_{x} \mu([\hat{\tau}_{0}(x), \hat{\tau}_{1}(x)] \bigtriangleup [\tau_{0}(x), \tau_{1}(x)]) = o_{p}(1)\).
    \end{proof}

    \begin{proof}[Proof of \autoref{thm:consistency multiple}]

        For any $\epsilon > 0$, let $\delta$ be chosen as in Assumption \ref{asmp:identification multiple}.
        Let $E$ denote the event when $\sup_{C, x}\abs{P_{n}(C;x) - P(C;x)} \leq \psi_{n}$.
        For sufficiently large $n$, $\psi_{n} \leq \frac{\delta}{2}$. For any $x \in \mathcal{X}$, if $\mu(\hat{C}(x) \bigtriangleup C^{*}(x)) > \epsilon$ and $\mu(\hat{C}(x)) \leq \mu(C^{*}(x))$, then $P(\hat{C}(x); x) \leq 1 -\alpha - \delta$, and $P_{n}(\hat{C}(x); x) \geq 1 - \alpha - \psi_{n}$.
        Hence $\abs{P_{n}(\hat{C};x) - P(\hat{C};x)} > \frac{\delta}{2} \geq \psi_{n}$.
        On the other hand, if $\mu(\hat{C}(x) \bigtriangleup C^{*}(x)) > \epsilon$ and $\mu(\hat{C}(x)) > \mu(C^{*}(x))$, then $P_{n}(C^*(x);x) < 1 -\alpha - \psi_{n}$ while $P(C^*(x);x) \geq 1 - \alpha$ and $\abs{P_{n}(C^*(x); x) - P(C^*(x); x)} > \psi_{n}$.
        That is $P(\sup_x \mu(\hat{C}(x) \bigtriangleup C^{*}(x))> \epsilon) \leq P(E^{c}) \to 0$.
    \end{proof}

    \subsection{Proofs for results in Section \ref{sec:conformal}}
    \begin{proof}[Proof of Theorem \ref{thm:marginal}]
        We have, by interval censoring, $$\Prob(Y_{n+1} \in \tilde{C}(X_{n+1})) \geq \Prob([Y^{L}_{n+1}, Y^{U}_{n+1}] \subset \tilde{C}(X_{n+1})).$$ Let \(s_{i} = s(Y^{L}_{i}, Y^{U}_{i}, X_{i};\hat C)\), then \([Y^{L}_{n+1}, Y^{U}_{n+1}] \subset \tilde{C}(X_{n+1})\) if and only if \(s_{n+1} \leq \vartheta_{1-\alpha}\). Since \(s_{i}\)'s are exchangeable for \(i\in \mathcal{I}_{2}\), \(P\pqty{s_{n+1} \leq \vartheta_{1-\alpha}} \geq 1 - \alpha\) by Lemma 2 in \cite{romano2019ConformalizedQuantile}.
    \end{proof}

    \begin{proof}[Proof of Theorem \ref{thm:consistency of conformal}]
        Suppose $\hat{C}(x) = \sqcup_{j} [\hat{a}_{j}(x), \hat{b}_j(x)]$ and $C^*(x) = \sqcup_{m}[a_m(x), b_m(x)]$. 
        Let $\mathcal{X}_\epsilon = \{x\in \mathcal{X}: \min_m (b_m(x) - a_m(x)) > \epsilon\}$, and $(\epsilon_k)_{k\geq 1}$ be a positive sequence $\epsilon_k \to 0$, then $P_X(\mathcal{X}_{\epsilon_k}^c) \to 0$ as $\epsilon_k \to 0$. For fixed $\epsilon>0$, define the event
        \[
        A_n(\epsilon):=\Bigl\{\sup_{x\in\mathcal X_\epsilon}\mu\bigl(\hat C_n(x)\triangle C^*(x)\bigr)<\epsilon/2\Bigr\},
        \]
        and we have $P(A_n(\epsilon)) \to 0$ by the uniform consistency. 
        Fix $\epsilon >0$ and $x\in\mathcal{X}_\epsilon$, on the event \(\sup_{x\in \mathcal{X}_{\epsilon}} \mu \pqty{\hat{C}(x) \bigtriangleup C^*(x)} < \frac{\epsilon}{2}\), for any $j$, either $\hat{b}_j(x) - \hat{a}_j(x) < \epsilon/2$, or $[\hat{a}_j(x), \hat{b}_j(x)] \cap [a_m(x), b_m(x)] \neq \emptyset$ for some $m$. Also, for any $[a_m(x), b_m(x)]$, there exists at least one $j$, such that $[a_m(x), b_m(x)] \cap [\hat{a}_j(x), \hat{b}_j(x)] \neq \emptyset$. 
        That is, each true interval $[a_m(x),b_m(x)]$ intersects at least one estimated interval $[\hat a_j(x),\hat b_j(x)]$: otherwise $[a_m(x),b_m(x)]\subset C^*(x)\setminus \hat C_n(x)$ and hence $\mu(\hat C_n(x)\triangle C^*(x))\ge b_m(x)-a_m(x)>\epsilon$, contradicting $A_n(\epsilon)$.
        For such an intersecting pair, \( (\hat a_j(x)-a_m(x))_+\le \mu(C^*(x)\setminus \hat C_n(x)) \le \mu(\hat C_n(x)\triangle C^*(x)) < \epsilon/2, \) and similarly $(b_m(x)-\hat b_j(x))_+<\epsilon/2$, so that \([a_m(x),b_m(x)]\subset [\hat a_j(x)-\epsilon/2,\ \hat b_j(x)+\epsilon/2]\subset \mathcal E(\hat C_n(x),\epsilon/2)\). Here $\mathcal{E}(C,t) = \cup_{m=1}^{M} [a_m - t, b_m + t]$ for $C=\sqcup_{m=1}^M [a_m, b_m] \in \mathcal{C}_M$. Taking the union over $m$ yields
        \begin{equation*}
        C^*(x)\subset \mathcal E(\hat C_n(x),\epsilon/2),\qquad x\in\mathcal X_\epsilon,\ \text{on }A_n(\epsilon).
        \end{equation*}
        Consequently, for any $t\in(0,\epsilon/2)$, if $s^*(y^L, y^U,x) \leq t$, then $[y^L, y^U]\subset \mathcal{E}(C^*(x), t) \subset \mathcal E(\hat C_n(x),\epsilon)$ for all $x\in \mathcal{X}_\epsilon$. So that $P(s(Y^L, Y^U, x) \leq \epsilon \mid X = x) \geq P(s^*(Y^L, Y^U, x) \leq t \mid X = x) > 1 - \alpha$ for $x\in \mathcal{X}_\epsilon$. 

         On the other hand, suppose $s(y^L, y^U,x) \leq -\epsilon$ and $[y^L, y^U] \subset \mathcal{E}(\hat{C}(x), -\epsilon)$. If $\mathcal{E}(\hat{C}(x), -\epsilon) = \emptyset$, then $P(s(y^L, y^U,x) \leq - \epsilon) =0$, otherwise, $\mu( \mathcal{E}(\hat{C}(x), -\epsilon))\leq \mu(\hat{C}(x)) -2\epsilon \leq \mu(C^*) - \epsilon$ on $A_n(\epsilon)$, and 
         \begin{align*}
            \mu\!\left(C^*(x)\triangle \mathcal E(\hat C_n(x),-\epsilon)\right)
            &\ge \mu\!\left(\hat C_n(x)\triangle \mathcal E(\hat C_n(x),-\epsilon)\right)
            -\mu\!\left(\hat C_n(x)\triangle C^*(x)\right)\\
            &\ge 2\epsilon-\epsilon/2 \geq \epsilon.
        \end{align*}
        By Assumption~\ref{asmp:identification multiple}, there exists $\delta(\epsilon)>0$ such that for $P_X$-a.e.\ $x$, \[ P\Big([Y^L,Y^U]\subset \mathcal E(\hat C_n(x),-\epsilon)\ \big|\ X=x\Big)\le 1-\alpha-\delta(\epsilon).\] 
        So we have the quantile of of $s(Y^L, Y^U, X)$ falls in $[-\epsilon, \epsilon]$ with probability approaching $1$. 
        Since the calibration scores $(s_j)_{j\in\mathcal I_2}$ are i.i.d.\ conditional on $\hat C_n$, standard sample-quantile consistency implies $\hat\vartheta_{1-\alpha}\to 0$ in probability. 
        The proof for the local version is similar and omitted.
    \end{proof}

\section{Uniform consistency of the kernel smoothing estimator}\label{appendix: kernel smoothing}
    In this section, we show the following kernel smoothing estimator \(P_{n}(C; x)\) is uniformly consistent for the conditional probability \(P(C; x) = P([Y^{L}, Y^{U}]\subset C \mid X = x)\) under some primitive conditions. For all \(C = \bigcup_{m=1}^{M} [a_{m}, b_{m}]\) where \(a_{m} < b_{m} <a_{m+1}\), 
    \begin{equation*}
        P_{n}(C;x) = \frac{\sum_{i=1}^{n} \sum_{m=1}^{M}\indicator{a_{m} \leq Y_{i}^{L}\leq Y^{U}_{i} \leq b_{m}} K\pqty{\frac{x-X_{i}}{h}}}{\sum_{i=1}^{n} K\pqty{\frac{x-X_{i}}{h}}}.
    \end{equation*}

    Let \(f(y^{l}, y^{u}; (a_{m},b_{m})_{m=1}^{M}) =  \sum_{m=1}^{M}\indicator{a_{m} \leq y^{L} < y^{U} \leq b_{m}}\) which belongs to the following class of functions indexed by the endpoints \((a_{m}, b_{m})_{m=1}^{M}\),
    \begin{equation*}
        \mathcal{F}_{M} = \cqty{f(\cdot, \cdot; (a_{m}, b_{m})_{m=1}^{M'}): a_{m} < b_{m} < a_{m+1}, M' \leq M},
    \end{equation*}
    then for \(C = \bigcup [a_{m}, b_{m}]\), 
    \begin{equation*}
        P_{n}(C; x) = \frac{\sum_{i} f(Y_{i}^{L}, Y_{i}^{U}; (a_{m}, b_{m})_{m=1}^{M}) K\pqty{\frac{x - X_{i}}{h}}}{\sum_{i} K\pqty{\frac{x - X_{i}}{h}}} =: P_{n,f}(x).
    \end{equation*}
    Similarly define the population version \(P_{f}(x) = \Ep\bqty{f(Y^{L}, Y^{U}; (a_{m}, b_{m})_{m=1}^{M}) \mid X = x}\).
    It is straightforward to show that \(\mathcal{F}_{M}\) is a measurable VC class of functions, with a constant envelop function \(F = 1\) and hence for any probability measure \(Q\) over the support of \((Y^{L}, Y^{U})\), 
    \begin{equation*}   
        N\big(\epsilon , \mathcal{F}, L_r(Q)\big) \leq C({1} / {\epsilon})^{v},
    \end{equation*}
    where \(C, v\) depends only on the VC characteristics of \(\mathcal{F}_{M}\) and \(r\geq 1\) \citep{vaart2013WeakConvergence}. Then the next lemma follows from Corollary 1 in \cite{einmahl2000EmpiricalProcess} with only minor modifications under some conditions on the smoothing kernel and bandwidth sequence.

    \begin{assumption}\label{asmp:kernel}
        Suppose that \(K(\frac{x - X}{h}) = \prod_{j=1}^{d} K_{0}\pqty{\frac{x^{j}-X^{j}}{h}}\) where \(x^{j}\) denote the \(j\)-th component of \(x\), and \(K_{0}\) is continuous over a compact support, \(\int K(s) \dd s = 1.\)
    \end{assumption}
    \begin{assumption}\label{asmp:bandwidth}
        The bandwidth sequence \(h = h_{n}\) satisfies the growth conditions that , as \(n\to \infty\), 
        \begin{equation*}
            h \to 0, \quad nh^{d} \to \infty, \quad \frac{\abs{\log h}}{\log \log n}\to \infty, \quad \frac{nh^{d}}{\log n} \to \infty.
        \end{equation*}
    \end{assumption}
    \begin{lemma}
       Let \(\mathcal{X}'\) be a compact subset of \(\mathcal{X}\) where \(p_{X}\) is continuous and bounded away from \(0\) on \(\mathcal{X}'\), and suppose \(p_{(X, Y^{L}, Y^{U})}\) is continuous on \(\mathcal{X}'\times \mathbb{R}^{2}\). Then under Assumptions \ref{asmp:kernel} and \ref{asmp:bandwidth}, with probability \(1\),
       \begin{equation*}
            \sup_{x \in \mathcal{X}'} \sup_{f\in \mathcal{F}_{M}} \abs{P_{n,f}(x) - \bar{P}_{n,f}(x)} = O\pqty{\sqrt{\frac{\abs{\log h}}{nh^{d}}}},
       \end{equation*}
       where 
       \begin{equation*}
        \bar{P}_{n,f}(x) = \frac{h^{-d}\Ep \bqty{f(Y^{L}, Y^{U}; (a_{m}, b_{m})_{m=1}^{M}) K \pqty{\frac{x - X}{h}}}}{  \Ep \bqty{(nh^{d})^{-1}\sum_{i}K\pqty{\frac{x - X}{h}}}} .
       \end{equation*}
    \end{lemma}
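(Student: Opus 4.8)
The plan is to write $P_{n,f}(x)$ as a ratio of two kernel-smoothed averages and to reduce the statement to the uniform-in-$(x,f)$ empirical-process bound of Einmahl--Mason, applied separately to the numerator and the denominator. Set $\hat g_{n,f}(x) = (nh^{d})^{-1}\sum_{i} f(Y_i^L,Y_i^U;(a_m,b_m)_{m=1}^{M})\,K((x-X_i)/h)$ and $\hat p_n(x) = (nh^{d})^{-1}\sum_i K((x-X_i)/h)$, with population versions $\bar g_{n,f}(x) = h^{-d}\mathbb{E}[f\,K((x-X)/h)]$ and $\bar p_n(x) = h^{-d}\mathbb{E}[K((x-X)/h)]$, so that $P_{n,f}(x) = \hat g_{n,f}(x)/\hat p_n(x)$ and $\bar P_{n,f}(x) = \bar g_{n,f}(x)/\bar p_n(x)$. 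Since the intervals defining any $f\in\mathcal{F}_M$ are disjoint, $0\le f\le 1$, hence $0\le \bar P_{n,f}\le 1$ and $|\bar g_{n,f}|\le \bar p_n$. The identity
\begin{equation*}
    P_{n,f}(x) - \bar P_{n,f}(x) = \frac{\hat g_{n,f}(x) - \bar g_{n,f}(x)}{\hat p_n(x)} - \bar P_{n,f}(x)\,\frac{\hat p_n(x) - \bar p_n(x)}{\hat p_n(x)}
\end{equation*}
then yields $\sup_{x,f}|P_{n,f}(x) - \bar P_{n,f}(x)| \le \big(\inf_{x}\hat p_n(x)\big)^{-1}\big(\sup_{x,f}|\hat g_{n,f}(x) - \bar g_{n,f}(x)| + \sup_{x}|\hat p_n(x) - \bar p_n(x)|\big)$, so it suffices to bound the two suprema on the right and to keep $\inf_x \hat p_n(x)$ away from $0$.

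For the denominator, a change of variables together with the continuity of $p_X$ on the compact set $\mathcal{X}'$ and Assumption \ref{asmp:kernel} shows $\bar p_n(x) \to p_X(x)$ uniformly on $\mathcal{X}'$; since $p_X$ is bounded away from $0$ there, $\inf_{x\in\mathcal{X}'}\bar p_n(x)\ge 2c$ for some $c>0$ and all large $n$. Applying the Einmahl--Mason bound below with $f\equiv 1$ gives $\sup_x|\hat p_n(x)-\bar p_n(x)|\to 0$ a.s., hence $\inf_{x\in\mathcal{X}'}\hat p_n(x)\ge c$ eventually almost surely, and the prefactor $(\inf_x\hat p_n(x))^{-1}$ is $O(1)$ a.s.

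The core step is to invoke Corollary 1 of \cite{einmahl2000EmpiricalProcess}. As already recorded, $\mathcal{F}_M$ is a measurable VC class with constant envelope $F\equiv 1$, so the uniform-entropy requirement of that corollary holds for the family indexed by $f\in\mathcal{F}_M$; Assumption \ref{asmp:kernel} supplies the kernel hypotheses, Assumption \ref{asmp:bandwidth} the bandwidth growth conditions ($h\to 0$, $nh^d/\log n\to\infty$, $|\log h|/\log\log n\to 0$), and the continuity of $p_{(X,Y^L,Y^U)}$ on $\mathcal{X}'\times\mathbb{R}^2$ the required regularity of the underlying density. The corollary then gives
\begin{equation*}
    \sup_{x\in\mathcal{X}'}\sup_{f\in\mathcal{F}_M}\big|\hat g_{n,f}(x) - \bar g_{n,f}(x)\big| = O\!\left(\sqrt{\tfrac{|\log h|}{nh^{d}}}\right)\quad\text{a.s.},
\end{equation*}
and the bound for $\sup_x|\hat p_n(x)-\bar p_n(x)|$ is the special case $f\equiv 1$. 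Substituting these two bounds together with $\inf_x\hat p_n(x)\ge c$ into the display of the first paragraph delivers the claimed rate.

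The one step that requires care is the ``minor modification'' of Einmahl--Mason: their statement concerns a kernel estimator built from a fixed function, whereas here uniformity is needed over the whole class $\mathcal{F}_M$. I would handle this by checking that the product class $\{(v,w)\mapsto f(v,w;\cdot)\,K((x-\cdot)/h): f\in\mathcal{F}_M,\ x\in\mathcal{X}'\}$ inherits pointwise measurability and a polynomial uniform-covering-number bound from the VC property of $\mathcal{F}_M$ and the fixed, bounded, compactly supported $K$, so that the truncation and exponential-inequality arguments in their proof go through verbatim with the extra supremum over $f$; this uniform-in-$f$ strengthening is of the same type as in \cite{gine2002RatesStrong} and in later uniform-in-bandwidth results, and the VC structure of $\mathcal{F}_M$ is precisely what makes the extension routine.
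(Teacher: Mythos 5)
Your proposal is correct and follows essentially the same route as the paper: the paper's own argument is simply to note that $\mathcal{F}_M$ is a measurable VC class with constant envelope (hence polynomial uniform covering numbers) and to invoke Corollary 1 of Einmahl and Mason (2000) under the kernel and bandwidth assumptions, which is exactly the core step you use. Your additional details — the ratio decomposition into numerator and denominator deviations, the lower bound on $\inf_x \hat p_n(x)$, and the remark on extending the Einmahl--Mason bound uniformly over $\mathcal{F}_M$ — are precisely the "minor modifications" the paper leaves implicit, so nothing is missing or different in substance.
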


    It remains to bound the ``bias'' terms. It is standard to show that, if \(p_{X}(x)\) and \(P(C; x)p_{X}(x)\) have uniformly continuous second derivatives with respect to \(x\), then 
    \begin{equation*}
        \sup_{x\in \mathcal{X}'} \abs{\Ep\bqty{\frac{1}{nh^{d}} \sum_{i} K\pqty{\frac{x - X_{i}}{h}}} - p_{X}(x)} = O(h^{2}),
    \end{equation*}
    as well as 
    \begin{equation*}
        \sup_{x\in \mathcal{X}'} \sup_{f\in \mathcal{F}_{M}}\abs{h^{-d}\Ep \bqty{f(Y^{L}, Y^{U}; (a_{m}, b_{m})_{m=1}^{M}) K \pqty{\frac{x - X}{h}}} - p_{X}(x) P_{f}(x)} = O(h^{2}).
    \end{equation*}
    Combining these results, we have that, with probability \(1\),
    \begin{equation}
        \sup_{x\in \mathcal{X}'} \sup_{f\in \mathcal{F}_{M}}\abs{P_{n,f}(x) - P_{f}(x)} = O\pqty{\sqrt{\frac{\abs{\log h}}{nh^{d}}} + h^{2}}.
    \end{equation}
    
    As shown in \cite{hansen2008UniformConvergence} Theorem 8, we can let \(\mathcal{X}' = \mathcal{X}_{n}\) be an expanding subsets at a suitable rate, and the uniform convergence rate needs to be multiplied by a factor of \(\delta_{n}^{-1}\) where \(\delta_{n} = \inf_{x\in \mathcal{X}_{n}} p_{X}(x)\).


\section{Differential conformal procedure}\label{appendix: multiple scores}

Suppose we have obtained a prediction interval $\hat{C}_{I} = [\hat{\tau}_{0}(x), \hat{\tau}_{1}(x)]$ with the training set $\mathcal{I}_{1}$. In the main text, the conformal prediction set is constructed in two steps. 
\begin{enumerate}
    \item Find the conformity score $s_{j}= s_{j}(\hat{C}_{I}) = \max \pqty{\hat{\tau}_{0}(X_{j}) - Y^{L}_{j}, Y^{U}_{j} - \hat{\tau}_{1}(X_{j})}$ for \(j \in \mathcal{I}_{2}\). The important thing is $s_{j} \leq 0 \iff [Y^{L}_{j}, Y^{U}_{j}] \subset \hat{C}_{I}(X_{j})$. 
    \item Compute the relevant $(1-\alpha)(1 + \frac{1}{\abs{\mathcal{I}_{2}}})$ quantile $\vartheta$ of the conformity scores $\cqty{s_{j}: j \in \mathcal{I}_{2}}$, and the conformal prediction set is $\tilde{C}_{I}$ given by
    \begin{equation*}
        \tilde{C}_{I}(x) = \bqty{\hat{\tau}_{0}(x) - \vartheta, \hat{\tau}_{1}(x) + \vartheta}, \quad x\in \mathcal{X}.
    \end{equation*}
\end{enumerate}

It can be seen that in the second step, the endpoints \(\hat{\tau}_{0}(x)\) and \(\hat{\tau}_{1}(x)\) are treated symmetrically. We could instead consider a differentiated approach on the two endpoints of $\hat{C}_{I}$. Define the following contour set $$\mathcal{W} = \cqty{(w_{0},w_{1}): P_{\mathcal{I}_{2}} (s_{j}(\hat{\tau}_{0}- w_{0}, \hat{\tau}_{1}+w_{1}))\geq (1-\alpha)(1+\abs{\mathcal{I}_2}^{-1})}.$$ And we can pick $(w_{0}^{*}, w_{1}^{*}) = \arg\min_{\mathcal{W}} \norm{(w_{0}^{*}, w_{1}^{*})}$. 

Given $\hat{C}_{S}$ estimated with the training set $\mathcal{I}_{1}$, we can construct the conformity score $s_{j}= s(Y_{j}^{L}, Y_{j}^{U};\hat{C}_{S})$, as in \ref{sec:conformal}. 
And we can solve the following for $w = (w_{01}, w_{11}, w_{02}, w_{12},\dots, w_{0M}, w_{1M})$, $\tilde{C}_{M} = \cup_{m} [\hat{\tau}_{0m} - w_{0m}, \hat{\tau}_{1m} + w_{1m}]$, by
\begin{equation*}
    \min \norm{w} \mathtext{s.t.} P_{\mathcal{I}_{2}}\pqty{s(Y_{j}^{L}, Y_{j}^{U}; \tilde{C}_{M}) \leq 0} \geq (1-\alpha)(1+\abs{\mathcal{I}_2}^{-1}).
\end{equation*}
   
We conjecture that this method may improve the convergence rate of the conformal prediction set estimator. However, as the literature on conformal inference focuses mainly on finite-sample properties, we leave this direction for future investigation.

\section{Metrics for sets}\label{appendix: set metrics}
Two common choices of measurement for the difference between two intervals are the Hausdorff distance and the volume of symmetric difference.
The \textit{Hausdorff distance} between two sets \(A, B \subset \mathbb{R}\) is defined as
\begin{equation*}
    d_{H}(A,B) = \max\cqty{\sup_{a \in A}\inf_{b \in B}\abs{a-b}, \sup_{b\in B}\inf_{a\in A}\abs{a-b}}.
\end{equation*}
The \textit{symmetric difference} \(\bigtriangleup\) between two sets \(A,B\) is defined as \(A\bigtriangleup B = (A\setminus B) \cup (B\setminus A)\), and the volume of the symmetric difference \(\mu(A\bigtriangleup B)\) has also been studied in the literature.

In general, these two metrics are not equivalent, but they are equivalent for two non-disjoint intervals on the real line. 
In particular, for a sequence of intervals \(A_{n} = [a_{n0}, a_{n1}]\) and \(B = [b_{0}, b_{1}]\), \(b_{0}< b_{1}\), it can be shown that \(d_{H}(A_{n}, B)\to 0\) if and only if \(\mu(A_{n} \bigtriangleup B) \to 0\) as \(n\to \infty\). 

Therefore, in terms of consistency for prediction intervals, we can use either metric and we will proceed with the volume of symmetric difference and show that \(\hat{C}_{I}(x)\) is consistent in the sense that \(\sup_{x\in \mathcal{X}}{\mu(\hat{C}_{I}(x) \bigtriangleup C_{I}(x))} = o_{p}(1)\).

However, when the prediction set is a union of intervals, these two metrics are not equivalent. In fact, consider \(A \subset \mathbb{R}\) and \(B = A \cup \cqty{b}\), for some \(b\in A^{c}\) such that \(d(b,A) > 0\), then the Hausdorff distance between \(A\) and \(B\) is \(d_{H}(A,B) = d(b,A)>0\), while the volume of the symmetric difference is \(0\). Thus, the volume of symmetric difference has the advantage of being less sensitive to a difference of negligible sets and seems more appropriate for practical prediction sets estimations. 

\section{Estimation and conformal inference based on quantile regression}\label{sec:quantile}

    As we have suggested, there is a simpler way to construct valid prediction sets with quantile regression, although the resulting prediction set is not generally optimal in terms of length.

    By Bonferroni's inequality, for two events $A,B$, \(P(AB) \geq \max\pqty{0, P(A) + P(B) - 1}\), and given $P(Y^{L}\geq q^{L}_{\frac{\alpha}{2}}(x)\mid x) \geq 1 - \alpha/2$ and $P(Y^{U}\leq q^{U}_{1 - \frac{\alpha}{2}}(x)\mid X = x) \geq 1- \alpha/2$ by the definition of quantiles, we have 
    \begin{equation*}
        P \pqty{Y^{L}\geq q^{L}_{\frac{\alpha}{2}}(x), Y^{U}\leq q^{U}_{1 - \frac{\alpha}{2}}(x)\mid X = x} \geq 1 - \alpha
    \end{equation*}
    and \(\hat{C}_{q}(x) = [\hat{q}^{L}_{\frac{\alpha}{2}}(x), \hat{q}^{U}_{1 - \frac{\alpha}{2}}(x)]\) as an estimator for \(C_{q}(x) = [{q^{L}_{\frac{\alpha}{2}(x)},q^{U}_{1 - \frac{\alpha}{2}}(x)}]\) is an asymptotically valid, albeit not necessarily optimal, prediction set.
    Here \(\hat{q}^{L}_{\frac{\alpha}{2}}(x), \hat{q}^{U}_{1 - \frac{\alpha}{2}}(x)\) are the consistent quantile regression estimators.
    There is a vast body of research on quantile regression, see \cite{koenker2017QuantileRegression}, \cite{athey2019GeneralizedRandom} for random forest based quantile regression and \cite{li2021NonparametricQuantile} for nonparametric quantile regression with mixed continuous and discrete data. 
    
    Once we have \(\hat{C}_{q}(x)\), it is straightforward to use the conformalisation procedure to obtain finite-sample valid prediction sets.
    Let \(s_{j}^{q} = s(Y_{j}^{L}, Y_{j}^{U}, X_{j}; \hat{q}_{\frac{\alpha}{2}}^{L}, \hat{q}_{1- \frac{\alpha}{2}}^{U})\) and \(\vartheta^{q}_{1-\alpha} = q(\{s_{j}^{q} : j\in \mathcal{I}_{2}\} ; (1 - \alpha)(1 + 1 / \abs{\mathcal{I}_{2}}))\). The conformalised quantile prediction interval is 
        \begin{equation*}
            \tilde{C}_{q}(x) = \bqty{\hat{q}^{L}_{\frac{\alpha}{2}}(x) - \vartheta^{q}_{1-\alpha}, \hat{q}^{U}_{1 - \frac{\alpha}{2}}(x) + \vartheta^{q}_{1-\alpha}}.
        \end{equation*}

\vspace{2em}

\end{document}